\renewcommand{\b}[1]{\boldsymbol{\mathrm{#1}}} %bold
\newcommand{\bb}{\mathbb} %blackboard bold
\renewcommand{\cal}{\mathcal} 
\newcommand{\scr}{\mathscr} 
\newcommand{\fra}{\mathfrak} 
\newcommand{\ol}[1]{\overline{#1} \!\,} %overline
\newcommand{\wt}{\widetilde}
\newcommand{\ee}{\mathrm{e}} %\newcommand{\me}{\mathrm{e}}
\newcommand{\ii}{\mathrm{i}} %\newcommand{\mi}{\mathrm{i}}
\newcommand{\dd}{\mathrm{d}}
\newcommand{\deq}{\mathrel{\mathop:}=}
\newcommand{\eqd}{=\mathrel{\mathop:}}
\newcommand{\umat}{\mathbbmss{1}} %unit matrix
\renewcommand{\epsilon}{\varepsilon}
\renewcommand{\leq}{\leqslant}
\renewcommand{\geq}{\geqslant}
\newcommand{\ind}[1]{\b 1 (#1)}%{\umat_{\{#1\}}}
\newcommand{\indb}[1]{\b 1 \pb{#1}}
\renewcommand{\P}{\mathbb{P}}
\newcommand{\E}{\mathbb{E}}
\newcommand{\R}{\mathbb{R}}
\newcommand{\N}{\mathbb{N}}
\newcommand{\Z}{\mathbb{Z}}
\newcommand{\p}[1]{({#1})}
\newcommand{\pb}[1]{\bigl({#1}\bigr)}
\newcommand{\pB}[1]{\Bigl({#1}\Bigr)}
\newcommand{\pbb}[1]{\biggl({#1}\biggr)}
\newcommand{\pBB}[1]{\Biggl({#1}\Biggr)}
\newcommand{\qb}[1]{\bigl[{#1}\bigr]}
\newcommand{\qbb}[1]{\biggl[{#1}\biggr]}
\newcommand{\qBB}[1]{\Biggl[{#1}\Biggr]}
\newcommand{\hb}[1]{\bigl\{{#1}\bigr\}}
\newcommand{\hbb}[1]{\biggl\{{#1}\biggr\}}
\newcommand{\abs}[1]{\lvert #1 \rvert}
\newcommand{\absb}[1]{\big\lvert #1 \big\rvert}
\newcommand{\absbb}[1]{\bigg\lvert #1 \bigg\rvert}
\newcommand{\norm}[1]{\lVert #1 \rVert}
\newcommand{\normb}[1]{\big\lVert #1 \big\rVert}
\newcommand{\normbb}[1]{\bigg\lVert #1 \bigg\rVert}
\newcommand{\scalar}[2]{\langle{#1} \mspace{2mu}, {#2}\rangle}
\newcommand{\scalarb}[2]{\big\langle{#1} \mspace{2mu}, {#2}\big\rangle}
\newcommand{\bra}[1]{\langle #1 |}
\newcommand{\ket}[1]{| #1 \rangle}
\DeclareMathOperator{\tr}{Tr}
\DeclareMathOperator{\sgn}{sgn}
\theoremstyle{plain} %plain, definition, remark
\newtheorem{theorem}{Theorem}[section]
\newtheorem*{theorem*}{Theorem}
\newtheorem{lemma}[theorem]{Lemma}
\newtheorem*{lemma*}{Lemma}
\newtheorem{corollary}[theorem]{Corollary}
\newtheorem*{corollary*}{Corollary}
\newtheorem{proposition}[theorem]{Proposition}
\newtheorem*{proposition*}{Proposition}
\newtheorem*{definition*}{Definition}
\newtheorem*{example*}{Example}
\newtheorem{remark}[theorem]{Remark}
\newtheorem*{remark*}{Remark}
\newtheorem*{remarks*}{Remarks}
\numberwithin{equation}{section}
\numberwithin{figure}{section}
\begin{document}
\author{
L\'aszl\'o Erd\H os${}^1$\thanks{Partially supported
by SFB-TR 12 Grant of the German Research Council.}, Antti Knowles${}^2$\thanks{Partially supported by U.S.\ National 
Science Foundation Grant DMS 08--04279.} \\ \\
Institute of Mathematics, University of Munich, \\
Theresienstr.\ 39, D-80333 Munich, Germany \\ lerdos@math.lmu.de ${}^1$ \\ \\
Department of Mathematics, Harvard University\\
Cambridge MA 02138, USA \\  knowles@math.harvard.edu ${}^2$ \\ \\
\\}
\title{Quantum Diffusion and Eigenfunction
Delocalization in a Random Band Matrix Model}
\date{20 September 2010}
\maketitle

\begin{abstract}
We consider Hermitian and symmetric random band matrices $H$ in $d \geq 1$ dimensions.
The matrix elements $H_{xy}$, indexed by $x,y \in \Lambda \subset \Z^d$, are
independent, uniformly distributed random variables if $\abs{x-y}$ is less than the band width $W$,
and zero otherwise.
We prove that the time evolution of a quantum particle
subject to the Hamiltonian $H$ is diffusive on time scales $t\ll W^{d/3}$. We also show that the localization length of
the eigenvectors of $H$ is larger than a factor $W^{d/6}$ times the band width.
All results are uniform in the size $\abs{\Lambda}$ of the matrix.
\end{abstract}

\vspace{1cm}
{\bf AMS Subject Classification:} 15B52, 82B44, 82C44

\medskip

{\it Keywords:}  Random band  matrix, Anderson model, localization length

%\newpage

\section{Introduction}

The general formulation of the universality conjecture for disordered systems
states that there are two distinctive  regimes depending on the energy and
the disorder strength. In the strong disorder regime, the eigenfunctions
are localized and the local spectral statistics are Poisson.
In the weak disorder regime, the eigenfunctions are delocalized
and the local statistics coincide with those of a Gaussian 
matrix ensemble.

Random band matrices are natural intermediate models
to study eigenvalue statistics and quantum propagation in disordered systems
as they interpolate between  Wigner matrices
and random Schr\"odinger operators. Wigner matrix ensembles
represent mean-field models without spatial structure, where the quantum transition
rates  between any
two sites are i.i.d.\ random variables with zero expectation. In the celebrated Anderson model
\cite{And}, only a random on-site potential $V$ is present in addition to  a short range 
deterministic hopping (Laplacian) on a graph that is typically a regular box in $\Z^d$.

For the Anderson model, a fundamental open question is to establish the
metal-insulator transition, i.e.\ to
show that in $d\geq 3$ dimensions the eigenfunctions of  $-\Delta+\lambda V$
are delocalized for small disorder $\lambda$.
 The localization regime at large disorder or near the spectral
edges has been well  understood by Fr\"ohlich and Spencer
with the multiscale technique \cite{FS, FMSS}, and later by Aizenman and Molchanov
by the  fractional moment method \cite{AiMo}; many other
works have since contributed to this field. In particular, it
has been established that the local eigenvalue statistics are Poisson \cite{Mi} and
that the eigenfunctions are exponentially localized
with an upper bound on the localization length that diverges
 as the energy parameter approaches the presumed phase transition 
point \cite{Spen, Elg}.

The progress in the delocalization regime has been much slower. For the Bethe lattice, 
corresponding to the infinite-dimensional case, delocalization has been established in 
\cite{Kl, ASW, FHS}.
In finite dimensions only partial results are available. The existence of an absolutely 
continuous spectrum (i.e.\ extended states) has been shown for a rapidly decaying potential, 
corresponding to a scattering regime \cite{RS, B, D}.  Diffusion
has been established  for a heavy quantum particle immersed in a
phonon field in $d\geq 4$ dimensions \cite{FdeR}.
For the original Anderson Hamiltonian  with a small coupling constant $\lambda$,
the eigenfunctions have a
localization length of at least $\lambda^{-2}$ (see \cite{Ch}). The time and space scale 
$\lambda^{-2}$ corresponds to
the kinetic regime where the quantum evolution can be
modelled by a linear Boltzmann equation \cite{Sp1, EY}. Beyond this
time scale the dynamics is diffusive. This has been established
in the scaling limit $\lambda\to0$
up to time scales $t\sim \lambda^{-2-\kappa}$ with an explicit $\kappa>0$
in \cite{ESY1, ESY2, ESY3}.
There are no rigorous results on the local spectral statistics of the Anderson model, but
it is conjectured -- and supported by numerous arguments in
the physics literature, especially by supersymmetric methods (see \cite{Efe}) -- that the local 
correlation function
of the eigenvalues of the finite volume Anderson
model follows the GOE statistics in the thermodynamic limit.

Due to their mean-field character, Wigner matrices are simpler to study
than the Anderson model  and  they are always in the delocalization regime.
The complete delocalization of the eigenvectors was proved in \cite{ESY4}. The local
spectral statistics in the bulk are universal, i.e.\ they follow the statistics
of the corresponding Gaussian ensemble (GOE, GUE, GSE), depending
on the symmetry type of the matrix (see \cite{M} for 
explicit formulas).  For an arbitrary single entry
distribution, bulk universality has been proved
recently in \cite{EPRSY, ESY5, ESYY} for all symmetry classes.
A different proof was given  in \cite{TV} for the Hermitian case.

Random band matrices $H=\{H_{xy}\}_{x,y\in \Gamma}$ represent  systems on a large finite graph 
$\Gamma$ with a metric. The matrix elements between two sites, $x$ and $y$, are independent 
random variables with a variance $\sigma_{xy}^2 \deq \E |H_{xy}|^2$ depending on the distance 
between the two sites. The variance typically decays with the distance on a characteristic 
length scale $W$, called the \emph{band width} of $H$. This terminology comes from the simplest 
one-dimensional
model where the graph is a path on $N$ vertices, labelled
by $\Gamma=\{1,2,\ldots, N\}$, and the matrix elements $H_{xy}$
vanish if $|x-y|\geq W$. If $W=N$ and all variances
are equal, we recover the usual Wigner matrix. The case $W=O(1)$
is a one-dimensional Anderson-type model with random hoppings at bounded range.  
Higher-dimensional
models are obtained if the graph $\Gamma$
is a box in $\Z^d$. For more general
random band matrices and for a systematic presentation,
see \cite{Spe}.

Since the one-dimensional Anderson-type models are always in the localization regime,
varying the band width $W$ offers a possibility to
test the localization-delocalization transition between an
Anderson-type model and the Wigner ensemble.
Numerical simulations and theoretical arguments based on supersymmetric methods \cite{Fy} suggest that the local 
eigenvalue statistics change from Poisson, for $W\ll N^{1/2}$, to
GOE (or GUE), for $W\gg N^{1/2}$. The eigenvectors
are expected to have a localization length $\ell$ of order $W^2$.
In particular the eigenvectors are fully delocalized for $W\gg N^{1/2}$.
In two dimensions the localization length is expected to be
exponentially large in $W$; see \cite{Abr}. In accordance with
the extended states conjecture for the Anderson model,
the localization length is expected to be macroscopic, $\ell \sim N$,
 independently of the band width in $d \geq 3$ dimensions.

Extending the techniques of the rigorous proofs for Anderson localization, Schenker has recently
proved   the upper bound $\ell\leq W^8$ for the localization
length in $d=1$ dimensions \cite{Sch}.   In this paper we prove a counterpart
of this result from the side of delocalization. More precisely,
we show a lower bound $\ell\geq W^{1+d/6}$ for the
eigenvectors of $d$-dimensional band matrices
with uniformly distributed entries.  We remark that the lower bound $\ell\geq W$ was proved 
recently in \cite{EYY}
for very general band matrices.

On the spectral side, we mention that,
 apart from the semicircle law (see \cite{AZ , gui, EYY} for $d=1$ and \cite{DPS} for $d=3$),
the question of bulk universality of {\it local}
 spectral statistics for band matrices is mathematically open
even for $d=1$. In the spirit of the general conjecture,  one expects GUE/GOE statistics in the bulk for the 
delocalization
regime, $W\gg N^{1/2}$. The GUE/GOE statistics have recently been established \cite{EYY} for a class of generalized
Wigner matrices, where the variances of different matrix elements are not necessarily identical, but are of comparable 
size, i.e.\ $\E |H_{xy}|^2 \sim \E |H_{x'y'}|^2$;
in particular, the band width is still macroscopic ($W\sim N$).

Supersymmetric methods offer a very attractive approach to study 
the delocalization transition in band matrices 
but the rigorous control of the functional integrals away from the
saddle points is difficult and it
has been performed only for the density of states \cite{DPS}.
Effective models that emerge near the saddle points can be
more accessible to rigorous mathematics. 
Recently Disertori, Spencer and Zirnbauer studied a related statistical
mechanics model that
is expected to reflect the Anderson localization and delocalization 
transition for real symmetric band matrices.
They proved a quasi-diffusive
estimate for the two-point correlation functions 
in a  three dimensional supersymmetric hyperbolic nonlinear sigma model
at low temperatures \cite{DSZ}. Localization was also established in the 
same model at high temperatures \cite{DS}.

We also mention that band matrices are not the only possible interpolating models
to mimic the metal-insulator transition. Other examples include
the Anderson model with a spatially 
decaying potential \cite{B, KKO}
and a quasi one-dimensional model with a weak on-site potential
for which a transition in the sense of local spectral statistics has been established in
\cite{BdeR, VV}.

A natural approach to study the delocalization regime is to show
that the quantum time evolution is diffusive on large scales.
We normalize the matrix entries so that the rate of quantum jumps
is of order one. The typical distance of a single jump is
the band width $W$. If the jumps were independent, the typical distance travelled in time $t$ 
would be $W\sqrt{t}$.
Using the argument of \cite{Ch}, we show
that a typical localization length $\ell$ is incompatible with
a diffusion on spatial scales larger than $\ell$. Thus we obtain
$\ell \geq W\sqrt{t}$, provided that the diffusion
approximation can be justified up to time $t$.

The main result of this paper is that
the quantum dynamics of the $d$-dimensional band matrix is given by
 a superposition of heat kernels up to time scales $t \ll W^{d/3}$.
Although diffusion is expected to hold up to time $t\sim W^2$
for $d=1$ and up to any time for $d\geq 3$ (assuming the
thermodynamic limit has been taken), our method can follow
the quantum dynamics only up to $t\ll W^{d/3}$.  The threshold exponent $d/3$ originates in technical estimates on 
certain Feynman graphs; going beyond the exponent $d/3$ would
require a further resummation of certain four-legged subdiagrams (see Section~\ref{sec:further}).

Finally, we remark that our method also yields a bound on the largest eigenvalue of a band matrix; see Theorem 3.4 in 
the forthcoming paper \cite{erdosknowles} for details.

% and has presumably no physical significance.

\subsubsection*{Acknowledgements}
The problem of diffusion for random band matrices originated
from several discussions with H.T.\ Yau and J.\ Yin. The authors are especially
 grateful to J.\ Yin for various insights 
and for pointing out an improvement in the counting of the skeleton diagrams.

\section{The Setup}
Let the dimension $d \geq 1$ be fixed and consider the $d$-dimensional lattice $\Z^d$
equipped with  the Euclidean norm $\abs{\cdot}_{\Z^d}$ 
(any other norm would also do). We index points of $\Z^d$ with $x,y,z,\dots$. Let $W > 1$ 
denote a large parameter (the \emph{band width}) and define
\begin{equation*}
M \;\equiv\; M(W) \;\deq\; \absb{\{x \in \Z^d \,:\, 1 \leq \abs{x}_{\Z^d} \leq W \}}\,,
\end{equation*}
the number of points at distance at most $W$ from the origin.
In the following we tacitly make use of the obvious relation
$M \sim\; C W^d$. For notational convenience, we use both $W$ and $M$ in the following.

In order to avoid dealing with the infinite lattice directly,
we restrict the problem to a finite periodic lattice $\Lambda_N$ of linear size $N$.
%All our estimates hold uniformly in the thermodynamic limit as $N\to \infty$. 
More precisely,
for $N \in \N$, we set
\begin{equation*}
\Lambda_N \;\deq\; \{-[N/2], \dots, N - 1 - [N/2]\}^d \;\subset\; \Z^d\,,
\end{equation*}
a cube with side length $N$ centred around the origin. Here $[\cdot]$ denotes integer part.
We regard $\Lambda_N$ as periodic, i.e.\ we equip it with periodic addition and the periodic 
distance
\begin{equation*}
\abs{x} \;\deq\; \inf \{\abs{x + N \nu}_{\Z^d} \,:\, \nu \in \Z^d\}\,.
\end{equation*}
Unless otherwise stated, all summations $\sum_x$ are understood to mean $\sum_{x \in 
\Lambda_N}$.

We consider random matrices $H^\omega \equiv H$ whose entries $H_{xy}$ are indexed by $x,y \in 
\Lambda_N$. Here $\omega$ denotes the running element in probability space. The large parameter 
of the model is the band width $W$. We shall always assume that
$N \geq W M^{1/6}$. Under this condition all our results hold uniformly in $N$.

We assume that $H$ is either Hermitian or symmetric.
The entries $H_{xy}$ satisfying $1 \leq \abs{x - y} \leq W$ are i.i.d.\ (with the obvious 
restriction that $H_{yx} = \ol{H_{xy}}$). In the Hermitian case they are uniformly distributed 
on a circle of appropriate radius in the complex plane,
\begin{subequations} \label{hhormalization}
\begin{equation}\label{hnormalization_1}
H_{xy} \;\sim\; \frac{1}{\sqrt{M - 1}} \, \mathrm{Unif}(\bb S^1)\,, \qquad 1 \leq \abs{x - y} \leq W\,.
\end{equation}
In the symmetric case they are Bernoulli random variables,
\begin{equation} \label{hnormalization_2}
\P\pbb{H_{xy} = \frac{1}{\sqrt{M - 1}}} \;=\; \P\pbb{H_{xy} = \frac{-1}{\sqrt{M - 1}}} \;=\; \frac{1}{2}\,, \qquad 1 
\leq \abs{x - y} \leq W\,.
\end{equation}
\end{subequations}
If $\abs{x - y} \notin [1, W]$ then $H_{xy} = 0$. An important 
consequence of our assumptions \eqref{hnormalization_1} 
and \eqref{hnormalization_2} is
\begin{equation} \label{deterministic norm}
\abs{H_{xy}}^2 \;=\; \frac{1}{M-1} \, \ind{1 \leq \abs{x - y} \leq W}\,.
\end{equation}

We remark that the assumption that the matrix entries have the special
 form \eqref{hnormalization_1} or 
\eqref{hnormalization_2} is not necessary for our results to hold. 
We make it here because it greatly simplifies our 
proof. The reason for this is that, as observed by Feldheim and Sodin \cite{FSo,So1},
 the condition \eqref{deterministic 
norm} allows one to obtain a simple algebraic expression for the nonbacktracking 
powers of $H$; see Lemma \ref{lemma 
recursion for nonbacktracking random walks}.

In the forthcoming paper \cite{erdosknowles} we extend our results
to random matrix ensembles in which the matrix elements $H_{xy}$ are 
allowed to have a general distribution (and thus in 
particular a genuinely random absolute value); moreover their 
variances $\E \abs{H_{xy}}^2$ are given by a general 
profile
on the scale $W$ in $x-y$ (as opposed to the step function profile in 
\eqref{deterministic norm}). Under these 
assumptions, the algebraic identity of 
Lemma~\ref{lemma recursion for nonbacktracking random walks} is no longer exact, 
and needs to be amended with additional random terms. 
The resulting graphical expansion is considerably more involved 
than in the case \eqref{deterministic norm}, and its control 
requires essential new ideas. However, the fundamental 
mechanism underlying quantum diffusion for band matrices is 
already apparent in the special case \eqref{deterministic 
norm} discussed in this paper.

Let $\alpha \in \fra A \deq \{1, \dots, \abs{\Lambda_N}\}$ index the orthonormal basis 
$\{\psi_\alpha^\omega\}_{\alpha \in \fra A}$ of eigenvectors of the matrix $H^\omega$, i.e.
$H^\omega \psi_\alpha^\omega = \lambda^\omega_\alpha \psi_\alpha^\omega$,
where $\lambda_\alpha^\omega \in \R$.
The normalization of the matrix elements is chosen
 in such a way that the typical
eigenvalue of the matrix is of order one:
\begin{equation*}
 \frac{1}{\abs{\fra A}}\sum_\alpha \E \lambda_\alpha^2 \;=\;
  \frac{1}{\abs{\fra A}} \E \tr H^2 \;=\; \frac{1}{\abs{\fra A}} \E \sum_{x,y} |H_{xy}|^2 \;=\; 
\frac{M}{M - 1}\,.
\end{equation*}

\section{Scaling and results}
The central quantity of our analysis is
\begin{equation*}
\varrho(t,x) \;\deq\; \E \, \absb{\scalarb{\delta_x}{\ee^{-\ii t H / 2} \delta_0}}^2\,,
\end{equation*}
where $\delta_x\in \ell^2(\Lambda_N)$ denotes the standard basis vector, defined by $(\delta_x)_y = \delta_{xy}$.  The 
factor $1/2$ is a convenient normalization
 since, by a standard result of random 
matrix theory, the spectrum of $H/2$ is asymptotically equal to the unit interval $[-1,1]$.
The function $\varrho(t,x)$ describes the ensemble average of the
 quantum transition probability of a
particle starting from position $0$ ending up at position $x$ after time $t$. Note that $\sum_x 
\varrho(t,x) =1$ for any $t\in \R$.
Heuristically, the particle performs a series of random jumps of size $W$. The typical number 
of jumps in time $t = O(1)$ is of order one.
 Indeed, by first order perturbation theory,
the small-times probability distribution for $ 1 \leq \abs{x} \leq W$ is given by
\begin{equation*}
\varrho(t,x) \;\sim\; \E \, \absb{\scalarb{\delta_x}{(\umat - \ii t H/2) \delta_0}}^2 \;=\; 
\frac{t^2}{4} \, \E \, \abs{H_{x0}}^2 \;=\; \frac{t^2}{4} \, \frac{1}{M - 1}\,,
\end{equation*}
up to higher order terms in $t$. Thus $\sum_{x\ne0} \varrho(t,x)$ is an $O(1)$ quantity, separated away 
from zero, indicating that the distance from the origin
is of $O(W)$ for times $t\sim O(1)$.

In time $t$ the particle performs $O(t)$ jumps of size $O(W)$. We
 expect that the jumps are approximately independent 
and the trajectory is
a random walk consisting of $O(t)$ steps with size $O(W)$ each. Thus,
the typical distance from the origin is of order $t^{1/2} W$. We
 rescale time and space $(t,x) \mapsto (T,X)$ so as to make the macroscopic quantities $T$ and $X$ of order one, i.e.\ 
we set
\begin{equation*}
t \;=\; \eta T \,, \qquad x \;=\; \eta^{1/2} W X\,,
\end{equation*}
where $W$ and $\eta$ are two large parameters.
Ideally, one would like to study the long time limit $\eta\to\infty$ for a fixed $W$. In this case, however, we know 
that the dynamics cannot be diffusive for $d = 1$. Indeed, as explained in the introduction, it is expected that the 
motion cannot be diffusive for distances larger than $W^2$; this has in fact been proved \cite{Sch} for distances larger 
than $W^8$.  Thus we have to consider a scaling limit where $\eta$ and $W$
are related and they tend simultaneously to infinity. To that end we choose an exponent $\kappa>0$
and set $\eta \equiv \eta(W) \deq W^{d \kappa}$.

Our first main result establishes that $\varrho(t,x)$ behaves diffusively up to time scales $t = 
O(W^{d \kappa})$ if $\kappa<1/3$. 

\begin{theorem}[Quantum diffusion] \label{theorem: first main result}
Let $0 < \kappa < 1/3$ be fixed. Then for any $T_0 > 0$ and any continuous bounded function 
$\varphi \in C_b(\R^d)$ we have
\begin{equation} \label{main result}
\lim_{W \to \infty} \sum_{x\in \Lambda_N}
 \varrho\pb{W^{d \kappa} T, x} \, \varphi \pbb{\frac{x}{W^{1 + d \kappa/2}}} \;=\; \int_{\R^d} \dd 
X \; L(T, X) \, \varphi(X)\,,
\end{equation}
uniformly in $N \geq W^{1 + d/6}$ and $0 \leq T \leq T_0$. Here
\begin{equation*}
L(T,X) \;\deq\; \int_0^1 \dd \lambda \; \frac{4}{\pi} \frac{\lambda^2}{\sqrt{1 - \lambda^2}}
\, G(\lambda T, X)\,,
\end{equation*}
and $G$ is the heat kernel
\begin{equation} \label{definition of heat kernel}
G(T, X) \;\deq\; \pbb{\frac{d + 2}{2 \pi T}}^{d/2} \, \ee^{- \frac{d+2}{2 T} \, \abs{X}^2}\,,
\end{equation}
\end{theorem}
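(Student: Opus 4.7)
The plan is to expand the two propagators in the Chebyshev basis of the second kind, writing $\ee^{-\ii t H/2} = \sum_{n \geq 0} \alpha_n(t) \, U_n(H/2)$ with explicit $\alpha_n(t)$ of Bessel-function type. The reason this basis is natural is the Feldheim--Sodin-type identity of Lemma~\ref{lemma recursion for nonbacktracking random walks}: under the deterministic norm condition~\eqref{deterministic norm}, the matrix entries $\qb{U_n(H/2)}_{0x}$ are, up to an explicit scalar, sums over length-$n$ \emph{nonbacktracking} walks from $0$ to $x$ on the band graph. Thus the edge labels are all distinct, and the Wick expansion of $\E \qb{[U_n(H/2)]_{0x} \, \overline{[U_m(H/2)]_{0x}}}$ is substantially cleaner than in a naive power-series expansion, because backtrack pairings are already excluded inside each individual walk.

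With this expansion, $\varrho(t,x) = \sum_{n,m} \overline{\alpha_n(t)} \alpha_m(t) \, \E \qb{[U_n(H/2)]_{0x} \overline{[U_m(H/2)]_{0x}}}$, and the next step is to split the Wick pairings into \emph{ladder} (or backbone) pairings -- where each edge of the left walk is paired with a unique edge of the right walk, forcing $n = m$ and the two walks to coincide as unordered paths -- and all other pairings, which involve crossings, nestings, or nontrivial four-legged substructures. By~\eqref{deterministic norm}, the ladder contribution collapses cleanly to $(M-1)^{-n}$ times the number of length-$n$ nonbacktracking walks from $0$ to $x$, weighted by $\abs{\alpha_n(t)}^2$, while every other pairing must be controlled as an error.

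To identify the diffusive limit from the ladder term, I would rescale $t = W^{d \kappa} T$ and $x = W^{1 + d \kappa/2} X$ and apply a local central limit theorem to nonbacktracking random walks whose steps are (approximately) uniform on $\{y \in \Z^d : 1 \leq \abs{y} \leq W\}$. After rescaling, the endpoint density of a length-$n$ walk with $n \sim \lambda t$ converges to the heat kernel $G$ of~\eqref{definition of heat kernel}, the constant $(d+2)/(2\lambda T)$ being set by the second moment of the uniform distribution on a ball. Summing over $n$ against $\abs{\alpha_n(t)}^2$ and using the Bessel asymptotics with $n/t \to \lambda$ yields the semicircle-weighted average $L(T,X) = \int_0^1 \tfrac{4}{\pi} \tfrac{\lambda^2}{\sqrt{1-\lambda^2}} \, G(\lambda T, X) \, \dd \lambda$. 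The weight $\tfrac{4 \lambda^2}{\pi \sqrt{1-\lambda^2}}$ then arises as the push-forward of the Wigner semicircle density $\tfrac{2}{\pi}\sqrt{1-E^2}$ under $E \mapsto \sqrt{1-E^2}$, which is the effective group velocity of a spectral component at energy $E \in [-1,1]$ of $H/2$.

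The hard part, and the source of the restriction $\kappa < 1/3$, will be showing that all non-ladder Wick pairings are negligible. I would organize these into skeleton Feynman diagrams and bound each skeleton's value by integrating out its internal vertices and momenta. Relative to the ladder, each crossing or nesting introduces a gain of order $1/W^{d}$, but this gain must be played off against both the combinatorial entropy of the skeletons and the time cost, which grows polynomially in the walk length $t = W^{d\kappa}$. A careful skeleton count (using the improvement in the acknowledged contribution of J. Yin) and a uniform bound on each skeleton value produces a geometric series whose effective ratio is of order $t/W^{d/3}$, summable precisely for $\kappa < 1/3$. The four-legged subdiagrams that saturate this estimate are exactly the ones whose additional resummation would be required to push past the $d/3$ threshold, as signalled in Section~\ref{sec:further}.
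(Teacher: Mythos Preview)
Your proposal is correct and follows essentially the same route as the paper: Chebyshev/nonbacktracking expansion (Section~\ref{section: path expansion}), ladder versus non-ladder decomposition (Sections~\ref{section: graphical representation}--\ref{section: ladder}), central limit theorem plus Bessel asymptotics for the ladder term (Lemma~\ref{lemma: central limit argument} and Proposition~\ref{proposition: asymptotics of alpha_n}), and skeleton-diagram estimates producing the geometric ratio $t/M^{1/3}$ for the error (Section~\ref{section: error estimates}, where the key combinatorial input is the ``$2/3$ rule'' of Lemma~\ref{lemma: bound on number of orbits}). One technical point you gloss over: $U_n(H/2)$ is not literally a scalar times the $n$-th nonbacktracking power $H^{(n)}$ but rather $\sum_{k \geq 0} (M-1)^{-k} H^{(n-2k)}$, so the paper inverts this and expands $\ee^{-\ii t H/2} = \sum_m a_m(t) H^{(m)}$ with modified coefficients $a_m(t) = \sum_k (M-1)^{-k} \alpha_{m+2k}(t)$, later showing $a_m \approx \alpha_m$ up to $O(M^{-1})$ (Lemma~\ref{lemma: bounds on coefficients a_n}); also, since the entries are not Gaussian the paper works with general \emph{lumpings} rather than Wick pairings and reduces to pairings via Lemma~\ref{lemma: sum over non-pairings}.
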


\begin{remark}
The factor $d + 2$ arises from a random walk in $d$ dimensions with
 steps in the unit ball.  If $B$ is a random variable 
uniformly distributed in the $d$-dimensional unit ball, the covariance
 matrix of $B$ is $(d+2)^{-1} \umat$.
\end{remark}

This result can be interpreted as follows.
The limiting dynamics at macroscopic time $T$ is not given by a single
 heat kernel, but by a weighted superposition of heat kernels at times $\lambda T$, for $0 \leq \lambda \leq 1$. The 
factor $\lambda$ expresses a delay arising from backtracking paths, in which the quantum particle ``wastes time'' by
 retracing its steps. If the particle is not backtracking, it is moving according to diffusive dynamics. The 
backtracking paths correspond to two-legged subdiagrams, and have the interpretation of a self-energy renormalization in 
the language of diagrammatic perturbation theory.  Thus, out of the total macroscopic time $T$ during which the particle 
moves, a fraction $\lambda$ of $T$
 is spent moving diffusively, and a fraction $(1 - \lambda)$ of $T$ backtracking. Theorem~\ref{theorem: first main 
result} gives an explicit expression for the probability density $f(\lambda) =
 \frac{4}{\pi}\frac{\lambda^2}{\sqrt{1 - \lambda^2}} \ind{0 \leq \lambda \leq 1}$ for the particle to move during a 
fraction $\lambda$ of $T$.

Our proof  precisely exhibits this phenomenon. As explained in Section~\ref{ideas of proof}, the proof is based on an 
expansion of the quantum time evolution in terms of \emph{nonbacktracking paths}.
 At time $t = W^{d \kappa} T$, this 
expansion yields a weighted superposition of paths of lengths
 $n = 1, \dots, [t]$ (higher values of $n$ are strongly 
suppressed). Here $n$ is the number of nonbacktracking steps, i.e.\ the
 number of steps that contribute to the effective motion of the particle. The difference $[t] - n$ is the number of 
steps that the particle
 spends backtracking.  Our expansion (or, 
more precisely, its leading order \emph{ladder terms}) shows that the weight 
of a path of $n$ nonbacktracking steps is 
given by
$\abs{\alpha_n(t)}^2$, where $\alpha_n(t)$ is the Chebyshev transform of the
 propagator $\ee^{- \ii t \xi}$ in $\xi$; 
see \eqref{formula for alpha}.
The probability density $f$ arises from this microscopic picture by
setting $n = [\lambda t]$. Then we have, as proved in Proposition~\ref{proposition: asymptotics of alpha_n} below,
$t \abs{\alpha_{[\lambda t]}(t)}^2 \to f(\lambda)$
weakly as $t \to \infty$.

Our second main result shows that the eigenvectors of $H$ have a typical localization length 
larger than $W^{1 + d \kappa / 2}$, for any $\kappa < 1/3$. For $x \in \Lambda_N$ and $\ell > 
0$ we define the characteristic function  $P_{x,\ell}$ projecting onto the complement of an 
$\ell$-neighbourhood
of $x$,
\begin{equation*}
P_{x, \ell}(y) \;\deq\; \ind{\abs{y - x} \geq \ell}\,.
\end{equation*}
Let $\epsilon > 0$ and define the random subset $\fra A_{\epsilon, \ell}^\omega \subset \fra A$ 
of eigenvectors through
\begin{equation*}
\fra A^\omega_{\epsilon, \ell} \;\deq\; \hbb{\alpha \in \fra A \,:\, \sum_x 
\abs{\psi_\alpha^\omega(x)} \, \norm{P_{x, \ell} \, \psi^\omega_\alpha} < \epsilon}\,.
\end{equation*}
The set $\fra A^\omega_{\epsilon, \ell}$ contains, in particular, all eigenvectors that are exponentially localized in 
balls of radius $O(\ell)$; see Corollary \ref{cor: deloc} below for a more general and precise statement.

%For each $\epsilon > 0$ the set $\fra A^\omega_{\epsilon, \ell}$ contains, in particular, all eigenvectors that are 
%exponentially localized in balls of radius $\ell / \mu_\epsilon$, where $\mu_\epsilon$ is some large positive constant 
%depending on $\epsilon$.  This follows from the observation that, if $\abs{\psi(x)} \leq C \ee^{- \abs{x - u} / \gamma}$ 
%for some $u \in \Lambda_N$ and $\gamma > 0$, we have
%\begin{equation*}
%\sum_x \abs{\psi(x)} \, \norm{P_{x, \ell} \, \psi} \;\leq\; C \ee^{- \ell / \gamma}\,.
%\end{equation*}
%If $\gamma < \ell / \mu_\epsilon$, then this is bounded by $C \ee^{-\mu_\epsilon} < \epsilon$, provided $\mu_\epsilon$ 
%is large enough.

\begin{theorem}[Delocalization] \label{theorem: second main result}
Let $\epsilon > 0$ and $0 < \kappa < 1/3$. Then
\begin{equation*}
\limsup_{W \to \infty} \, \E \, \frac{\absb{\fra A_{\epsilon, W^{1 + d \kappa /2}}}}{\abs{\fra A}} \;\leq\; 2 
\sqrt{\epsilon}\,,
\end{equation*}
uniformly in $N \geq W^{1 + d / 6}$.
\end{theorem}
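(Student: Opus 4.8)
\textbf{Proof proposal for Theorem~\ref{theorem: second main result}.}

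The plan is to deduce delocalization from the quantum diffusion statement of Theorem~\ref{theorem: first main result} by the argument of \cite{Ch}: if an eigenvector $\psi_\alpha$ were concentrated in a ball of radius $\ell = W^{1 + d\kappa/2}$, then the quantum evolution of $\delta_0$ restricted to the spectral subspace spanned by such eigenvectors could never leave an $O(\ell)$-neighborhood of $0$, contradicting the fact that $\varrho(W^{d\kappa} T, \cdot)$ spreads out on exactly the scale $\ell$. To make this quantitative, first I would relate the time-evolved vector to the eigenvector expansion: writing $u_t \deq \ee^{-\ii t H/2}\delta_0 = \sum_\alpha \ee^{-\ii t \lambda_\alpha /2}\scalar{\psi_\alpha}{\delta_0}\psi_\alpha$, and inserting a cutoff $P_{0,\ell}$, I would estimate $\norm{P_{0,\ell} u_t}$ from above in terms of $\sum_{\alpha \in \fra A_{\epsilon,\ell}}\abs{\scalar{\psi_\alpha}{\delta_0}}^2$ plus a contribution from eigenvectors outside $\fra A_{\epsilon,\ell}$ that carry little mass at $0$. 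The key point is that for $\alpha \in \fra A_{\epsilon,\ell}$ one has, essentially by the definition of $\fra A_{\epsilon,\ell}$ with $x = 0$ in the sum (or rather by averaging the definition), good control on $\norm{P_{0,\ell}\psi_\alpha}$ whenever $\abs{\psi_\alpha(0)}$ is not too small; pairing this with Cauchy--Schwarz in $\alpha$ lets one bound the ``localized part'' of $P_{0,\ell}u_t$ by something like $\sqrt{\epsilon}$ uniformly in $t$.

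Next I would turn this into a lower bound on $\E\,\abs{\fra A_{\epsilon,\ell}}/\abs{\fra A}$ read the wrong way — i.e.\ I would argue by contradiction, supposing $\limsup_W \E\,\abs{\fra A_{\epsilon,\ell}}/\abs{\fra A}$ is \emph{large}, say close to $1$ along a subsequence. In that regime most eigenvectors lie in $\fra A_{\epsilon,\ell}$, so the overlap weight $\sum_{\alpha\notin\fra A_{\epsilon,\ell}}\abs{\scalar{\psi_\alpha}{\delta_0}}^2$ is typically small (after using $\sum_\alpha \abs{\scalar{\psi_\alpha}{\delta_0}}^2 = 1$ together with a symmetry/translation-invariance averaging over the starting site, so that the expected overlap mass on any fixed set of eigenvectors is proportional to its expected cardinality). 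Combining with the previous step, $\norm{P_{0,\ell}u_t}$ is then small for all $t$, which forces $\sum_{\abs{x}\geq \ell}\varrho(t,x)$ to be small for all $t \leq W^{d\kappa}T_0$. On the other hand, taking $\varphi$ in Theorem~\ref{theorem: first main result} to be (an approximation of) the indicator of $\{\abs{X}\geq c\}$ for a suitable constant $c$, the right-hand side $\int_{\abs{X}\geq c} L(T,X)\,\dd X$ is a fixed strictly positive number for $T$ of order one, since $L(T,\cdot)$ is a genuine superposition of heat kernels that has spread to scale one by macroscopic time $T \sim 1$. This is the desired contradiction, and tracking the constants gives precisely the bound $2\sqrt{\epsilon}$.

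The main obstacle I anticipate is the bookkeeping in the first step: converting the somewhat unusual definition of $\fra A_{\epsilon,\ell}$ — which involves $\sum_x \abs{\psi_\alpha(x)}\,\norm{P_{x,\ell}\psi_\alpha}$, a sum over \emph{all} centers $x$, not just $x=0$ — into a usable bound on the quantity $\sum_\alpha \abs{\scalar{\psi_\alpha}{\delta_x}}\,\norm{P_{x,\ell}\psi_\alpha}$ after averaging over $x$, and then handling the $\ell^1$-versus-$\ell^2$ mismatch (the definition uses $\abs{\psi_\alpha(x)}$, i.e.\ an $\ell^1$-type weight in $x$, whereas $u_t$ expansion naturally produces $\ell^2$ weights $\abs{\psi_\alpha(x)}^2$). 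I expect this to be resolved by a Cauchy--Schwarz step of the form $\abs{\psi_\alpha(x)} \leq \abs{\psi_\alpha(x)}^{1/2}\cdot\abs{\psi_\alpha(x)}^{1/2}$ combined with $\sum_x\abs{\psi_\alpha(x)}^2 = 1$, which is exactly what produces the square root and hence the final $2\sqrt\epsilon$ rather than $O(\epsilon)$. The diffusion input itself, and the positivity of $\int_{\abs X \geq c} L(T,X)\,\dd X$, are then routine. A precise statement and proof of the step relating $\fra A_{\epsilon,\ell}$ to exponentially localized eigenvectors is deferred to Corollary~\ref{cor: deloc}.
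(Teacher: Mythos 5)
Your skeleton is the paper's: Chen's argument, i.e.\ split $\ee^{-\ii t H/2}\delta_x$ into its components along $\fra A_{\epsilon,\ell}$ and its complement, average over the starting site $x$ so that the $\ell^1$-type sum defining $\fra A_{\epsilon,\ell}$ becomes directly usable (and so that translation invariance converts the site average into $\E\normb{P_{0,\ell}\,\ee^{-\ii tH}\delta_0}^2$), then feed in Theorem \ref{theorem: first main result}. However, the two quantitative steps you flag as the crux are both misidentified, and the constant $2\sqrt\epsilon$ hinges on exactly these. First, the square root does \emph{not} come from a Cauchy--Schwarz in $x$ of the form $\abs{\psi_\alpha(x)}=\abs{\psi_\alpha(x)}^{1/2}\abs{\psi_\alpha(x)}^{1/2}$ (an identity, which extracts nothing); there is in fact no $\ell^1$/$\ell^2$ mismatch to resolve. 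The localized part is handled by the plain triangle inequality, $\normb{\sum_{\alpha\in\fra A_{\epsilon,\ell}}\ee^{-\ii t\lambda_\alpha}\,\ol{\psi_\alpha(x)}\,P_{x,\ell}\psi_\alpha}\le\sum_{\alpha\in\fra A_{\epsilon,\ell}}\abs{\psi_\alpha(x)}\,\norm{P_{x,\ell}\psi_\alpha}$, whose average over $x$ is $\le\epsilon$ by the very definition of $\fra A_{\epsilon,\ell}$. The $\sqrt\epsilon$ comes from the \emph{cross term} between the localized and delocalized parts: one uses $\norm{a+b}^2\le(1+\zeta^{-1})\norm{a}^2+(1+\zeta)\norm{b}^2$, bounds $\norm{a}^2\le\norm{a}\cdot 1$, arrives at $(1+\zeta^{-1})\,\epsilon+(1+\zeta)\,\absb{\fra A\setminus\fra A_{\epsilon,\ell}}/\abs{\fra A}$, and optimizes $\zeta=\sqrt\epsilon$. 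With the naive $\norm{a+b}^2\le2\norm{a}^2+2\norm{b}^2$ you lose a factor $2$ in the main term and the bound degenerates.

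Second, to reach $2\sqrt\epsilon$ you need the escaping mass $\E\normb{P_{0,\ell}\,\ee^{-\ii tH/2}\delta_0}^2$ to converge to the \emph{full} value $1$, not merely to "a fixed strictly positive number": if $\int_{\abs{X}\ge c}L(T,X)\,\dd X=\delta_0<1$, your argument only yields $\limsup_W\E\,\absb{\fra A_{\epsilon,\ell}}/\abs{\fra A}\le 1-\delta_0+O(\sqrt\epsilon)$, which is vacuous unless $\delta_0$ is close to $1$. The paper forces $\delta_0=1$ by running the dynamics to a strictly larger time scale $t=W^{d\tilde\kappa}$ with $\kappa<\tilde\kappa<1/3$, so that the cutoff at $\abs{x}\ge W^{1+d\kappa/2}$ sits at $X\to0$ on the diffusive scale and one captures $\int L(1,X)\,\dd X=1$. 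Relatedly, your proof-by-contradiction framing ("close to $1$ along a subsequence") cannot produce the sharp constant; the direct quantitative chain of inequalities is what delivers $2\sqrt\epsilon$.
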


Theorem \ref{theorem: second main result} implies that the fraction of eigenvectors subexponentially localized on scales 
$W^{1 + \kappa d / 2}$ converges to zero in probability.

\begin{corollary} \label{cor: deloc}
For fixed $\gamma > 0$ and $K > 0$ define the random subset of eigenvectors
\begin{equation} \label{definition of B_l}
\fra B^\omega_\ell \;\deq\; \hbb{\alpha \in \fra A \,:\, \exists \, u \in \Lambda_N \,:\, \sum_x 
\abs{\psi_\alpha^\omega(x)}^2 \exp \qbb{\frac{\abs{x - u}}{\ell}}^\gamma \leq K}\,.
\end{equation}
Then for $0 < \kappa < 1/3$ we have
\begin{equation*}
\lim_{W \to \infty} \E \, \frac{\abs{\fra B_{W^{1 + \kappa d / 2}}}}{\abs{\fra A}} \;=\; 0\,,
\end{equation*}
uniformly in $N \geq W^{1 + d/6}$.
\end{corollary}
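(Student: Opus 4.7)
The plan is to deduce Corollary \ref{cor: deloc} from Theorem \ref{theorem: second main result} by showing that membership in $\fra B_\ell^\omega$ forces membership in $\fra A_{\epsilon, L'}^\omega$ for a suitably chosen $L' \gg \ell$, after which the conclusion follows on letting $\epsilon \to 0$. Since $\kappa < 1/3$, I would fix an auxiliary exponent $\kappa' \in (\kappa, 1/3)$ and set $\ell \deq W^{1 + \kappa d/2}$, $L' \deq W^{1 + \kappa' d/2}$, so that $L'/\ell = W^{d(\kappa' - \kappa)/2}$ grows as a positive power of $W$. Note that $L' \leq W^{1 + d/6}$, so the hypothesis $N \geq W^{1 + d/6}$ of Theorem \ref{theorem: second main result} is preserved when applied with the exponent $\kappa'$.

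For the inclusion, fix $\alpha \in \fra B_\ell^\omega$ with associated centre $u$ as in \eqref{definition of B_l}, and split
\[
\sum_x \abs{\psi_\alpha^\omega(x)} \, \norm{P_{x, L'} \psi_\alpha^\omega} \;=\; I_1 + I_2,
\]
where $I_1$ collects the sites with $\abs{x - u} \leq L'/2$ and $I_2$ the rest. On the region of $I_1$, any $y$ with $\abs{y - x} \geq L'$ satisfies $\abs{y - u} \geq L'/2$, so \eqref{definition of B_l} gives $\norm{P_{x, L'} \psi_\alpha^\omega}^2 \leq K \exp\qB{-\pb{L'/(2\ell)}^\gamma}$. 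Combining this with the weighted Cauchy--Schwarz estimate
\[
\sum_x \abs{\psi_\alpha^\omega(x)} \;\leq\; \sqrt{K} \, \pbb{\sum_x \exp\qB{-\pb{\abs{x-u}/\ell}^\gamma}}^{\!1/2} \;\leq\; C \sqrt{K} \, \ell^{d/2},
\]
which holds uniformly in $N$ because $\ell < N$ and the Gaussian-like tail is summable on $\Z^d$, yields $I_1 \leq C K \ell^{d/2} \exp\qB{-\tfrac{1}{2}\pb{L'/(2\ell)}^\gamma}$. For $I_2$, I would use $\norm{P_{x, L'} \psi_\alpha^\omega} \leq 1$ and the same weighted Cauchy--Schwarz restricted to $\abs{x - u} > L'/2$, exploiting the pointwise bound $\exp\qB{-\pb{\abs{x - u}/\ell}^\gamma} \leq \exp\qB{-\tfrac{1}{2}\pb{L'/(2\ell)}^\gamma} \exp\qB{-\tfrac{1}{2}\pb{\abs{x - u}/\ell}^\gamma}$ on that region, to obtain $I_2 \leq C \sqrt{K} \ell^{d/2} \exp\qB{-\tfrac{1}{4}\pb{L'/(2\ell)}^\gamma}$. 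Since $L'/\ell$ is a positive power of $W$, both bounds vanish as $W \to \infty$, so for every fixed $\epsilon > 0$ the inclusion $\fra B_\ell^\omega \subset \fra A_{\epsilon, L'}^\omega$ holds for all sufficiently large $W$, deterministically in $\omega$.

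Theorem \ref{theorem: second main result} applied with exponent $\kappa'$ then yields $\limsup_{W \to \infty} \E \, \abs{\fra A_{\epsilon, L'}^\omega}/\abs{\fra A} \leq 2\sqrt{\epsilon}$ uniformly in $N \geq W^{1 + d/6}$, and the inclusion transfers the same bound to $\fra B_\ell^\omega$. Sending $\epsilon \to 0$ concludes the proof. The one delicate point is keeping every estimate independent of $N$, which a priori may be much larger than any polynomial in $W$; the weighted Cauchy--Schwarz trick is designed precisely to avoid $N$-dependent prefactors, since $\sum_{x \in \Lambda_N} \exp\qB{-\pb{\abs{x-u}/\ell}^\gamma} \leq C(d, \gamma) \, \ell^d$ holds uniformly in $N$ as long as $\ell \leq N$.
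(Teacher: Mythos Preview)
Your proof is correct and follows the same strategy as the paper: choose an intermediate scale $\wt\ell = W^{1+d\kappa'/2}$ with $\kappa < \kappa' < 1/3$, establish the deterministic inclusion $\fra B_\ell^\omega \subset \fra A_{\epsilon,\wt\ell}^\omega$ for all large $W$, and then invoke Theorem \ref{theorem: second main result}. The paper proves the inclusion via a single weighted Cauchy--Schwarz combined with the subadditivity bound $(a+b)^\gamma \leq (2a)^\gamma + (2b)^\gamma$, while you instead split the sum according to $\abs{x-u} \lessgtr L'/2$ and estimate each piece directly; both routes produce a bound of the form $C K \ell^{d/2} \exp\hb{-c(\wt\ell/\ell)^\gamma}$ and are equally valid.
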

%An immediate consequence of Theorem \ref{theorem: second main result} and Markov's inequality 
%is
%\begin{equation*}
%\limsup_{W \to \infty} \, \P \pb{\abs{\fra A_{\epsilon, W^{1 + d \kappa /2}}} \geq \delta \, \abs{\fra A}} \;\leq\; 
%\frac{2\sqrt{\epsilon}}{\delta}\,,
%\end{equation*}
%for all $\epsilon, \delta > 0$. Thus, with a high probability,
%the localization length of an arbitrarily large majority of eigenvectors is  larger than the band width by a factor of 
%$W^{d \kappa /2}$.

\section{Main ideas of the proof} \label{ideas of proof}

We need to compute the expectation of the squared matrix elements of
the unitary time evolution $\ee^{-\ii tH/2}$. A natural starting point is the power series expansion $\ee^{-\ii tH/2}
=\sum_{n\geq 0} (-\ii tH/2)^n/n!$. Unfortunately, the resulting series is \emph{unstable} for $t \to \infty$, as is 
manifested
by the large cancellations in the sum
\begin{equation}
\E \abs{\scalar{\delta_x}{\ee^{-\ii t H/2} \delta_y}}^2 = \sum_{n,n'} \frac{\ii^{n - n'} 
t^{n+n'}}{2^{n+ n'} n!  n'!} \, \E H^n_{xy}H^{n'}_{yx}.  \label{EE}
\end{equation}
This can be seen as follows. The expectation
\begin{equation}
    \E H^n_{xy}H^{n'}_{yx} = \E \sum_{x_1, \ldots x_{n-1}}\sum_{y_1, \ldots y_{n'-1}}
	H_{xx_1}H_{x_1x_2}\ldots H_{x_{n-1}y} H_{yy_{n' - 1}}\ldots H_{y_1 x}
\label{EhEh}
\end{equation}
is traditionally represented graphically by drawing the labels $x, x_1, x_2, \dots, y_1, x$ as vertices of a path, and 
by identifying vertices whose labels are identical. Since the matrix elements are centred (i.e.\ $\E H_{xy}$ = 0 for all 
$x,y$), each edge must be traveled at least twice in any path that yields a nonzero contribution to \eqref{EhEh}. It is 
well known that the leading order contribution to \eqref{EhEh} is given by the so-called \emph{fully backtracking 
paths}. A fully backtracking path is a path generated by successively applying the transformation $a \mapsto aba$ to the 
trivial path $x$. A typical fully backtracking path may be thought of as a tree with double edges. It is not hard to see 
that, after summing over $y$, each fully backtracking path yields a contribution of order 1 to \eqref{EhEh}. Also, the 
number of fully backtracking paths is of order $4^{n + n'}$, so that the expectation \eqref{EhEh} is of order 
$4^{n+n'}$. In particular, this implies that the main contribution to \eqref{EE} comes from terms satisfying $n+n' \sim 
t$. Moreover, the series \eqref{EE} is unstable in the sense that the sum of the absolute values of its summands behaves 
like $\ee^{4t}$ as $t \to \infty$.

The large terms in \eqref{EE} systematically
cancel each other out similarly to the {\it two-legged subdiagram renormalization}
in perturbative field theory.  In perturbative renormalization, 
these cancellations are exploited by introducing appropriately adjusted
 fictitious counter-terms. In the current problem, however, we make use of the \emph{Chebyshev 
transformation}, which removes the contribution of all backtracking paths in one step. The key 
observation is that, if $U_n$ denotes the $n$-th
Chebyshev polynomial of the second kind, then  $U_n(H/2)$ can be expressed
in terms of \emph{nonbacktracking paths}. A nonbacktracking path is a path which contains no subpath of the form $aba$.  
Thus the
strongest instabilities in \eqref{EE} can be removed if $\ee^{-\ii tH/2}$ is
expanded into a series of Chebyshev polynomials. 
This idea appeared first in \cite{BY} and has recently been exploited in \cite{FSo,So1} to 
prove, among other things, the edge-universality
for band matrices. In \cite{So1} it is also stated that
the same method can be used to prove delocalization of the edge
eigenvectors if $W \geq N^{5/6}$, i.e.\ to get the bound $\ell \geq W^{6/5}$ on the localization length $\ell$.  Our 
estimate gives a slightly weaker bound, $\ell \geq W^{7/6}$,
for this special case, but it applies to bulk eigenvectors as well as higher dimensions.

After the Chebyshev transform, we need to compute expectations
\begin{equation}
    \E \sideset{}{'}\sum_{x_1, \ldots x_{n-1}}
\sideset{}{'}\sum_{y_1, \ldots y_{n'-1}}
	H_{xx_1}H_{x_1x_2}\ldots H_{x_{n-1}y} H_{yy_{n' - 1}}\ldots H_{y_1x} ,
\label{hprime}
\end{equation}
where the summations are restricted to nonbacktracking paths. As above, since $\E H_{ab}=0$ every matrix element must 
appear at least twice in the non-trivial terms of \eqref{hprime}.
Taking the expectation effectively introduces a pairing, or more generally
a lumping, of the factors, which can be conveniently
represented by  Feynman diagrams. The main contribution comes from the so-called
{\it ladder diagrams}, corresponding to $n=n'$ and $x_i=y_i$. The contribution of these 
diagrams can be explicitly computed, and showed to behave diffusively. More precisely: Since we 
express nonbacktracking powers of $H$ as Chebyshev polynomials in $H/2$,
 the contribution of each 
graph to the propagator $\ee^{-\ii t H/2}$ carries a weight equal to the
 Chebyshev transform $\alpha_n(t)$ of 
$\ee^{-\ii t \xi}$ in $\xi$. We shall show that $\alpha_n(t)$ is given 
essentially by a Bessel function of the first
kind. In order to identify the limiting behaviour of the ladder diagrams, we therefore need 
to analyse a probability distribution on $\N$ of the form $\hb{\abs{\alpha_n(t)}^2}_{n \in \N}$ for 
large $t$ (Section~\ref{section: ladder}).

The main work consists of proving that the non-ladder diagrams are negligible.
Similarly to the basic idea of \cite{ESY1, ESY2, ESY3}, the non-ladder
diagrams are classified according to their combinatorial complexity.
The large number of complex diagrams is offset by their
small value, expressed in terms of powers of $W$. Conversely,
diagrams containing large pieces of ladder subdiagrams have a relatively
large contribution but their number is small.

More precisely, focusing only on the pairing diagrams in the Hermitian case, it is easy to see that ladder subdiagrams 
are marginal for power counting. We define
the {\it skeleton} of a graph by collapsing parallel ladder rungs (called {\it bridges})
into a single rung. We show that the value of a skeleton diagram is
given by a negative power of $M \sim C W^d$ that is proportional to the size of the skeleton diagram.  This is how the 
dimension $d$ enters our estimate. We then
sum up all possible ladder subdiagrams corresponding to a given skeleton.
Although the ladder subdiagrams do not yield additional $W$-powers,
they represent classical random walks for which dispersive bounds
are available, rendering them summable.
The restriction $t\ll W^{d/3}$ comes from summing up the skeleton
diagrams. In Section \ref{sec:further} we present a critical skeleton
that shows that this restriction is necessary without further resummation
or a more refined classification of complex graphs.

\section{The path expansion} \label{section: path expansion}
We start by writing the expansion of $\ee^{-\ii t H/2}$ in terms of nonbacktracking paths by 
using the Chebyshev transform.

\subsection{The Chebyshev transform of $\ee^{-\ii t \xi}$}
The Chebyshev transform $\alpha_k(t)$ of $\ee^{-\ii t \xi}$ is defined by
\begin{equation*}
\ee^{-\ii t \xi} \;=\; \sum_{k = 0}^\infty \alpha_k(t) \, U_k(\xi)\,.
\end{equation*}
Here $U_k$ denotes the Chebyshev polynomial of the second kind, defined through
\begin{equation} \label{definition of Chebyshev polynomial}
U_k(\cos \theta) \;=\; \frac{\sin (k+1) \theta}{\sin \theta}
\end{equation}
for $k = 0,1,2,\dots$.
The Chebyshev polynomials satisfy the orthogonality relation
\begin{equation*}
\frac{2}{\pi} \int_{-1}^1 \dd \xi \, \sqrt{1 - \xi^2} \,  U_k(\xi) \, U_l(\xi) \;=\; 
\delta_{kl}\,.
\end{equation*}
Therefore the coefficients $\alpha_k(t)$ are given by
\begin{equation} \label{definition of the coefficients alpha}
\alpha_k(t) \;=\; \frac{2}{\pi} \int_{-1}^1 \dd \xi \; \sqrt{1 - \xi^2} \, \ee^{- \ii t \xi} \, 
U_k(\xi)\,.
\end{equation}

The coefficient $\alpha_k(t)$ can be evaluated explicitly using the standard identities (see 
\cite{GradshteynRyzhik})
\begin{gather*}
U_k(\xi) \;=\; \frac{\xi T_{k+1}(\xi) - T_{k+2}(\xi)}{1 - \xi^2}\,,
\qquad
T_{k+2}(\xi) - 2\xi T_{k+1}(\xi) + T_k(\xi) \;=\; 0\,,
\\
\frac{2}{\pi} \int_{-1}^1 \dd \xi \; \frac{T_{2l}(\xi) \cos(t \xi)}{\sqrt{1 - \xi^2}} \;=\; 2 (-1)^l J_{2l}(t)\,,
\qquad
\frac{2}{\pi} \int_{-1}^1 \dd \xi \; \frac{T_{2l+1}(\xi) \sin(t \xi)}{\sqrt{1 - \xi^2}} \;=\; 2 (-1)^l J_{2l+1}(t)\,,
\\
J_{k}(t) + J_{k + 2}(t) \;=\; \frac{2(k+1)}{t} J_{k + 1}(t)\,.
\end{gather*}
Here $T_k$ denotes the Chebyshev polynomial of the first kind and $J_k$ the Bessel function of the first kind; they are 
defined through
\begin{equation*}
T_k(\cos \theta) \;\deq\; \cos (k \theta)\,,
\qquad
J_k(t) \;\deq\; \frac{1}{\pi} \int_0^\pi \dd \theta \; \cos (t \sin \theta - t \theta)\,.
\end{equation*}

If $k = 2l$ is even we may therefore compute
\begin{multline*}
\alpha_{2l}(t) \;=\; \frac{2}{\pi} \int_{-1}^1 \dd \xi \; \sqrt{1 - \xi^2} \, \cos (t\xi) \, U_k(\xi)
\;=\; \frac{2}{\pi} \int_{-1}^1 \dd \xi \; \sqrt{1 - \xi^2} \, \cos (t\xi) \, \frac{\xi T_{k+1}(\xi) - T_{k+2}(\xi)}{1 - 
\xi^2}
\\
\;=\; \frac{2}{\pi} \int_{-1}^1 \dd \xi \; \cos (t\xi) \, \frac{T_k(\xi) - T_{k+2}(\xi)}{2\sqrt{1 - \xi^2}}
\;=\; (-1)^l \qb{J_{2l}(t) + J_{2l + 2}(t)}
\;=\; 2 (-1)^l \, \frac{2l+1}{t} J_{2l + 1}(t)\,.
\end{multline*}
If $k = 2l + 1$ is odd a similar calculation yields
\begin{equation*}
\alpha_{2l + 1}(t) \;=\; -2 \ii (-1)^l \, \frac{2l+2}{t} J_{2l + 2}(t)\,.
\end{equation*}
%If $k = 2l + 1$ is odd we find
%\begin{align*}
%\alpha_{2l + 1}(t) &\;=\; -\ii \, \frac{2}{\pi} \int_{-1}^1 \dd \xi \; \sqrt{1 - \xi^2} \, 
%\sin (t\xi) \, U_k(\xi)
%\\
%&\;=\; -\ii \, \frac{2}{\pi} \int_{-1}^1 \dd \xi \; \sqrt{1 - \xi^2} \, \sin (t\xi) \, \frac{\xi 
%T_{k+1}(\xi) - T_{k+2}(\xi)}{1 - \xi^2}
%\\
%&\;=\; -\ii \, \frac{2}{\pi} \int_{-1}^1 \dd \xi \; \sin (t\xi) \, \frac{T_k(\xi) - 
%T_{k+2}(\xi)}{2\sqrt{1 - \xi^2}}
%\\
%&\;=\; -\ii \, (-1)^l \qb{J_{2l + 1}(t) + J_{2l + 3}(t)}
%\\
%&\;=\; -2 \ii (-1)^l \, \frac{2l+2}{t} J_{2l + 2}(t)\,.
%\end{align*}
Thus we have the following result.
\begin{lemma} \label{lemma: Chebyshev transform of exponential}
We have that
\begin{equation*}
\ee^{-\ii t \xi} \;=\; \sum_k \alpha_k(t) \, U_k(\xi)\,,
\end{equation*}
where
\begin{equation} \label{formula for alpha}
\alpha_k(t) \;=\; 2 (-\ii)^k \frac{k+1}{t} \, J_{k + 1}(t)\,.
\end{equation}
\end{lemma}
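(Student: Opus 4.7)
The plan is to obtain the coefficients $\alpha_k(t)$ directly from the orthogonality relation for $U_k$ and then evaluate the resulting integrals using known identities relating Chebyshev polynomials of the first kind to Bessel functions. Since $\{U_k\}$ is complete in $L^2([-1,1], \sqrt{1-\xi^2}\,\dd\xi)$ and $\xi \mapsto \ee^{-\ii t \xi}$ is smooth, the expansion converges and the coefficients are forced to be
\begin{equation*}
\alpha_k(t) \;=\; \frac{2}{\pi} \int_{-1}^1 \dd \xi \; \sqrt{1 - \xi^2} \, \ee^{-\ii t \xi} \, U_k(\xi)\,,
\end{equation*}
as already recorded in \eqref{definition of the coefficients alpha}. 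So the only content of the lemma is the closed-form evaluation \eqref{formula for alpha}.

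Next I would split the exponential as $\ee^{-\ii t \xi} = \cos(t\xi) - \ii \sin(t\xi)$ and exploit the parity $U_k(-\xi) = (-1)^k U_k(\xi)$: for even $k=2l$ only the cosine part contributes, and for odd $k=2l+1$ only the sine part. Then I would rewrite $U_k$ in terms of Chebyshev polynomials of the first kind via the identity
\begin{equation*}
U_k(\xi) \;=\; \frac{\xi T_{k+1}(\xi) - T_{k+2}(\xi)}{1 - \xi^2}\,,
\end{equation*}
and apply the three-term recurrence $T_{k+2} - 2\xi T_{k+1} + T_k = 0$ in the form $\xi T_{k+1} - T_{k+2} = \tfrac{1}{2}(T_k - T_{k+2})$. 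This reduces the integrals to the standard Fourier--Chebyshev representations of the Bessel function already displayed in the excerpt, yielding for even $k=2l$
\begin{equation*}
\alpha_{2l}(t) \;=\; (-1)^l \qb{J_{2l}(t) + J_{2l+2}(t)}\,,
\end{equation*}
and for odd $k=2l+1$, after picking up the extra factor of $-\ii$ from the sine, the analogous expression $\alpha_{2l+1}(t) = -\ii (-1)^l \qb{J_{2l+1}(t) + J_{2l+3}(t)}$.

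Finally I would apply the Bessel recurrence $J_k(t) + J_{k+2}(t) = \tfrac{2(k+1)}{t} J_{k+1}(t)$ to collapse each two-term sum, obtaining $\alpha_{2l}(t) = 2(-1)^l \tfrac{2l+1}{t} J_{2l+1}(t)$ and $\alpha_{2l+1}(t) = -2\ii (-1)^l \tfrac{2l+2}{t} J_{2l+2}(t)$, and then unify the two cases via $(-\ii)^k = (-1)^l$ for $k=2l$ and $(-\ii)^k = -\ii(-1)^l$ for $k=2l+1$, giving the stated formula $\alpha_k(t) = 2(-\ii)^k \tfrac{k+1}{t} J_{k+1}(t)$. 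The only real obstacle is bookkeeping of signs and of the imaginary unit across the even/odd split; everything else is a direct application of tabulated identities.
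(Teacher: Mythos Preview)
Your proposal is correct and follows essentially the same route as the paper: express $\alpha_k(t)$ via orthogonality, rewrite $U_k$ in terms of $T_k$ and $T_{k+2}$ using the relation $U_k(\xi) = (\xi T_{k+1} - T_{k+2})/(1-\xi^2)$ together with the three-term recurrence, apply the tabulated Fourier--Chebyshev integrals to get $J_{k} + J_{k+2}$, and collapse via the Bessel recurrence. Your even/odd parity split and the explicit unification through $(-\ii)^k$ are exactly what the paper does (the paper just writes out the even case in full and states the odd case).
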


Also, for all $t \in \R$ we have the identity
\begin{equation} \label{orthonormality of Chebyshev coefficients}
\sum_{k \geq 0} \abs{\alpha_k(t)}^2 \;=\; 1\,,
\end{equation}
as follows from the orthonormality of the Chebyshev polynomials.

\subsection{Expansion in terms of nonbacktracking paths}
For $n = 0, 1, 2, \dots$ let $H^{(n)}$ denote the $n$-th nonbacktracking power of $H$.  It is 
defined by
\begin{equation*}
H^{(n)}_{x_0, x_n} \;\deq\; \sideset{}{'}\sum_{x_1, \dots, x_{n-1}} H_{x_0 x_1} \cdots 
H_{x_{n-1} x_n}\,,
\end{equation*}
where $\sum'$ means sum under the restriction $x_i \neq x_{i+2}$ for $i = 0, \dots, n - 2$. We 
call this restriction the \emph{nonbacktracking condition}.

The following key observation is due to Bai and Yin \cite{BY}.
\begin{lemma} \label{lemma recursion for nonbacktracking random walks}
The nonbacktracking powers of $H$ satisfy
\begin{equation*}
H^{(0)} \;=\; \umat \,, \qquad H^{(1)} \;=\; H \,, \qquad H^{(2)} \;=\; H^2 - \frac{M}{M - 1} 
\umat\,,
\end{equation*}
as well as the recursion relation
\begin{equation} \label{recursion for nonbacktracking paths}
H^{(n)} \;=\; H H^{(n-1)} - H^{(n-2)} \qquad (n \geq 3)\,.
\end{equation}
\end{lemma}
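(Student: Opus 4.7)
The cases $n=0,1$ are immediate from the definition of $H^{(n)}$. For $n=2$, I would just compute directly: the only constraint in the definition is $x_0 \neq x_2$, so
\begin{equation*}
H^{(2)}_{x_0 x_2} \;=\; \ind{x_0 \neq x_2} \sum_{x_1} H_{x_0 x_1} H_{x_1 x_2} \;=\; (H^2)_{x_0 x_2} - \delta_{x_0 x_2}(H^2)_{x_0 x_0}.
\end{equation*}
By the deterministic identity \eqref{deterministic norm}, $(H^2)_{x_0 x_0} = \sum_{x_1} \abs{H_{x_0 x_1}}^2 = \frac{M}{M-1}$, which gives the stated formula.

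For the recursion with $n\geq 3$, the plan is to compare $(HH^{(n-1)})_{x_0 x_n}$ with $H^{(n)}_{x_0 x_n}$ by tracking which nonbacktracking constraints are imposed. Expanding,
\begin{equation*}
(HH^{(n-1)})_{x_0 x_n} \;=\; \sum_{x_1, x_2, \dots, x_{n-1}} H_{x_0 x_1} H_{x_1 x_2} \cdots H_{x_{n-1} x_n},
\end{equation*}
where the inner sum over $x_2,\dots,x_{n-1}$ carries all nonbacktracking constraints $x_i \neq x_{i+2}$ for $i=1,\dots,n-2$. In contrast, $H^{(n)}_{x_0 x_n}$ imposes in addition the constraint $x_0 \neq x_2$. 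Hence the difference $(HH^{(n-1)} - H^{(n)})_{x_0 x_n}$ is exactly the contribution of paths with $x_2 = x_0$ (and all other constraints intact).

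Substituting $x_2 = x_0$ factors $H_{x_0 x_1} H_{x_1 x_0} = \abs{H_{x_0 x_1}}^2$, which by \eqref{deterministic norm} equals $\frac{1}{M-1}\ind{1\leq \abs{x_0-x_1}\leq W}$. The remaining constraint involving $x_1$ is $x_1 \neq x_3$ (from $i=1$), while constraints for $i\geq 2$ only involve $x_0, x_3, \dots, x_n$. Thus the sum over $x_1$ is
\begin{equation*}
\sum_{x_1} \abs{H_{x_0 x_1}}^2 \ind{x_1 \neq x_3} \;=\; \frac{1}{M-1}\,\absb{\h{x_1 \st 1\leq \abs{x_0-x_1}\leq W,\; x_1 \neq x_3}}.
\end{equation*}
Here one can assume $1\leq \abs{x_0-x_3}\leq W$, since otherwise $H_{x_0 x_3} = 0$ kills the remaining product; under that assumption the set has cardinality $M-1$, so the sum equals $1$. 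The leftover factor is $\sum_{x_3,\dots,x_{n-1}}' H_{x_0 x_3} H_{x_3 x_4}\cdots H_{x_{n-1} x_n}$, with the constraints $x_0 \neq x_4$, $x_3 \neq x_5$, \dots, $x_{n-2} \neq x_n$, which after relabeling is precisely $H^{(n-2)}_{x_0 x_n}$.

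The only subtlety, and the one place that needs a line of care, is the handling of the $x_1 \neq x_3$ constraint: it is here that one uses that the set $\h{x_1\st 1\leq \abs{x_0-x_1}\leq W}$ has \emph{exactly} $M$ elements and that $\abs{H_{x_0 x_1}}^2$ is \emph{deterministic} on this set, so that excluding one element produces the clean cancellation leading to $+1$. This is really the same observation of Feldheim--Sodin noted earlier in the excerpt, and is the reason the algebraic identity fails for general variance profiles.
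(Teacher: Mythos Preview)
Your proof is correct and follows essentially the same route as the paper: split $(HH^{(n-1)})_{x_0 x_n}$ according to whether $x_0 = x_2$, and in the backtracking case use the deterministic identity \eqref{deterministic norm} together with the constraint $x_1 \neq x_3$ to reduce the $x_1$-sum to $1$, leaving exactly $H^{(n-2)}_{x_0 x_n}$. Your explicit justification that one may assume $1 \leq \abs{x_0 - x_3} \leq W$ (since otherwise $H_{x_0 x_3} = 0$ kills the term) is the point the paper leaves implicit in its final ``last step'' remark.
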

\begin{proof}
For the convenience of the reader we give the simple proof. The cases $n = 0,1,2$ are easily 
checked. Moreover,
\begin{align*}
(H H^{(n - 1)})_{x_0 x_n} &\;=\; \sum_{x_1, \dots, x_{n - 1}} \prod_{i = 1}^{n - 2} \ind{x_i 
\neq x_{i+2}} \, H_{x_0 x_1} \cdots H_{x_{n - 1} x_n}
\\
&\;=\; \sum_{x_1, \dots, x_{n - 1}} \prod_{i = 0}^{n - 2} \ind{x_i \neq x_{i+2}} \, H_{x_0 x_1} 
\cdots H_{x_{n - 1} x_n}
\\
&\qquad +
\sum_{x_1, \dots, x_{n - 1}} \ind{x_0 = x_2} \prod_{i = 1}^{n - 2} \ind{x_i \neq x_{i+2}} \, 
H_{x_0 x_1} \cdots H_{x_{n - 1} x_n}
\\
&\;=\; (H^{(n)})_{x_0, x_n} + \sum_{x_3, \dots, x_{n - 1}} \ind{x_0 \neq x_4} \prod_{i = 3}^{n 
- 2} \ind{x_i \neq x_{i + 2}} H_{x_0 x_3} H_{x_3 x_4} \cdots H_{x_{n - 1} x_n}
\\
&\qquad \qquad \qquad \qquad \qquad  \times
 \sum_{x_1} \ind{x_1 \neq x_3} \abs{H_{x_0 x_1}}^2
\\
&\;=\; (H^{(n)})_{x_0, x_n} + (H^{(n - 2)})_{x_0 x_n}\,.
\end{align*}
Notice that in the last step we used \eqref{deterministic norm}.
\end{proof}

Feldheim and Sodin have observed \cite{FSo,So1} that \eqref{recursion for nonbacktracking 
paths} is reminiscent of the recursion relation for the Chebyshev polynomials of the second 
kind. Let us abbreviate
$\widetilde{U}_n(\xi) \deq U_n(\xi / 2)$.
Then we have (see e.g.\ \cite{GradshteynRyzhik})
\begin{equation*}
\widetilde U_0(\xi) \;=\; 1\,, \qquad \widetilde U_1(\xi) \;=\; \xi \,, \qquad \widetilde 
U_2(\xi) \;=\; \xi^2 - 1\,,
\end{equation*}
and for $n \geq 2$
\begin{equation*}
\widetilde{U}_n(\xi) \;=\; \xi \widetilde{U}_{n-1}(\xi) - \widetilde{U}_{n-2}(\xi)\,.
\end{equation*}
Comparing this to Lemma \ref{lemma recursion for nonbacktracking random walks}, we get, 
following \cite{FSo, So1},
\begin{equation*}
H^{(n)} \;=\; \widetilde U_n(H) - \frac{1}{M - 1} \widetilde U_{n - 2}(H)\,.
\end{equation*}
Solving for $\widetilde U_n(H)$ yields
\begin{equation*}
\widetilde U_n(H) \;=\; \sum_{k \geq 0} \frac{1}{(M - 1)^k} H^{(n - 2k)}\,,
\end{equation*}
with the convention that $H^{(n)} = 0$ for $n < 0$.
Therefore Lemma \ref{lemma: Chebyshev transform of exponential} yields
\begin{equation*}
\ee^{- \ii t H/2} \;=\; \sum_{n \geq 0} \alpha_n(t) \, \widetilde U_n(H) \;=\;  \sum_{m \geq 0} 
H^{(m)} \sum_{k \geq 0} \frac{\alpha_{m + 2k}(t)}{(M - 1)^k}\,.
\end{equation*}

We have proved the following result.
\begin{lemma}\label{lemma: nonbacktracking path expansion}
We have that
\begin{equation*}
\ee^{- \ii t H/2} \;=\; \sum_{m \geq 0} a_m(t) H^{(m)}\,,
\end{equation*}
where
\begin{equation*}
a_m(t) \;\deq\; \sum_{k \geq 0} \frac{\alpha_{m + 2k}(t)}{(M - 1)^k} \,.
\end{equation*}
\end{lemma}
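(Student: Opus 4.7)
The plan is to combine the Chebyshev expansion of $\ee^{-\ii t \xi}$ from Lemma \ref{lemma: Chebyshev transform of exponential} with an explicit formula expressing the rescaled Chebyshev polynomials $\widetilde U_n(H) \deq U_n(H/2)$ as linear combinations of the nonbacktracking powers $H^{(m)}$.

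First I would recall that $\widetilde U_n$ satisfies $\widetilde U_0(\xi) = 1$, $\widetilde U_1(\xi) = \xi$, $\widetilde U_2(\xi) = \xi^2 - 1$, and $\widetilde U_n(\xi) = \xi \widetilde U_{n-1}(\xi) - \widetilde U_{n-2}(\xi)$ for $n \geq 2$. Comparing this recursion with the recursion for $H^{(n)}$ established in Lemma \ref{lemma recursion for nonbacktracking random walks}, I would prove by induction on $n$ that
\[
H^{(n)} \;=\; \widetilde U_n(H) - \frac{1}{M-1} \widetilde U_{n-2}(H)\,,
\]
with the convention $\widetilde U_m = 0$ for $m < 0$. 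The base cases $n = 0,1,2$ match the initial conditions of the two recursions directly, and the inductive step follows by subtracting the two recursions and invoking the induction hypothesis.

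Next, I would invert this relation to express $\widetilde U_n(H)$ in terms of $H^{(m)}$. Setting $F_n \deq \sum_{k \geq 0} (M-1)^{-k} H^{(n-2k)}$ (a finite sum since $H^{(m)} = 0$ for $m < 0$), a telescoping computation gives $F_n - (M-1)^{-1} F_{n-2} = H^{(n)}$, so $F_n$ and $\widetilde U_n(H)$ satisfy the same two-step recursion with identical initial data; hence $\widetilde U_n(H) = F_n$.

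Finally, I would apply Lemma \ref{lemma: Chebyshev transform of exponential} at $\xi = H/2$ to write $\ee^{-\ii t H/2} = \sum_{n \geq 0} \alpha_n(t)\, \widetilde U_n(H)$, substitute the expansion of $\widetilde U_n(H)$ in terms of $H^{(m)}$, and reindex via $m = n - 2k$ to obtain the claim with $a_m(t) = \sum_{k \geq 0} \alpha_{m+2k}(t)/(M-1)^k$. The only technical point is justifying the interchange of the double sum over $n$ and $k$; this is not a substantive obstacle, and follows from the super-exponential decay of $\alpha_k(t)$ in $k$ for fixed $t$ (via the Bessel formula \eqref{formula for alpha}) combined with the deterministic operator-norm bound on $\widetilde U_n(H)$ coming from the bounded norm of $H$.
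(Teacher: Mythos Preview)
Your proposal is correct and follows essentially the same route as the paper: derive the identity $H^{(n)} = \widetilde U_n(H) - (M-1)^{-1}\widetilde U_{n-2}(H)$ from the matching recursions, invert it to get $\widetilde U_n(H) = \sum_{k \geq 0} (M-1)^{-k} H^{(n-2k)}$, and then insert this into the Chebyshev expansion of $\ee^{-\ii t H/2}$ and reindex. If anything, you supply more detail than the paper (the explicit induction, the telescoping verification, and the remark on absolute summability).
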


\section{Graphical representation} \label{section: graphical representation}
For ease of presentation, we assume throughout the proof of Theorem \ref{theorem: first main 
result} (Sections \ref{section: graphical representation} -- \ref{section: ladder}) that we are 
in the Hermitian case \eqref{hnormalization_1}. How to extend our arguments to cover the 
symmetric case \eqref{hnormalization_2} is described in Section \ref{section: symmetric 
matrices}.

Using Lemma \ref{lemma: nonbacktracking path expansion} we get
\begin{equation*}
\varrho(t,x) \;=\; \sum_{n, n' \geq 0} a_{n}(t) \ol{a_{n'}(t)} \, \E \, H^{(n)}_{0 x} H^{(n')}_{x 0}\,.
\end{equation*}
Expanding in nonbacktracking paths yields a graphical expansion. Let us write
$H^{(n)}_{0 x} H^{(n')}_{x 0}$
as a sum over paths $x_0, x_1, \dots, x_{n + n' - 1}, x_0$, where $x_0 = 0$ and $x_n = x$. Such a path is graphically 
represented as a loop of $n + n'$ vertices belonging to the set $\cal V_{n,n'} \deq \{0, \dots, n + n' - 1\}$; see 
Figure \ref{figure: basic graph}. Vertices $i \in \cal V_{n,n'}$ satisfying the nonbacktracking condition (i.e.\ $x_{i 
-1} \neq x_{i+1}$) are drawn using black dots; other vertices are drawn using white dots.
\begin{figure}[ht!]
\begin{center}
\includegraphics{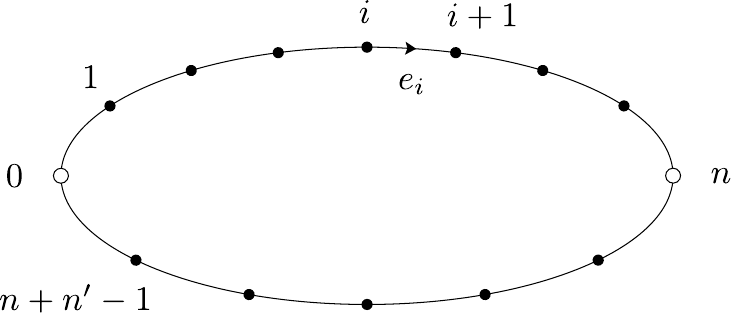}
\end{center}
\caption{The graphical representation of paths of vertices. \label{figure: basic graph}}
\end{figure}
There are $n + n'$ oriented edges $e_0, \dots, e_{n + n' - 1}$ defined by $e_i \deq (i, i + 1)$ (here, and in the 
following, $\cal V_{n,n'}$ is taken to be periodic). We denote by $\cal E_{n,n'} \deq \{e_0, \dots, e_{n+n'-1}\}$ the 
set of edges. In Figure \ref{figure: basic graph} the edges are oriented clockwise.  Each vertex has an outgoing and an 
incoming edge, and each edge $e$ has an initial vertex $a(e)$ and final vertex $b(e)$.  Moreover, we order the edges 
using their initial vertices.

Each vertex $i \in \cal V_{n,n'}$ carries a label $x_i \in \Lambda_N$. The labels $\b x = (x_0, \dots, x_{n + n' - 1})$ 
are summed over under the restriction $Q_x(\b x) = 1$, where
\begin{equation*}
Q_x(\b x) \;\deq\; \delta_{0 x_0} \delta_{x_n x} \prod_{i = 0}^{n + n' - 1} \ind{1 \leq \abs{x_i - x_{i + 1}} \leq W} 
\prod_{i = 0}^{n -2} \ind{x_i \neq x_{i+2}} \prod_{i = n}^{n + n' -2} \ind{x_i \neq x_{i+2}}\,.
\end{equation*}
The two last products implement the nonbacktracking condition. We define the unordered pair of labels corresponding to 
the edge $e$ through
\begin{equation*}
\varrho_{\b x}(e) \;\deq\; \{x_{a(e)}, x_{b(e)}\}\,.
\end{equation*}

Next, to each configuration of labels $\b x = (x_0, \dots, x_{n + n' - 1})$ we assign a \emph{lumping} $\Gamma = 
\Gamma(\b x)$ of the set of edges $\cal E_{n,n'}$. Here a lumping means a partition of $\cal E_{n,n'}$ or, equivalently, 
an equivalence relation on $\cal E_{n,n'}$. We use the notation $\Gamma = \{\gamma\}_{\gamma \in \Gamma}$, where $\gamma 
\in \Gamma$ is \emph{lump} of $\Gamma$, i.e.\ an equivalence class. The lumping $\Gamma = \Gamma(\b x)$ associated with 
the labels $\b x$ is defined according to the rule that $e$ and $e'$ are in the same lump $\gamma \in \Gamma$ if and 
only if $\varrho_{\b x}(e) = \varrho_{\b x}(e')$.
Let $\widetilde{\scr G}_{n, n'}$ denote the set of lumpings of $\cal E_{n,n'}$ obtained in this manner. Thus we may 
write
\begin{equation*}
\E \, H^{(n)}_{0 x} H^{(n')}_{x 0} \;=\; \sum_{\Gamma \in \widetilde{\scr G}_{n,n'}} V_x(\Gamma) \,.
\end{equation*}
Here
\begin{equation*}
V_x(\Gamma) \;=\; \sideset{}{^*}\sum_{\b x} Q_x(\b x) \, \E \, H_{x_0 x_1} \cdots H_{x_{n + n' - 1} x_0}\,,
\end{equation*}
where the summation is restricted to label configurations yielding the lumping $\Gamma$.

Next, observe that the expectation of a monomial $\prod_{y,z} (H_{yz})^{\nu_{yz}}$ is nonzero if and only if $\nu_{yz} = 
\nu_{zy}$ for all $y,z$ (here we only use that the law of the matrix entries is invariant under rotations of the complex 
plane). In particular, $V_x(\Gamma)$ vanishes if one lump $\gamma \in \Gamma$ is of odd size.  Defining the subset $\scr 
G_{n,n'} \subset \widetilde{\scr G}_{n,n'}$ of lumpings whose lumps are of even size, we find that
\begin{equation*}
\E \, H^{(n)}_{0 x} H^{(n')}_{x 0} \;=\; \sum_{\Gamma \in \scr G_{n,n'}} V_x(\Gamma) \,.
\end{equation*}

We summarize the key properties of $\scr G_{n,n'}$.

\begin{lemma} \label{lemma: characterization of the set of partitions}
Let $\Gamma \in \scr G_{n,n'}$. Then each lump $\gamma \in \Gamma$ is of even size. Moreover, any two edges $e,e' \in 
\gamma$ in the same lump $\gamma$ are separated by either at least two edges or a vertex in $\{0, n\}$ (nonbacktracking 
property).
\end{lemma}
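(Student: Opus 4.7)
The first claim (every lump has even size) is immediate from the definition of $\scr G_{n,n'}$: by construction it consists precisely of those lumpings in $\widetilde{\scr G}_{n,n'}$ whose lumps all have even size, since $V_x(\Gamma)$ vanishes whenever $\Gamma$ contains an odd-sized lump (as already observed just before the definition of $\scr G_{n,n'}$ using rotational invariance of the entry distribution).

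For the nonbacktracking property, I would fix a lump $\gamma \in \Gamma$ and two distinct edges $e_i, e_j \in \gamma$. Belonging to a common lump means the unordered label pairs agree, $\{x_i, x_{i+1}\} = \{x_j, x_{j+1}\}$. The plan is a short case analysis on the cyclic distance between $e_i$ and $e_j$ on the cycle $\cal V_{n,n'}$. Whenever the shorter arc already contains two or more edges, the claim is automatic, so it suffices to handle the two remaining cases: adjacent edges (the arc contains no edge) and edges with exactly one edge between them on the shorter arc.

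In the adjacent case the two edges share a single vertex $v$, and the distance factor $1 \leq |x_k - x_{k+1}| \leq W$ built into $Q_x$ rules out $x_{v-1} = x_v$ and $x_v = x_{v+1}$; the matching of label pairs then collapses to $x_{v-1} = x_{v+1}$. The nonbacktracking factors in $Q_x$ forbid this identity \emph{except} at the two seams where those factors are deliberately omitted, namely $v \in \{0, n\}$, which is exactly the exception allowed by the lemma. For edges separated by a single edge, the label matching combined with the distance constraint forces simultaneously $x_i = x_{i+2}$ and $x_{i+1} = x_{i+3}$; each violates the nonbacktracking factors unless its index lies in $\{n-1,\, n+n'-1\}$, and these two exception sets (for $i$ and for $i+1$) are disjoint for $n, n' \geq 2$, so the case is generically impossible. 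The degenerate situations $n'=1$ or $n=1$ are checked directly and place both intermediate vertices $i+1, i+2$ into $\{0, n\}$.

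The only real delicacy is the cyclic bookkeeping at the two seam vertices $0$ and $n$, where the two nonbacktracking paths (first from $0$ to $x$ of length $n$, then from $x$ back to $0$ of length $n'$) are glued into a single cycle and across which the nonbacktracking factors in $Q_x$ are \emph{not} imposed. Once this cyclic structure is carefully written down and the wrap-around is accounted for, the remainder of the argument is the elementary case check above.
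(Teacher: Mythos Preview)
Your argument is correct. Note, however, that the paper does not actually supply a proof of this lemma: it is stated as a summary of properties that are immediate from the construction. The even-size claim is literally the definition of $\scr G_{n,n'}$, and the nonbacktracking claim is the direct translation of the factors $\prod_{i=0}^{n-2}\ind{x_i\neq x_{i+2}}\prod_{i=n}^{n+n'-2}\ind{x_i\neq x_{i+2}}$ in $Q_x(\b x)$ to the lumping language; the paper leaves this translation to the reader.

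Your case analysis is exactly the verification one would write out if pressed. A couple of small remarks. In the adjacent case you can shorten the argument slightly: from $\{x_{v-1},x_v\}=\{x_v,x_{v+1}\}$ and $x_{v-1}\neq x_v$, $x_v\neq x_{v+1}$ one gets $x_{v-1}=x_{v+1}$ in one line, and then the missing nonbacktracking factors are precisely at $v\in\{0,n\}$. In the ``one edge apart'' case your dichotomy is the right one: the matching $x_{i+1}=x_{i+2}$ is killed by the step-size constraint, so one is forced into $x_i=x_{i+2}$ and $x_{i+1}=x_{i+3}$, and the two exception windows $\{n-1,n+n'-1\}$ and $\{n-2,n+n'-2\}$ overlap only when $n=1$ or $n'=1$; in those cases the two intermediate vertices are exactly $0$ and $n$, as you say. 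You might also note that the degenerate cases $n=0$ or $n'=0$ (which the expansion allows) collapse $\{0,n\}$ to a single seam and are covered by the same reasoning.
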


Next, we give an explicit expression for $V_x(\Gamma)$. We start by assigning to each lump $\gamma \in \Gamma$ an 
unordered pair of labels $\varrho_\gamma$. Then we pick a partition $\pi_\gamma$ of $\gamma$ into two subsets of equal 
size. Abbreviate these families as $\b \varrho_\Gamma = \{\varrho_\gamma\}_{\gamma \in \Gamma}$ and $\b \pi_\Gamma = 
\{\pi_\gamma\}_{\gamma \in \Gamma}$.
Thus we get
\begin{equation} \label{expression for V(Gamma)}
V_x(\Gamma) \;=\; \sum_{\b x} Q_x(\b x) \sum_{\b \varrho_\Gamma} \sum_{\b \pi_\Gamma} \pBB{\prod_{\gamma \in \Gamma} 
\Delta_{\b x}(\varrho_\gamma, \pi_\gamma)} \pBB{\prod_{\gamma \neq \gamma'} \ind{\varrho_\gamma \neq \varrho_{\gamma'}}} 
\, \E H_{x_0 x_1} \cdots H_{x_{n + n' - 1} x_0}
\,.
\end{equation}
Here, for each $\gamma \in \Gamma$, $\varrho_\gamma$ ranges over all unordered pairs of labels and $\pi_\gamma$ ranges 
over all partitions of $\gamma$ into two subsets of equal size; $\Delta_{\b x}(\varrho_\gamma, \pi_\gamma)$ is the 
indicator function of the following event: For all $e \in \gamma$ we have that $\varrho_{\b x}(e) = \varrho_\gamma$, and
\begin{align*}
\text{$e,e' \in \gamma$ belong to the same subset of $\pi_\gamma$} &\qquad \Longrightarrow \qquad x_{a(e)} \;=\; 
x_{a(e')}\,, \quad x_{b(e)} \;=\; x_{b(e')}
\\
\text{$e,e' \in \gamma$ belong to different subsets of $\pi_\gamma$} &\qquad \Longrightarrow \qquad x_{a(e)} \;=\; 
x_{b(e')}\,, \quad x_{b(e)} \;=\; x_{a(e')}\,.
\end{align*}
This definition of $\Delta_{\b x}(\varrho_\gamma, \pi_\gamma)$ has the following interpretation.  All edges in $\gamma$ 
(corresponding to matrix elements) have the same unordered pair of labels (and hence represent copies of the same random 
variable $H_{yz}$ or its complex conjugate).  Moreover, each random variable $H_{yz}$ must appear as many times as its 
complex conjugate; random variables indexed by two edges $e,e' \in \gamma$ are identical if $e,e'$ belong to the same 
subset of $\pi_\gamma$, and each other's complex conjugates if $e,e'$ belong to different subsets of $\pi_\gamma$.

Note that the expectation in \eqref{expression for V(Gamma)} is equal to
\begin{equation} \label{expectation of monomial}
\frac{1}{(M - 1)^{\bar n}}\,,
\end{equation}
where
$\bar n \deq \frac{n + n'}{2}$.
In particular, $V_x(\Gamma) \geq 0$.

\begin{figure}[ht!]
\begin{center}
\includegraphics{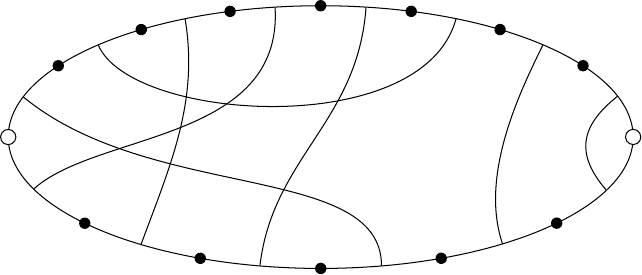}
\end{center}
\caption{A pairing of edges.\label{figure: pairings}}
\end{figure}
An important subset of lumpings of $\cal E_{n,n'}$ is the set of pairings, $\scr P_{n,n'} \subset \scr G_{n,n'}$, which 
contains all lumpings $\Gamma$ satisfying $\abs{\gamma} = 2$ for all $\gamma \in \Gamma$.  We call two-element lumps 
$\sigma \in \cal P_{n,n'}$ \emph{bridges}.  Given a pairing $\Gamma \in \scr P_{n,n'}$, we say that $e$ and $e'$ are 
\emph{bridged} (in $\Gamma$) if there is a $\sigma \in \Gamma$ such that $\sigma = \{e, e'\}$.  Bridges are represented 
graphically by drawing a line, for each $\{e,e'\} \in \Gamma$, from the edge $e$ to $e'$; see Figure \ref{figure: 
pairings}.  Thus a pairing $\Gamma \in \scr P_{n,n'}$ is the edge set of a graph whose vertex set is $\cal E_{n,n'}$. If 
$\Gamma$ is a pairing, each bridge $\sigma \in \Gamma$ has a unique partition $\pi_\sigma$ of its edges, so that the 
expression \eqref{expression for V(Gamma)} for $V_x(\Gamma)$ may be rewritten in the simpler form
\begin{multline} \label{expression for V for pairings}
V_x(\Gamma) \;=\; \sum_{\b x} Q_x(\b x) \pBB{\prod_{\{e, e'\} \in \Gamma} \ind{x_{a(e)}  = x_{b(e')}} \ind{x_{b(e)} =  
x_{a(e')}} }
\\
\times \pBB{\prod_{\sigma \neq \sigma'} \prod_{e \in \sigma} \prod_{e' \in \sigma'} \ind{\varrho_{\b x}(e) \neq 
\varrho_{\b x}(e')}} \frac{1}{(M - 1)^{\bar n}} \,.
\end{multline}

The main contribution to the expansion is given by the \emph{ladder pairing} $L_n \in \scr P_{n,n}$.  It is defined as
\begin{equation*}
L_n \;\deq\; \hb{\{e_0, e_{2n - 1}\}, \{e_1, e_{2n - 2}\}, \dots, \{e_{n - 1}, e_n\}}\,.
\end{equation*}
The ladder is represented graphically in Figure \ref{figure: ladder}.
\begin{figure}[ht!]
\begin{center}
\includegraphics{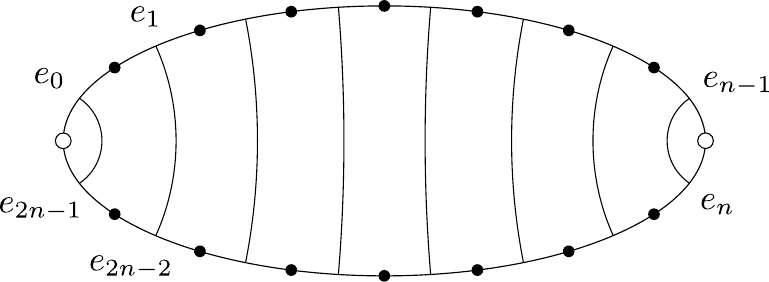}
\end{center}
\caption{The ladder pairing.\label{figure: ladder}}
\end{figure}
%From \eqref{expression for V(Gamma)} we readily get
%\begin{equation} \label{value of the ladder}
%V_x(L_n) \;=\; \sum_{\b x} Q_x(\b x) \prod_{0 \leq i < j \leq n - 1} \ind{\varrho_{\b x}(e_i) \neq \varrho_{\b 
%x}(e_j)}\, \frac{1}{\q{4 (M - 1)}^n}\,.
%\end{equation}

\section{The non-ladder lumpings} \label{section: error estimates}
In this section we estimate the contribution of the non-ladder lumpings and show that it vanishes in the limit $W \to 
\infty$. Let $\scr G_{n,n'}^* \subset \scr G_{n,n'}$ denote the set of non-ladder lumpings, i.e.\ $\scr G^*_{n,n'} \deq 
\scr G_{n,n'}$ if $n \neq n'$ and $\scr G^*_{n,n} \deq \scr G_{n,n} \setminus \{L_n\}$.  Similarly, let $\scr P_{n,n'}^* 
\deq \scr P_{n,n'} \cap \scr G^*_{n,n'}$ denote the set of non-ladder pairings.

We shall prove the following result.
\begin{proposition} \label{proposition: main estimate on error}
Let $0 < \kappa < 1/3$ and pick a $\beta$ satisfying $0 < \beta < 2/3 - 2\kappa$. Then there is a constant $C$ such that
\begin{equation*}
\sum_x \sum_{n,n \geq 0} \abs{a_n(\eta T) \, a_{n'}(\eta T)} \sum_{\Gamma \in \scr G^*_{n,n'}} V_x(\Gamma) \;\leq\; 
\frac{C}{W^{d \beta}}\,,
\end{equation*}
for $W$ larger than some $W_0(T, \kappa)$ and $N \geq W^{1 + d /6}$.
\end{proposition}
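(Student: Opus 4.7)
The plan is to proceed in three stages: first truncate the $n, n'$ sums using Bessel asymptotics for $a_n(\eta T)$, then reduce from general lumpings to pairings, and finally estimate the non-ladder pairing contribution via a skeleton decomposition. Applying the classical bounds $|J_k(t)| \leq C \min(t^{-1/3}, 1)$ and the super-exponential decay $|J_k(t)| \leq (et/2k)^k$ for $k > t$ to the formula \eqref{formula for alpha}, and substituting into the definition of $a_n(\eta T)$ from Lemma~\ref{lemma: nonbacktracking path expansion}, I would show $|a_n(\eta T)| \leq C (\eta T)^{-1/2}$ for $n \leq 2 \eta T$ with negligible tail for $n > 2 \eta T$. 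Combined with the normalization \eqref{orthonormality of Chebyshev coefficients}, this restricts the effective summation to $n, n' \leq C \eta = C W^{d \kappa}$ and bounds $\sum_{n, n'} |a_n(\eta T) a_{n'}(\eta T)|$ by $O(\eta)$.

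Next I would split $\scr G^*_{n,n'}$ into the non-ladder pairings $\scr P^*_{n,n'}$ and the lumpings with at least one lump of size $\geq 4$. Inspecting \eqref{expression for V(Gamma)} and \eqref{expectation of monomial}, each lump of size $2s$ carries only one unordered label pair for $2s$ edges and thus forces $s - 1$ extra label coincidences compared with $s$ disjoint bridges carrying the same labels; this costs an additional factor $M^{-(s-1)}$. Higher-order lumpings are therefore uniformly bounded by $M^{-1}$ times the pairing contribution, so the analysis reduces to controlling $\sum_x \sum_{\Gamma \in \scr P^*_{n,n'}} V_x(\Gamma)$.

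For each $\Gamma \in \scr P^*_{n,n'}$ I would form its skeleton $\widetilde\Gamma$ by collapsing maximal runs of consecutive anti-parallel bridges (the local ladder pattern) into a single bridge, leaving a skeleton of some size $k \geq 2$. Inside each collapsed run, the label sums form an unrestricted random walk on $\Lambda_N$ with step size $\leq W$; by \eqref{deterministic norm} each step contributes a factor $M$ that exactly compensates one of the $(M-1)^{-1}$ factors coming from \eqref{expectation of monomial}. The $k$ surviving skeleton bridges, however, impose $k - 1$ extra geometric constraints, giving the key bound $\sum_x V_x(\Gamma) \leq C M^{-(k-1)}$. A combinatorial enumeration $N_k(\bar n)$ of non-ladder pairings with skeleton size $k$, exploiting the nonbacktracking constraint of Lemma~\ref{lemma: characterization of the set of partitions} to prune admissible topologies, yields a bound of the form $N_k(\bar n) \leq C^k \bar n^{c k}$; inserting it into the previous estimates and summing over skeleton sizes produces a geometric series of the type $M \sum_{k \geq 2} (C \eta^c / M)^k$ after the Bessel prefactor is applied. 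This series converges with the sharp decay $W^{-d \beta}$ precisely in the regime $\kappa < 1/3$, and the margin between $\kappa$ and $1/3$ is what produces the stated range $\beta < 2/3 - 2 \kappa$.

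The main obstacle is the sharp combinatorial bound on $N_k(\bar n)$ with the correct exponent $c$: a naive enumeration overshoots the threshold and loses the condition $\kappa < 1/3$. Extracting the optimal $c$ requires using both the nonbacktracking constraint of Lemma~\ref{lemma: characterization of the set of partitions} and the dispersive behaviour of the ladder-inserted random walks jointly, so as to restrict which bridge topologies can actually be realised in $\Lambda_N \subset \Z^d$. A secondary technical point will be verifying that the ladder-insertion sums genuinely saturate the deterministic normalization $(M-1)^{-\bar n}$ to leading order without a hidden multiplicative loss of order $C^{\bar n}$; this hinges on a quantitative random-walk return estimate for step distributions supported on $d$-dimensional balls of radius $W$, applied uniformly in $\bar n \leq C \eta$.
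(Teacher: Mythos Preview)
Your overall architecture—truncate in $n,n'$, reduce lumpings to pairings, collapse parallel ladder rungs into a skeleton, then sum over skeleton sizes—is exactly the paper's. But the central estimate is wrong, and the gap sits precisely where the threshold $1/3$ originates.

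You assert that a skeleton with $k$ bridges satisfies $\sum_x V_x(\Gamma)\leq C M^{-(k-1)}$. This is false. The paper's Section~\ref{sec:further} exhibits explicit skeleton pairings $\Sigma_k$ with $\bar m = 6k+1$ bridges whose value is of order $M^{-2k}\approx M^{-\bar m/3}$, not $M^{-(\bar m-1)}$. The correct mechanism is the \emph{$2/3$ rule} (Lemma~\ref{lemma: bound on number of orbits}): the bridges of a pairing induce a permutation $\tau$ on the vertex set, and the free label summations are indexed by the orbits of $\tau$. In a skeleton (no parallel bridges, and by nonbacktracking no adjacent bridges away from $\{0,m\}$) every orbit not meeting $\{0,m\}$ contains at least three vertices; hence the number $L(\Sigma)$ of free orbits is at most $\tfrac{2\bar m}{3}+\tfrac13$, and only $\bar m - L(\Sigma)\geq \tfrac{\bar m}{3}-\tfrac13$ bridges are genuinely constrained. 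An $\ell^1$--$\ell^\infty$ summation over the associated multigraph then gives $\sum_x R_x(\Gamma)\lesssim M^{-\bar m/3+O(1)}$. Combined with the crude skeleton count $(2\bar m-1)!!\leq 2^{\bar m}\bar m!$ and the ladder-insertion count $\binom{p-1}{\bar m-1}\lesssim p^{\bar m-1}/(\bar m-1)!$, this produces $\sum_{\bar m\geq 2}(Cp/M^{1/3})^{\bar m}$, convergent precisely for $p\lesssim M^{1/3}$, i.e.\ $\kappa<1/3$.

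You locate the main obstacle in the enumeration $N_k(\bar n)$; in fact the enumeration is the easy part and the per-skeleton power counting is the hard part. Your own numbers are internally inconsistent: with your claimed $M^{-(k-1)}$ and $N_k(\bar n)\lesssim (C\bar n)^k$, the resulting series $\sum_k (C\eta/M)^k$ would converge for all $\kappa<1$, not merely $\kappa<1/3$. Two smaller points: the reduction from lumpings to pairings (Lemma~\ref{lemma: sum over non-pairings}) does not gain an extra factor $M^{-1}$—one refines each lump into bridges and simply drops the inter-lump disjointness constraint, bounding $V_x$ by the relaxed quantity $R_x$; and the control of $a_n$ in the paper uses $\sum_n|a_n(t)|^2=1+O(M^{-1})$ with Cauchy--Schwarz below the cutoff and the factorial bound $|a_n(t)|\leq Ct^n/n!$ above it, rather than the pointwise $t^{-1/2}$ Bessel estimate you propose.
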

The rest of this section is devoted to the proof of Proposition \ref{proposition: main estimate on error}.

\subsection{Controlling the non-pairings}
Replacing the expectation in \eqref{expression for V(Gamma)} with \eqref{expectation of monomial} we get
\begin{equation*}
V_x(\Gamma) \;=\; \sum_{\b x} Q_x(\b x) \sum_{\b \varrho_\Gamma} \sum_{\b \pi_\Gamma} \pBB{\prod_{\gamma \in \Gamma} 
\Delta_{\b x}(\varrho_\gamma, \pi_\gamma)} \pBB{\prod_{\gamma \neq \gamma'} \ind{\varrho_\gamma \neq \varrho_{\gamma'}}} 
\, \frac{1}{(M - 1)^{\bar n}} \,.
\end{equation*}
We start by estimating the sum over all lumpings $\Gamma \in \scr G_{n,n'}^*$ in terms of a sum over all pairings 
$\Gamma \in \scr P_{n,n'}^*$. Let us define
\begin{equation} \label{definition of R}
R_x(\Gamma) \;\deq\; \sum_{\b x} Q_x(\b x) \sum_{\b \varrho_\Gamma} \sum_{\b \pi_\Gamma} \pBB{\prod_{\gamma \in \Gamma} 
\Delta_{\b x}(\varrho_\gamma, \pi_\gamma)} \, \frac{1}{(M - 1)^{\bar n}} \,.
\end{equation}

\begin{lemma} \label{lemma: sum over non-pairings}
For all $n,n' \in \N$ we have
\begin{equation*}
\sum_{\Gamma \in \scr G_{n,n'}^*} V_x(\Gamma) \;\leq\; \sum_{\Gamma \in \scr P_{n,n'}^*} R_x(\Gamma)\,.
\end{equation*}
\end{lemma}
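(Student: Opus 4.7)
The plan is to reinterpret both sides as weighted counts of label configurations $\b x$ and then exhibit a combinatorial assignment that maps each $\b x$ contributing to the left-hand side to at least one non-ladder pairing contributing to the right-hand side.

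First I would unpack what $V_x(\Gamma)$ and $R_x(\Gamma)$ actually count. Because $Q_x$ forces $x_i \neq x_{i+1}$, on any lump $\gamma$ the edges can carry at most two ordered pairs of labels, namely $(u, v)$ and $(v, u)$ with $u \neq v$. Call $\b x$ \emph{balanced for $\Gamma$} if within every $\gamma \in \Gamma$ the edges use only one unordered pair $\{u_\gamma, v_\gamma\}$ with the two orientations appearing in equal numbers. For such $\b x$ there is a \emph{unique} $(\b\varrho_\Gamma, \b\pi_\Gamma)$ making $\prod_\gamma \Delta_{\b x}(\varrho_\gamma, \pi_\gamma) = 1$ ($\varrho_\gamma = \{u_\gamma, v_\gamma\}$ and $\pi_\gamma$ is the orientation bipartition), so combining with \eqref{expectation of monomial},
\begin{equation*}
R_x(\Gamma) \;=\; \frac{1}{(M-1)^{\bar n}} \, \absb{\h{\b x \st Q_x(\b x) = 1,\; \b x \text{ balanced for } \Gamma}}\,.
\end{equation*}
The distinctness factor in \eqref{expression for V(Gamma)} further forces the natural lumping $\Gamma(\b x)$ (grouping edges with equal $\varrho_{\b x}$) to coincide with $\Gamma$, so
\begin{equation*}
V_x(\Gamma) \;=\; \frac{1}{(M-1)^{\bar n}} \, \absb{\h{\b x \st Q_x(\b x) = 1,\; \Gamma(\b x) = \Gamma,\; \b x \text{ balanced for } \Gamma}}\,,
\end{equation*}
and in particular each $\b x$ is counted by $V_x(\Gamma)$ for at most one $\Gamma$, namely $\Gamma = \Gamma(\b x)$.

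Next I construct the refinement. Fix any $\b x$ on the LHS, so $\Gamma(\b x) \in \scr G^*_{n,n'}$ and $\b x$ is balanced for $\Gamma(\b x)$. On each lump $\gamma$ of size $2k_\gamma$, balance furnishes a canonical bipartition $\gamma = A_\gamma \sqcup B_\gamma$ with $\abs{A_\gamma} = \abs{B_\gamma} = k_\gamma$, according to orientation of labels. Any family of bijections $\phi_\gamma \st A_\gamma \to B_\gamma$, one per $\gamma$, produces a pairing
\begin{equation*}
\tilde\Gamma \;=\; \bigcup_\gamma \hb{\h{e, \phi_\gamma(e)} \st e \in A_\gamma}
\end{equation*}
that refines $\Gamma(\b x)$. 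The bridge condition of $\tilde\Gamma$ holds for this $\b x$ by construction, so $\b x$ is balanced for $\tilde\Gamma$. Since each bridge of $\tilde\Gamma$ lies inside a single lump of $\Gamma(\b x)$, the nonbacktracking property of Lemma~\ref{lemma: characterization of the set of partitions} is inherited, and hence $\tilde\Gamma \in \scr P_{n,n'}$. The total number of such refinements is $\prod_\gamma k_\gamma!$.

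The final step is to select a refinement in $\scr P^*_{n,n'}$, i.e.\ not equal to $L_n$ when $n = n'$. If $\Gamma(\b x)$ is itself a pairing, the unique refinement equals $\Gamma(\b x) \in \scr G^*_{n,n'} \cap \scr P_{n,n'} = \scr P^*_{n,n'}$ and we are done. Otherwise some lump has $k_\gamma \geq 2$, so $\prod_\gamma k_\gamma! \geq 2$, and since at most one refinement can coincide with $L_n$, at least one lies in $\scr P^*_{n,n'}$. Picking such a $\tilde\Gamma(\b x)$ for each $\b x$ gives
\begin{equation*}
\absb{\h{\b x \st Q_x(\b x) = 1,\; \Gamma(\b x) \in \scr G^*_{n,n'}}} \;\leq\; \sum_{\tilde\Gamma \in \scr P^*_{n,n'}} \absb{\h{\b x \st Q_x(\b x) = 1,\; \b x \text{ balanced for } \tilde\Gamma}}\,,
\end{equation*}
which after division by $(M-1)^{\bar n}$ is exactly the asserted inequality. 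The main (and essentially only) subtlety is this final ladder-avoidance step, which is resolved cleanly by the observation that a non-pairing lumping has at least two distinct pairing refinements.
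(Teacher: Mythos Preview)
Your proof is correct and follows essentially the same approach as the paper's: both refine each lump $\gamma$ into bridges compatible with the orientation bipartition $\pi_\gamma$, observe that the lump indicator is bounded by the product of bridge indicators, and handle ladder-avoidance by noting that a non-pairing lump admits at least two compatible pairings. Your presentation is more explicit about reinterpreting both sides as counts of label configurations $\b x$, while the paper works directly at the level of the indicators $\Delta_{\b x}(\varrho_\gamma,\pi_\gamma)$, but the underlying mechanism is identical.
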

\begin{proof}
Let $\varrho_\gamma$ and $\pi_\gamma$ be given for each $\gamma \in \Gamma$. For each $\gamma$, pick any pairing 
$\Sigma_\gamma$ of $\gamma$ that is compatible with $\pi_\gamma$ in the sense that, for each bridge $\sigma \in 
\Sigma_\gamma$, the two edges of $\sigma$ belong to different subsets of $\pi_\gamma$. If $n = n'$, we additionally 
require that not all $\Sigma_\gamma$'s are subsets of the Ladder $L_n$ (such a choice is always possible). Next, set 
$\varrho_\sigma = \varrho_\gamma$ for all $\sigma \in \Sigma_\gamma$. Note that each bridge $\sigma$ carries a unique 
partition $\pi_\sigma$. It is then easy to see that for any pairing $\Sigma_\gamma$ as above, we have
\begin{equation*}
\Delta_{\b x}(\varrho_\gamma, \pi_\gamma) \;\leq\; \prod_{\sigma \in \Sigma_\gamma} \Delta_{\b x} (\varrho_\sigma, 
\pi_\sigma)\,.
\end{equation*}
Thus, by partitioning each $\gamma \in \Gamma$ into bridges, we see that each term in $
\sum_{\Gamma \in \scr G_{n,n'}^*} V_x(\Gamma)$ is bounded by a corresponding term in
$\sum_{\Gamma \in \scr P_{n,n'}^*} R_x(\Gamma)$. In fact, there is an overcounting arising from the different ways of 
partitioning $\gamma$ into bridges.
\end{proof}

Because of Lemma \ref{lemma: sum over non-pairings} we may restrict ourselves to pairings. We 
estimate
$\sum_{\Gamma \in \scr P_{n,n'}^*} R_x(\Gamma)$.
If $\Gamma$ is a pairing we may write, just like \eqref{expression for V for pairings}, the expression \eqref{definition 
of R} in the simpler form
\begin{equation} \label{expression for R for pairings}
R_x(\Gamma) \;=\; \sum_{\b x} Q_x(\b x) \pBB{\prod_{\{e, e'\} \in \Gamma} \ind{x_{a(e)}  = 
x_{b(e')}} \ind{x_{b(e)} =  x_{a(e')}} } \, \frac{1}{(M - 1)^{\bar n}} \,.
\end{equation}

\subsection{Collapsing of parallel bridges}
Let us introduce the set $\ol{\scr P}^*_{n,n'}$, defined as the set of all non-ladder pairings 
of $\cal E_{n,n'}$. Clearly, $\scr P^*_{n,n'}$ is a proper subset of $\ol{\scr P}^*_{n,n'}$ 
(due to the nonbacktracking condition of Lemma \ref{lemma: characterization of the set of 
partitions} which is imposed on pairings in $\scr P^*_{n,n'}$).

Let $n,n' \geq 0$ and $\Gamma \in \ol{\scr P}^*_{n,n'}$. For any $i,j$, we say that the two bridges $\{e_i, e_j\}$ and 
$\{e_{i + 1}, e_{j - 1}\}$ of $\Gamma$ are \emph{parallel} if $i+1, j \notin \{0, n\}$; see Figure \ref{figure: parallel 
bridges}. Two parallel bridges may be collapsed to obtain a new pairing $\Gamma'$ of a smaller set of edges, in which 
the parallel bridges are replaced by a single bridge.  More precisely: We obtain $\Gamma' \in \ol{\scr P}^*_{m,m'}$ from 
$\Gamma \in \ol{\scr P}^*_{n,n'}$ by removing the vertices $i+1$ and $j$, by creating the edges $(i, i + 2)$ and $(j - 
1, j + 1)$, and by bridging them.  Finally, we rename the vertices using the increasing integers $0, 1,2,\dots, n+n' - 
3$; by definition, the new name of the vertex $n$ is $m$, and $m'$ is defined through $m+m'+2 = n+n'$.
\begin{figure}[ht!]
\begin{center}
\includegraphics{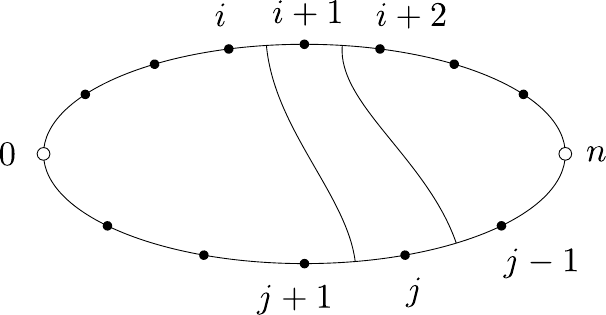}
\end{center}
\caption{Two parallel bridges.\label{figure: parallel bridges}}
\end{figure}
The converse operation of collapsing bridges, \emph{expanding bridges}, is self-explanatory.

In the next lemma we iterate the above procedure $\Gamma \mapsto \Gamma'$ until all parallel 
bridges have been collapsed. 

\begin{lemma} \label{lemma: collapsing bridges}
Let $\Gamma \in \scr P^*_{n,n'}$. Then there exist $m \leq n$, $m' \leq n'$, and a pairing
$S(\Gamma) \in \ol{\scr G}^*_{m,m'}$ containing no parallel bridges, such that $\Gamma$ may be 
obtained from $S(\Gamma)$ by successively expanding bridges. This defines $S(\Gamma)$ uniquely.  
\end{lemma}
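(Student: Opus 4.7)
The plan is to establish both existence and uniqueness by a structural/confluence argument on the collapsing operation.

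For existence, I would simply iterate: starting from $\Gamma \in \scr P^*_{n,n'}$, pick any pair of parallel bridges and collapse them, obtaining a pairing on a smaller edge set. Each collapse decreases $n + n'$ by exactly $2$ (removing two vertices and replacing two bridges by one), and it preserves the two distinguished vertices $0$ and $n$ since the definition of ``parallel'' explicitly excludes the pairs $\{e_i, e_j\}, \{e_{i+1}, e_{j-1}\}$ for which $i+1$ or $j$ is $0$ or $n$. Hence after each step the new indices of the two distinguished vertices are well-defined, the resulting object is a pairing of $\cal E_{m,m'}$ for some $m \leq n$, $m' \leq n'$, and lies in $\ol{\scr P}^*_{m,m'}$ (the nonbacktracking condition may be lost in the process, which is precisely why the target set is $\ol{\scr P}^*$ and not $\scr P^*$). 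The process terminates after finitely many steps; when it does, no parallel bridges remain by definition.

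For uniqueness, I would prove local confluence (a diamond lemma): if $\Gamma$ admits two distinct collapsible pairs of parallel bridges $P_1 = (\{e_i,e_j\},\{e_{i+1},e_{j-1}\})$ and $P_2 = (\{e_k,e_\ell\},\{e_{k+1},e_{\ell-1}\})$, then collapsing in either order yields the same result. Split into cases according to how $P_1$ and $P_2$ interact. If the four bridges of $P_1 \cup P_2$ are all distinct and edge-disjoint, the two collapses act on disjoint portions of $\cal V_{n,n'}$, so they trivially commute. If $P_1$ and $P_2$ share a bridge, then the bridges involved form a maximal consecutive ``band'' of parallel bridges $B_1, B_2, \dots, B_r$ (each $(B_s, B_{s+1})$ parallel); here the two possible collapses both reduce the band to $B_1, \dots, B_{r-1}$ (or any configuration of $r-1$ consecutive parallel bridges with the outer endpoints fixed), and an inspection shows the resulting pairings are identical up to the canonical relabeling of vertices. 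The final case, where the bridges overlap partially but not through being part of a single band, can be ruled out by checking that two collapsible parallel pairs sharing only one edge must lie in such a band.

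By standard abstract rewriting (Newman's lemma), local confluence of a terminating reduction implies global confluence, so the final pairing $S(\Gamma)$ depends only on $\Gamma$, not on the sequence of collapses. This gives uniqueness.

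The main obstacle is the careful case analysis in the confluence step, particularly tracking indices correctly through the relabeling and ensuring the distinguished vertices $0$ and $n$ (respectively their images) are handled consistently; the constraint $i+1, j \notin \{0, n\}$ in the definition of ``parallel'' is what makes the distinguished vertices stable under collapsing and thereby guarantees the output sits in $\ol{\scr P}^*_{m,m'}$ with a well-defined $m$. All other parts of the argument are essentially combinatorial bookkeeping.
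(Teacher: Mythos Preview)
Your proposal is correct and follows the same approach as the paper: iteratively collapse parallel bridges until none remain. The paper's own proof is a single sentence (``Successively collapse all parallel bridges in $\Gamma$. The result is clearly independent of the order in which this is done''), so your confluence argument via Newman's lemma is a genuine elaboration rather than a different route. Your case analysis is sound: two collapsible pairs either share a bridge (forcing three consecutive parallel bridges, which rejoin after one more collapse on each side) or are edge-disjoint (since $\Gamma$ is a pairing, sharing an edge forces sharing a bridge), in which case they remove disjoint vertex pairs and hence commute. One small imprecision: in the shared-bridge case the two one-step results are not literally identical pairings (different vertices get removed), but they do rejoin after one further collapse each, which is all local confluence requires. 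You might also remark that the skeleton cannot be the ladder, since expanding any bridge of a ladder yields a ladder, contradicting $\Gamma \in \scr P^*_{n,n'}$; this is what places $S(\Gamma)$ in $\ol{\scr P}^*_{m,m'}$ rather than merely $\ol{\scr P}_{m,m'}$.
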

\begin{proof}
Successively collapse all parallel bridges in $\Gamma$; see Figure \ref{figure: collapsing 
bridges}. The result is clearly independent of the order in which this is done.
\begin{figure}[ht!]
\begin{center}
\includegraphics{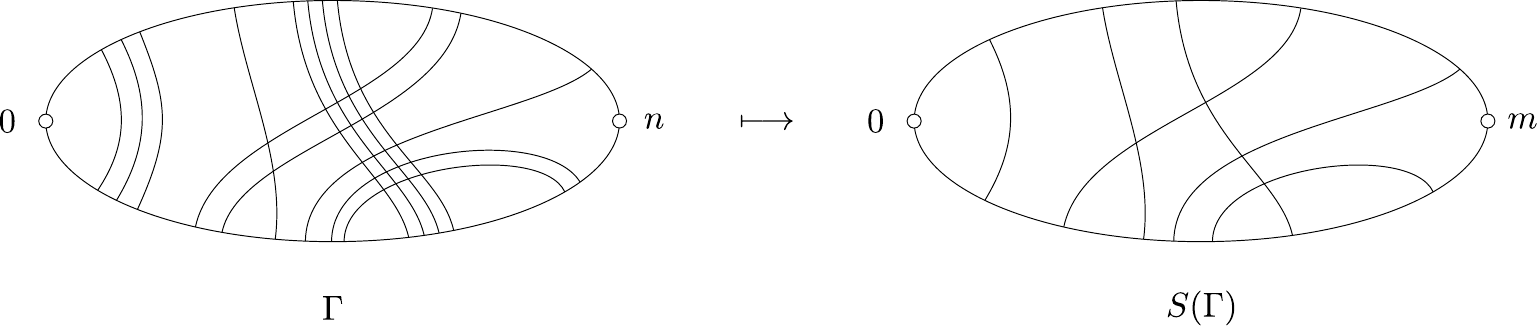}
\end{center}
\caption{Collapsing parallel bridges to obtain the skeleton pairing. \label{figure: collapsing 
bridges}}
\end{figure}
\end{proof}

We call the pairing $\Sigma = S(\Gamma)$ the \emph{skeleton} of $\Gamma$. The set of skeleton 
pairings of the edges $\cal E_{m,m'}$ is denoted by
\begin{equation*}
\scr S^*_{m,m'} \;\deq\; \bigcup_{n,n' \geq 0} \hb{S(\Gamma) \,:\, \Gamma \in \scr P^*_{n,n'}} 
\cap \ol{\scr P}^*_{m,m'}\,.
\end{equation*}
Note that $\scr S^*_{m,m'}$ is in general not a subset of $\scr P^*_{m,m'}$. The following 
lemma summarizes the key properties of $\scr S^*_{m,m'}$.

\begin{lemma} \label{lemma: properties of skeleton pairings}
\begin{enumerate}
\item
Each $\Sigma \in \scr S^*_{m,m'}$ contains no parallel bridges.
\item
Let $\Sigma \in \scr S^*_{m,m'}$ and $\sigma = \{e, e'\} \in \Sigma$. Then $e,e'$ are adjacent 
only if $e\cap e' \in \{0,m\}$.
\item
If $\bar m \deq \frac{m+m'}{2} = 1$ then $\scr S^*_{m,m'} = \emptyset$.
\end{enumerate}
\end{lemma}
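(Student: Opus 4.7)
Part (i) is immediate from Lemma \ref{lemma: collapsing bridges}: the skeleton $S(\Gamma)$ is by definition the terminal pairing produced when no further parallel bridges can be collapsed.

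For part (ii), the strategy is to trace the skeleton bridge $\sigma = \{e, e'\} \in \Sigma = S(\Gamma)$ back through the collapsing procedure to $\Gamma \in \scr P^*_{n,n'}$. The bridge $\sigma$ arises from a unique maximal stack of nested parallel bridges $\{e_{a + k}, e_{b - k}\}_{k = 0, \dots, K - 1}$ of $\Gamma$; after collapsing the stack, the two edges of $\sigma$ have, in the original labelling, endpoints $\{a, a + K\}$ and $\{b - K + 1, b + 1\}$. If $e \cap e' \ni v$, two of these four endpoints must coincide, and this can happen in only two geometric ways: either an \emph{outer} coincidence $a \equiv b + 1 \pmod{n + n'}$, which means that $e_a$ and $e_b$ are already adjacent in $\Gamma$; or an \emph{inner} coincidence $a + K = b - K + 1$. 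In the outer case, Lemma \ref{lemma: characterization of the set of partitions} applied to the lump $\{e_a, e_b\} \in \Gamma$ forces the shared vertex to lie in $\{0, n\}$. In the inner case, either the stack terminated because the next candidate $\{e_{a + K}, e_{b - K}\}$ is not a bridge of $\Gamma$, so that the innermost bridge $\{e_{a + K - 1}, e_{b - K + 1}\}$ has adjacent edges at $v = a + K$ and Lemma \ref{lemma: characterization of the set of partitions} again forces $v \in \{0, n\}$; or else the stack terminated because the barrier condition $i + 1, j \notin \{0, n\}$ failed with $a + K \in \{0, n\}$ or $b - K + 1 \in \{0, n\}$, which directly places $v \in \{0, n\}$. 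Since vertices $0$ and $n$ are never removed during collapsing and are relabelled as $0$ and $m$, we conclude $v \in \{0, m\}$.

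For part (iii), suppose $\bar m = 1$, so $\Sigma$ has a single bridge on a loop of two edges. Reversing the collapse, the $\bar n$ bridges of $\Gamma$ must form a single nested stack spanning the entire loop. Since vertices $0$ and $n$ are preserved under collapsing, $m = 0 \Leftrightarrow n = 0$ and $m' = 0 \Leftrightarrow n' = 0$, leaving three subcases. For $(m, m') = (1, 1)$ we have $n = n'$, and the unique such full stack is precisely the ladder $L_n$, which is excluded from $\scr P^*_{n, n}$. For $(m, m') \in \{(0, 2), (2, 0)\}$, the innermost bridge $\{e_{\bar n - 1}, e_{\bar n}\}$ of the full stack has adjacent edges at the vertex $\bar n \geq 1$, which is not in $\{0, n\} = \{0\}$, contradicting Lemma \ref{lemma: characterization of the set of partitions}. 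Hence $\scr S^*_{m, m'} = \emptyset$. The main obstacle is the case analysis in part (ii): one must enumerate all geometric mechanisms by which a skeleton bridge can become self-adjacent and, for each, verify that the shared vertex traces back to one of the two protected vertices of $\Gamma$.
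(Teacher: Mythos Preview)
Your proof is correct. Parts (i) and (ii) follow the same route as the paper --- trace a skeleton bridge with adjacent edges back to a bridge in $\Gamma$ with adjacent edges and invoke Lemma~\ref{lemma: characterization of the set of partitions} --- except that you spell out the case analysis (outer vs.\ inner coincidence of the stack endpoints) that the paper compresses into the single sentence ``if $S(\Gamma)$ contains a bridge of consecutive edges, then so does $\Gamma$.'' Your observation that the two mixed coincidences $a = b-K+1$ and $a+K = b+1$ are ruled out by the disjointness of the two sides of the stack is a detail the paper leaves implicit.

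The one genuine difference is in (iii). The paper derives it in one line from (ii): when $\bar m = 1$ the unique bridge $\{e_0,e_1\}$ has adjacent edges at both vertices $0$ and $1$, so by (ii) both must lie in $\{0,m\}$; this forces $m = 1$ and hence $\Sigma = L_1$, which is excluded from $\ol{\scr P}^*_{1,1}$ by definition. You instead unwind the collapse again and argue case by case on $(m,m')$, showing that the preimage is either $L_n$ (excluded) or violates the nonbacktracking condition at the innermost bridge. Both work; the paper's route is shorter because it reuses (ii) rather than repeating the trace-back argument.
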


\begin{proof}
Statement (i) follows immediately from the definition of $S(\Gamma)$. Statement (ii) is a 
consequence of the nonbacktracking property of pairings in $\scr P^*_{n,n'}$, i.e.\ Lemma 
\ref{lemma: characterization of the set of partitions}. To see this, let $\Sigma \in \scr 
S^*_{m,m'}$ be of the form $\Sigma = S(\Gamma)$ for some $\Gamma \in \scr P^*_{n,n'}$. If 
$\Sigma = S(\Gamma)$ contains a bridge $\{e,e'\}$ consisting of two consecutive edges $e,e'$, 
then $\Gamma$ must also contain a bridge $\{f,f'\}$ consisting of two consecutive edges $f,f'$.  
If $e \cap e' \notin \{0,m\}$, then $f \cap f' \notin \{0, n\}$, in contradiction to Lemma 
\ref{lemma: characterization of the set of partitions}. Statement (iii) is an immediate 
consequence of (ii) and the requirement that $L_1 \notin \scr S^*_{1,1}$.
\end{proof}

\subsection{Contribution of parallel bridges}
For given $n$ and $n'$, we estimate $\sum_{\Gamma \in \scr P^*_{n,n'}} R_x(\Gamma)$ by
summing over skeleton pairings $\Sigma$, followed by summing over all possible ways of 
expanding the bridges of $\Sigma$.

We observe that a pairing $\Gamma \in \scr P^*_{n,n'}$ is uniquely determined by its skeleton 
$\Sigma = S(\Gamma) \in \scr S^*_{m,m'}$ for some positive integers $m,m'$ as well as a family 
$\b \ell_\Sigma = \{\ell_\sigma\}_{\sigma \in \Sigma}$ satisfying $\abs{\b \ell_\Sigma} = \bar 
n$, where $\ell_\sigma$ encodes the number of parallel bridges that were collapsed to form the 
bridge $\sigma$. Here $\ell_\sigma \geq 1$ is a positive integer and $\abs{\b \ell_\Sigma} \deq 
\sum_{\sigma \in \Sigma} \ell_\sigma$.  Let $G_{\b \ell_\Sigma} (\Sigma)$ denote the pairing 
obtained from $\Sigma$ by expanding the bridge $\sigma$ into $\ell_\sigma$ parallel bridges, 
for each $\sigma \in \Sigma$.  Thus $\Gamma$ may be recovered from its skeleton through $\Gamma 
= G_{\b \ell_\Sigma}(\Sigma)$ for a unique family $\b \ell_\Sigma$.
For given $p \in \N$, the sum over all pairings $\Gamma$ satisfying $\abs{\Gamma} = p$ 
therefore becomes
\begin{equation} \label{estimate on R in terms of skeletons}
\sum_{n + n' = 2p}\; \sum_{\Gamma \in \scr P_{n,n'}^*} R_x(\Gamma)
\;=\; \sum_{m + m' \leq 2p} \;
\sum_{\Sigma \in \scr S^*_{m ,m'}} \; \sum_{\b \ell_\Sigma \,:\, \abs{\b \ell_\Sigma} = p} 
R_x(G_{\b \ell_\Sigma}(\Sigma))\,.
\end{equation}

Next, we define and estimate the contribution to $R_x(\Gamma)$ of a set of $\ell$ parallel 
bridges.  Let $\ell \geq 1$, and two labels $y,z$ be given. Then we define
\begin{equation*}
D_\ell(y,z) \;\deq\; \sum_{x_0, \dots, x_\ell} \delta_{x_0 y} \delta_{x_\ell z}
\prod_{i = 0}^{\ell - 1} \ind{1 \leq \abs{x_i - x_{i + 1}} \leq W}\,.
\end{equation*}
Thus, $D_\ell(y,z)$ is equal to the number of paths of length $\ell$ from $y$ to $z$, whereby 
each step takes values in $\{x \,:\, 1 \leq \abs{x} \leq W\}$. (We could also have included the 
nonbacktracking restriction in the definition of $D_\ell$, but this is not needed as we only 
want an upper bound on $R_x(\Sigma)$). Graphically, $D_\ell$ corresponds to the contribution of 
$\ell$ parallel bridges; see Figure \ref{figure: many parallel bridges}.
\begin{figure}[ht!]
\begin{center}
\includegraphics{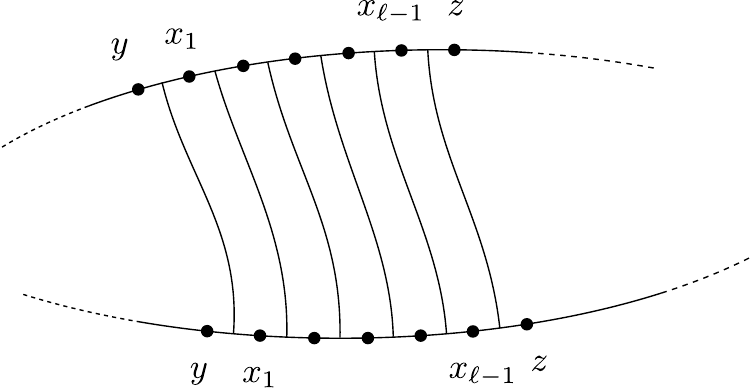}
\end{center}
\caption{Summing up $\ell$ parallel bridges. \label{figure: many parallel bridges}}
\end{figure}

We need the following straightforward properties of $D_\ell$.
\begin{lemma} \label{lemma: heat kernel bounds}
Let $\ell \in \N$. Then for each $y$ we have
\begin{equation*}
\sum_{z} D_\ell(y,z) \;=\; M^{\ell}\,.
\end{equation*}
Moreover, for each $y$ and $z$ we have
\begin{equation*}
D_\ell(y,z) \;\leq\; M^{\ell - 1}\,.
\end{equation*}
as well as
\begin{equation*}
D_\ell(y,z) \;\leq\; \frac{C}{\ell^{d/2}} M^{\ell - 1} + \frac{C}{N^d} M^{\ell}\,.
\end{equation*}
for some constant $C$.
\end{lemma}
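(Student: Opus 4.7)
The common theme is to view $D_\ell(y,z)$ as counting $\ell$-step walks on $\Lambda_N$ whose successive increments lie in the shell $B \deq \{x \in \Lambda_N \,:\, 1 \leq \abs{x} \leq W\}$, of cardinality $M$. Equivalently, introducing i.i.d.\ random variables $X_1, X_2, \dots$ uniformly distributed on $B$ and setting $S_\ell \deq X_1 + \cdots + X_\ell$ (the sum computed in $\Lambda_N$), we have
\begin{equation*}
D_\ell(y,z) \;=\; M^\ell \, \P(S_\ell = z - y)\,.
\end{equation*}

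Parts (i) and (ii) are elementary counting. For (i), summing over $z$ removes the final constraint $x_\ell = z$, leaving only the $\ell$ step constraints $1 \leq \abs{x_i - x_{i+1}} \leq W$; each step contributes an independent factor of $M$, yielding $M^\ell$. Equivalently, $\sum_w \P(S_\ell = w) = 1$. For (ii), fix $y$ and $z$, let the intermediate vertices $x_1, \dots, x_{\ell - 1}$ vary freely (each with $M$ choices once its predecessor is fixed), and note that the last step $x_{\ell - 1} \mapsto z$ is then determined and may or may not be admissible; this gives the bound $M^{\ell - 1}$.

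The content of the lemma is (iii), which I would prove as a local central limit theorem on the discrete torus $\Lambda_N$ via Fourier analysis. Let $\Lambda_N^*$ denote the dual lattice and set
\begin{equation*}
\phi(k) \;\deq\; \frac{1}{M} \sum_{x \in B} \ee^{\ii k \cdot x}\,, \qquad k \in \Lambda_N^*\,,
\end{equation*}
which is real and even because $B = -B$. Fourier inversion gives
\begin{equation*}
\P(S_\ell = w) \;=\; \frac{1}{N^d} \sum_{k \in \Lambda_N^*} \phi(k)^\ell \, \ee^{-\ii k \cdot w}\,,
\end{equation*}
so that separating the $k = 0$ contribution $1/N^d$ yields
\begin{equation*}
\P(S_\ell = w) \;\leq\; \frac{1}{N^d} + \frac{1}{N^d} \sum_{k \in \Lambda_N^* \setminus\{0\}} \abs{\phi(k)}^\ell\,.
\end{equation*}
Multiplication by $M^\ell$ then reduces (iii) to the bound
\begin{equation*}
\sum_{k \in \Lambda_N^* \setminus \{0\}} \abs{\phi(k)}^\ell \;\leq\; \frac{C N^d}{M \, \ell^{d/2}}\,.
\end{equation*}

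To establish this last estimate, I would use a two-regime estimate on $\phi$. For $\abs{k} W \leq \delta$ small, a Taylor expansion (using symmetry of $B$) gives $\phi(k) = 1 - \tfrac{1}{2M}\sum_{x \in B}(k \cdot x)^2 + O(\abs{k}^4 W^4)$, and since the covariance $\tfrac{1}{M}\sum_{x \in B} x x^\top$ is proportional to $W^2 \umat$ with a dimensional constant of order $1/(d+2)$, we obtain $\abs{\phi(k)} \leq \exp(-c\abs{k}^2 W^2)$. For $\abs{k} W \geq \delta$ (but $k$ in the fundamental domain of the continuous torus), the standard decay of the Fourier transform of the indicator of the shell yields $\abs{\phi(k)} \leq 1 - c'$ for some fixed $c' > 0$, and indeed the stronger bound $\abs{\phi(k)} \leq C(\abs{k}W)^{-\alpha}$ for some $\alpha > 0$, which combined with $\abs{\phi(k)}\leq 1-c'$ gives the exponential decay $\abs{\phi(k)}^\ell \leq \ee^{-c''\ell}$ in this regime. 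Summing via a Riemann-sum/Gaussian-integral comparison,
\begin{equation*}
\sum_{k \neq 0} \abs{\phi(k)}^\ell \;\leq\; C N^d \int_{\R^d} \ee^{-c \ell (\abs{k}^2 W^2 \wedge 1)} \dd k \;\leq\; \frac{C N^d}{(W^2 \ell)^{d/2}} \;=\; \frac{C N^d}{M \ell^{d/2}}\,,
\end{equation*}
which is the desired bound.

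The only real obstacle is the characteristic-function estimate for $\abs{k} W \gtrsim 1$: one must verify that $\abs{\phi(k)}$ is bounded away from $1$ uniformly in $k \neq 0$ on the discrete torus, which follows because $B$ contains the standard basis vectors $\pm e_1, \dots, \pm e_d$ and so $\phi(k) = 1$ forces $k \in 2\pi \Z^d$, i.e.\ $k = 0$ in $\Lambda_N^*$; a continuity/compactness argument then produces the uniform constant $c'$. For small $\ell$ (where $C/\ell^{d/2}$ saturates), the bound (iii) is already a consequence of (ii) up to enlarging the constant $C$, so there is no loss in considering $\ell$ large in the Fourier argument.
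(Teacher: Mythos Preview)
Your approach is correct and is precisely the ``standard local central limit theorem'' that the paper invokes (the paper gives no details beyond citing \cite{So1}). Parts (i) and (ii) are handled exactly as the paper says (``obvious''), and your Fourier-analytic argument for (iii) is the standard route.

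One technical point in your final step deserves care. The displayed bound
\[
\sum_{k \neq 0} \abs{\phi(k)}^\ell \;\leq\; C N^d \int_{\R^d} \ee^{-c\ell(\abs{k}^2 W^2 \wedge 1)}\,\dd k
\]
cannot be taken literally over $\R^d$ (the integrand equals $\ee^{-c\ell}$ for $\abs{k}W\geq 1$, so the integral diverges). Restricting to the Brillouin zone $[-\pi,\pi]^d$ gives instead
\[
\sum_{k\neq 0}\abs{\phi(k)}^\ell \;\leq\; \frac{CN^d}{M\ell^{d/2}} + CN^d \ee^{-c\ell}\,,
\]
and the second term is not dominated by the first when $\ell$ lies between a fixed constant and $c^{-1}\log M$; in that range your fallback to (ii) would force the constant $C$ in (iii) to grow like $(\log M)^{d/2}$. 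The fix is to actually use the power-law decay you mention, $\abs{\phi(k)}\leq C(\abs{k}W)^{-\alpha}$, in combination with $\abs{\phi(k)}\leq 1-c'$: writing $\abs{\phi(k)}^\ell \leq (1-c')^{\ell-d-1}\bigl(C/(\abs{k}W)\bigr)^{d+1}$ for $\abs{k}W\geq \delta$ makes the $k$-sum over that region bounded by $CN^d M^{-1}\ee^{-c''\ell}$, which is absorbed into $CN^d/(M\ell^{d/2})$ for all $\ell$ exceeding a fixed constant depending only on $d$. With this adjustment your argument goes through.
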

\begin{proof}
The first two statements are obvious. The last follows from a standard local central limit 
theorem; see for instance the proof in \cite{So1}.
\end{proof}

\subsection{Orbits of vertices} \label{subsection: orbit of vertices}
Fix $\Gamma \in \scr P^*_{n,n'}$. We observe that the product in \eqref{expression for R for 
pairings} may be interpreted as an indicator function that fixes labels along paths of 
vertices. To this end, we define a map $\tau \equiv \tau_{\,\Gamma}$ on the vertex set $\cal 
V_{n,n'}$. Start with a vertex $i \in \cal V_{n,n'}$.  Let $e$ be the outgoing edge of $i$ 
(i.e.\ $e = (i,i+1)$), and $e'$ the edge bridged by $\Gamma$ to $e$.  Then we define $\tau i$ 
as the final vertex of $e'$ (i.e.\ $e'=(\tau i - 1, \tau i)$).  Thus the product in 
\eqref{expression for R for pairings} may be rewritten as
\begin{equation*}
\prod_{\{e, e'\} \in \Gamma} \ind{x_{a(e)}  = x_{b(e')}} \ind{x_{b(e)} =  x_{a(e')}} \;=\; 
\prod_{i \in \cal V_{n,n'}} \delta_{x_i x_{\tau i}}\,.
\end{equation*}
Starting from any vertex $i \in \cal V_{n,n'}$ we construct a path $(i, \tau i, \tau^2 
i,\dots)$.  In this fashion the set of vertices is partitioned into orbits of $\tau$; see 
Figure \ref{figure: path construction}. Let $[i] \subset \cal V_{n,n'}$ denote the orbit of the 
vertex $i \in \cal V_{n,n'}$.
\begin{figure}[ht!]
\begin{center}
\includegraphics{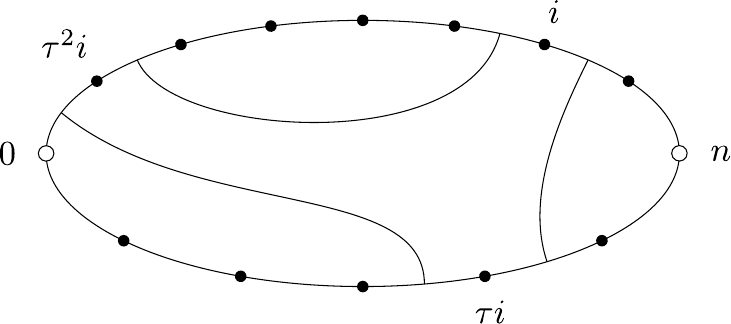}
\caption{Construction of the orbit $[i]$ of the vertex $i$. \label{figure: path construction}}
\end{center}
\end{figure}

Next, let $\Sigma = S(\Gamma) \in \scr S^*_{m,m'}$ be the skeleton pairing of $\Gamma$, and let the family $\b 
\ell_\Sigma$ be defined through $\Gamma = G_{\b \ell_\Sigma}(\Sigma)$. The map $\tau \equiv \tau_{\,\Sigma}$ on the 
skeleton pairing $\Sigma$ is defined exactly as for $\Gamma$ above. In order to sum over all labels $\b x = (x_0, \dots, 
x_{n+n'-1})$ in the expression for $R_x(G_{\b \ell_\Sigma}(\Sigma))$, we split the set of labels $\b x$ into two parts: 
labels of vertices between two parallel bridges, and labels associated with vertices of $\Sigma$. In order to make this 
precise, we need the following definitions.

Let $Z(\Sigma)$ be the set of orbits of $\Sigma$. It contains the distinguished orbits $[0]$ and $[m]$, which receive 
the labels $0$ and $x$ respectively. (Note that we may have $[0] = [m]$, in which case $x$ must be $0$.)  We assign a 
label $y_\zeta$ to each orbit $\zeta \in Z(\Sigma)$, and define the family $\b y_\Sigma \deq \{y_\zeta\}_{\zeta \in 
Z(\Sigma)}$.  Each bridge $\sigma \in \Sigma$ ``sits between two orbits'' $\zeta_1(\sigma)$ and $\zeta_2(\sigma)$ . More 
precisely, let $e = (i,i+1) \in \sigma$ be the smaller edge of $\sigma$. Then we set $\zeta_1(\sigma) \deq [i]$ and 
$\zeta_2(\sigma) \deq [i+1]$. (Note that using the larger edge of $\sigma$ in this definition would simply exchange 
$\zeta_1(\sigma)$ and $\sigma_2(\sigma)$; this is of no consequence for the following.)

\begin{lemma} \label{lemma: summing over orbit labels}
For given $\Sigma \in \scr S^*_{m,m'}$, $\b \ell_\Sigma$, and $\Gamma = G_{\b \ell_\Sigma}(\Sigma) \in \scr P^*_{n,n'}$ 
we have
\begin{equation} \label{bound in terms of lump indices}
R_x(\Gamma) \;\leq\; \frac{1}{(M - 1)^{\bar n}}
\sum_{\b y_{\Sigma}} \ind{0 = y_{[0]} } \ind{x = y_{[m]}} \prod_{\sigma \in \Sigma}
D_{\ell_\sigma}(y_{\zeta_1(\sigma)}, y_{\zeta_2(\sigma)})\,.
\end{equation}
\end{lemma}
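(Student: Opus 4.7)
The plan is to start from formula \eqref{expression for R for pairings}, drop the nonbacktracking indicators in $Q_x(\b x)$ (which only weakens the sum and hence preserves the desired inequality), and then exploit the geometric structure of $\Gamma = G_{\b \ell_\Sigma}(\Sigma)$ to decompose the label sum into skeleton labels and interior labels that can be summed out explicitly.

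More concretely, after dropping the nonbacktracking indicators, what remains inside $Q_x(\b x)$ is $\delta_{0 x_0}\delta_{x_n x}\prod_i \ind{1\le\abs{x_i-x_{i+1}}\le W}$, together with the bridge constraints $\prod_{\{e,e'\}\in\Gamma}\ind{x_{a(e)}=x_{b(e')}}\ind{x_{b(e)}=x_{a(e')}}$. As noted in Section~\ref{subsection: orbit of vertices}, the bridge constraints are equivalent to $\prod_i \delta_{x_i x_{\tau_{\,\Gamma} i}}$, so the labels are constant along orbits of $\tau_{\,\Gamma}$. Next, I would identify the vertex set of $\Gamma$ as the disjoint union of \emph{skeleton vertices} (in bijection with the vertices of $\Sigma$) and \emph{interior vertices} (those lying strictly between two consecutive parallel bridges inside some expanded bundle). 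The key observation, to be checked by induction on a single bridge collapse, is that $\tau_{\,\Gamma}$ restricted to skeleton vertices coincides with $\tau_{\,\Sigma}$; equivalently, expanding parallel bridges only creates new orbits made entirely of interior vertices and never merges or splits skeleton orbits. Consequently each orbit $\zeta\in Z(\Sigma)$ receives a single common label $y_\zeta$, and in particular the boundary conditions $\delta_{0x_0}\delta_{x_n x}$ force $\ind{0=y_{[0]}}\ind{x=y_{[m]}}$.

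For each bridge $\sigma\in\Sigma$ that has been expanded into $\ell_\sigma$ parallel bridges in $\Gamma$, the resulting bundle has a ``top'' row and a ``bottom'' row of $\ell_\sigma$ edges each, joined at skeleton endpoints lying in $\zeta_1(\sigma)$ and $\zeta_2(\sigma)$. The interior parallel bridges identify the top interior labels with the bottom interior labels in pairs, so the bundle contributes $\ell_\sigma-1$ genuinely independent interior labels. Summing these out, subject to the step constraints $1\le\abs{x_i-x_{i+1}}\le W$ along the top row (the bottom-row constraints become redundant after the identification), gives exactly $D_{\ell_\sigma}(y_{\zeta_1(\sigma)},y_{\zeta_2(\sigma)})$ by the very definition of $D_\ell$. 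Multiplying over all $\sigma\in\Sigma$ and pulling out the unchanged prefactor $1/(M-1)^{\bar n}$ yields \eqref{bound in terms of lump indices}.

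The main obstacle is the combinatorial/topological claim that $\tau_{\,\Gamma}$ and $\tau_{\,\Sigma}$ agree on skeleton vertices and that the interior labels of different bundles are truly decoupled; this requires a careful case analysis of how $\tau$ behaves at vertices adjacent to a bundle endpoint. A secondary nuisance is the bookkeeping for bundles whose endpoints happen to coincide with the distinguished vertices $0$ or $m$ (where the argument of Lemma~\ref{lemma: properties of skeleton pairings}(ii) allows a bridge to be short), but those cases only restrict the range of the $y_\zeta$'s without affecting the product of $D_{\ell_\sigma}$'s, so the stated upper bound still holds.
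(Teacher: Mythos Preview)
Your proposal is correct and follows essentially the same approach as the paper: drop the nonbacktracking indicators in $Q_x(\b x)$ to obtain an upper bound, sum out the interior labels within each bundle of parallel bridges to produce the factors $D_{\ell_\sigma}$, and replace the remaining sum over skeleton-vertex labels by the sum over $\b y_\Sigma$. The paper's proof states exactly this in three sentences without elaboration; your version supplies the supporting combinatorial check (that $\tau_{\,\Gamma}$ agrees with $\tau_{\,\Sigma}$ on skeleton vertices and that each expansion creates only new interior orbits), which is indeed straightforward to verify by induction on a single bridge expansion.
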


\begin{proof}
The left-hand side of \eqref{bound in terms of lump indices} is given by the expression \eqref{expression for R for 
pairings}. The summation over all $x_i$'s between parallel bridges of $\Gamma$ is contained in the factors $D_\ell$, and 
the summation over all the remaining $x_i$'s is replaced by the sum over $\b y_{\Sigma}$. We relaxed the nonbacktracking 
condition in $Q_x(\b x)$ to obtain an upper bound.
\end{proof}

Next, let $Z^*(\Sigma) \deq Z(\Sigma) \setminus \{[0]\}$ and define $L(\Sigma) \deq \abs{Z^*(\Sigma)}$. The set 
$Z^*(\Sigma)$ is the set of orbits whose label is summed over in $\sum_x R_x(\Gamma)$. The following lemma gives an 
upper bound on $L(\Sigma)$. It states, roughly, that the number of orbits (or free labels) is bounded by $2 \bar m / 3$; 
we refer to it as the $2/3$ rule. Compare this bound with the trivial bound $L(\Sigma) \leq \bar m$, which would be 
sharp if $\Sigma$ were allowed to have parallel bridges.

\begin{lemma}[The $2/3$ rule] \label{lemma: bound on number of orbits}
%Let $\Sigma \in \scr S_{m,m'}^*$. Then $L(\Sigma) \leq \frac{2 \bar m - 2}{3} + 1$.
Let $\Sigma \in \scr S_{m,m'}^*$. Then $L(\Sigma) \leq \frac{2 \bar m}{3} + \frac{1}{3}$.
\end{lemma}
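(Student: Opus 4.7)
The strategy is to attach to $\Sigma$ a quotient multigraph and read the bound off its degree sequence, using the absence of parallel bridges to force every short orbit to sit at one of the corners $0$ or $m$.

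First I would rewrite $\tau \equiv \tau_{\,\Sigma}$ as $\tau = \sigma \circ \pi$, where $\sigma \colon i \mapsto i+1$ is the cyclic shift on $\cal V_{m,m'}$ and $\pi$ is the involution with $\pi(i) = j$ iff $\{e_i, e_j\} \in \Sigma$. I would then introduce the multigraph $\bar G = \bar G(\Sigma)$ whose vertex set is $Z(\Sigma)$ and whose edges are indexed by the bridges of $\Sigma$: a bridge $\{e_i, e_j\}$ contributes an edge joining the orbits $[i] = [j+1]$ and $[i+1] = [j]$ (a loop if these two orbits coincide). In particular, $\bar G$ has $\abs{Z(\Sigma)}$ vertices and $\bar m$ edges. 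A direct endpoint count shows $\deg_{\bar G}(\zeta) = \abs{\zeta}$: each of the $\abs{\zeta}$ underlying cycle vertices contributes exactly two cycle edge-endpoints at $\zeta$, while a non-loop incident to $\zeta$ contributes two endpoints at $\zeta$ (one from each of its two cycle edges) and a loop contributes four, so the degree equals half the endpoint count.

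The main step is to show that any orbit of size $\leq 2$ must meet $\{0, m\}$. A fixed point $\tau(i) = i$ forces $\pi(i) = i - 1$, hence $\{e_{i-1}, e_i\} \in \Sigma$; these are adjacent edges meeting at the vertex $i$, and Lemma~\ref{lemma: properties of skeleton pairings}(ii) yields $i \in \{0, m\}$. A two-element orbit $\{i, j\}$ with $\tau(i) = j \neq i$ and $\tau(j) = i$ yields $\pi(i) = j-1$ and $\pi(j) = i-1$, so $\Sigma$ contains both $\{e_{i-1}, e_j\}$ and $\{e_i, e_{j-1}\}$; this is precisely the parallel-bridge pattern introduced before Lemma~\ref{lemma: collapsing bridges}, and the absence of parallel bridges in $\Sigma$ stated in Lemma~\ref{lemma: properties of skeleton pairings}(i) forces the corner exemption $i \in \{0, m\}$ or $j \in \{0, m\}$. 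Since $0$ and $m$ each lie in a unique orbit, at most two orbits of $\Sigma$ have size $\leq 2$.

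Combining these ingredients, the handshake identity for $\bar G$ reads
\begin{equation*}
2\bar m \;=\; \sum_{\zeta \in Z(\Sigma)} \deg_{\bar G}(\zeta) \;=\; \sum_{\zeta \in Z(\Sigma)} \abs{\zeta} \;\geq\; 3 \pb{\abs{Z(\Sigma)} - 2} + 2 \;=\; 3 \abs{Z(\Sigma)} - 4\,,
\end{equation*}
whence $\abs{Z(\Sigma)} \leq (2\bar m + 4)/3$ and therefore $L(\Sigma) = \abs{Z(\Sigma)} - 1 \leq (2\bar m + 1)/3 = 2\bar m/3 + 1/3$. I expect the hardest step to be the size-$2$ analysis: one has to recognise that the two bridges forced into $\Sigma$ by such an orbit form exactly the configuration that the skeletonisation procedure collapses whenever it is not shielded by the corner exemption $\{0,m\}$ in the definition of parallel bridges.
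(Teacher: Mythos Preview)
Your proof is correct and follows essentially the same approach as the paper: both establish that every orbit disjoint from $\{0,m\}$ has size at least $3$ (using Lemma~\ref{lemma: properties of skeleton pairings}(ii) to rule out fixed points and (i) to rule out $2$-cycles), and then count. The paper's version is more direct: it simply notes that the orbits in $Z(\Sigma)\setminus\{[0],[m]\}$ partition a subset of the $2\bar m-2$ vertices in $\cal V_{m,m'}\setminus\{0,m\}$, giving $3\abs{Z'(\Sigma)}\leq 2\bar m-2$ and hence $L(\Sigma)\leq\abs{Z'(\Sigma)}+1\leq(2\bar m+1)/3$. Your multigraph $\bar G$ (which coincides with the paper's $\Pi(\Sigma)$ introduced just after this lemma) and the handshake identity are not actually needed here, since $\sum_\zeta\abs{\zeta}=2\bar m$ is just the statement that the orbits partition $\cal V_{m,m'}$; the degree interpretation is a pleasant but inessential repackaging.
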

\begin{proof}
Let $Z'(\Sigma) \deq Z(\Sigma) \setminus \{[0], [m]\}$.
We show that every orbit $\zeta \in Z'(\Sigma)$ consists of at least 3 vertices.
Let $i \in \cal V_{m,m'}$ belong to $\zeta \in Z'(\Sigma)$. Then, by Lemma \ref{lemma: properties of skeleton pairings} 
(ii), we have that $\tau i \neq i$.  By assumption, $\tau i \notin \{0, m\}$.  Hence $\tau^2 i \neq i$, for otherwise 
$\Sigma$ would have two parallel bridges, in contradiction to Lemma \ref{lemma: properties of skeleton pairings} (i).  
Therefore the orbit of $\tau$ contains at least 3 vertices.  Note that there are orbits containing exactly 3 vertices, 
as depicted in Figure \ref{figure: path construction}.

The total number of vertices of $\Sigma$ not including the vertices $0$ and $m$ is $2 \bar m - 
2$, so that we get
\begin{equation*}
3 \abs{Z'(\Sigma)} \;\leq\; 2 \bar m - 2\,.
\end{equation*}
The claim follows from the bound $\abs{Z^*(\Sigma)} \leq \abs{Z'(\Sigma)} + 1$.
\end{proof}

\subsection{Bound on $R_x(\Gamma)$}
As in the previous subsection, we fix $\Gamma \in \scr P^*_{n,n'}$, $\Sigma = S(\Gamma) \in 
\scr S^*_{m,m'}$, and $\b \ell_\Sigma$ satisfying $\Gamma = G_{\b \ell_\Sigma}(\Sigma)$.

We start by observing that the product in \eqref{bound in terms of lump indices} may be rewritten in terms of a 
multigraph $\Pi(\Sigma)$ on the vertex set $Z(\Sigma)$. Each factor $D_{\ell_\sigma}(y_{\zeta_1(\sigma)}, 
y_{\zeta_2(\sigma)})$ yields an edge connecting the orbits $\zeta_1$ and $\zeta_2$. In other words, there is a 
one-to-one map, which we denote by $\phi$, between bridges of $\Sigma$ and edges of $\Pi(\Sigma)$; each bridge $\sigma 
\in \Sigma$ gives rise to an edge $\phi(\sigma)$ of $\Pi(\Sigma)$ connecting $\zeta_1(\sigma)$ and $\zeta_2(\sigma)$.  
See Figure \ref{figure: multigraph on orbits} for an example of such a multigraph.
\begin{figure}[ht!]
\begin{center}
\includegraphics{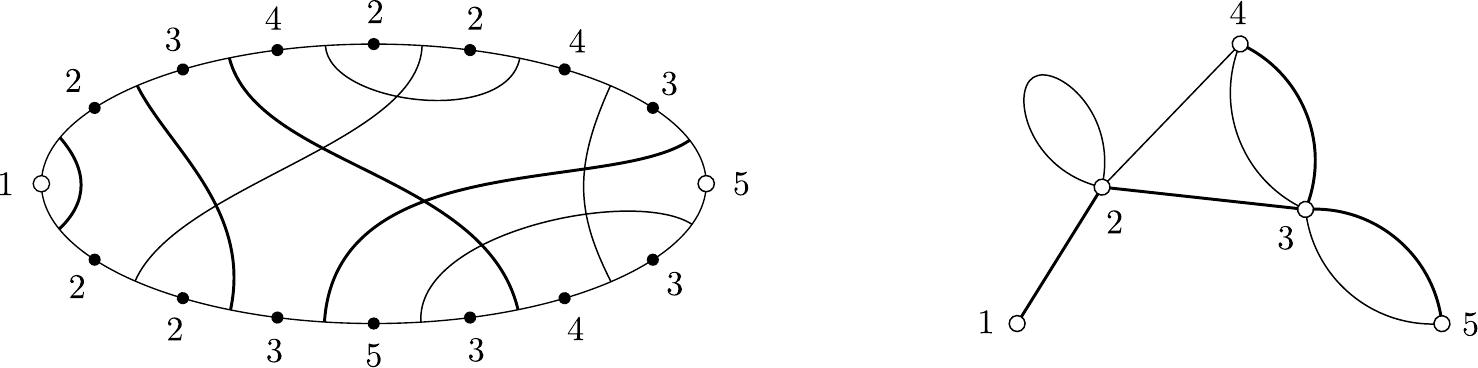}
\end{center}
\caption{Left: a skeleton pairing $\Sigma$ giving rise to $5$ orbits indexed by $Z(\Sigma) = \{1, \dots, 5\}$; the 
bridges in $\Sigma_T$, for one possible choice of $\Sigma_T$, are drawn using thick lines. Right: the corresponding 
multigraph on the vertex set $Z(\Sigma)$; the edges in $\phi(\Sigma_T)$ are drawn using thick lines.
\label{figure: multigraph on orbits}}
\end{figure}

\begin{lemma}
There is a subset of bridges $\Sigma_T \subset \Sigma$ of size $\abs{\Sigma_T} = L(\Sigma)$, such that, in the subgraph 
of $\Pi(\Sigma)$ with the edge set $\phi(\Sigma_T)$, each orbit $\zeta \in Z^*(\Sigma)$ is connected to $[0]$.
\end{lemma}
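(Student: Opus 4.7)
The strategy is to interpret the desired $\Sigma_T$ as the $\phi$-preimage of a spanning tree of the multigraph $\Pi(\Sigma)$, thereby reducing the statement to the elementary graph-theoretic fact that a connected graph on $k$ vertices admits a spanning tree with exactly $k-1$ edges. Since $Z(\Sigma) = Z^*(\Sigma) \cup \{[0]\}$ is a disjoint union with $\abs{Z^*(\Sigma)} = L(\Sigma)$, the vertex set of $\Pi(\Sigma)$ has cardinality $L(\Sigma) + 1$, so such a spanning tree contains exactly $L(\Sigma)$ edges; in any spanning tree every vertex, and in particular every $\zeta \in Z^*(\Sigma)$, is connected to $[0]$. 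Because $\phi$ is a bijection between bridges of $\Sigma$ and edges of $\Pi(\Sigma)$ (counting multiplicity), lifting the edge set of the spanning tree through $\phi^{-1}$ yields a subset $\Sigma_T \subset \Sigma$ of the required size $L(\Sigma)$.

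The key step is therefore to establish that $\Pi(\Sigma)$ is connected. I would do this by traversing the cyclic vertex set $\cal V_{m,m'}$ in its natural order. For each $i$, the edge $e_i = (i,i+1)$ belongs to a unique bridge $\sigma_i \in \Sigma$, and the corresponding edge $\phi(\sigma_i) \in \Pi(\Sigma)$ has endpoints $\zeta_1(\sigma_i)$ and $\zeta_2(\sigma_i)$. By the very definition of these endpoints, $\{\zeta_1(\sigma_i), \zeta_2(\sigma_i)\} = \{[i],[i+1]\}$ (which of the two is $\zeta_1$ versus $\zeta_2$ depends on which edge of $\sigma_i$ is smaller, but this is irrelevant for connectivity). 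Concatenating the edges $\phi(\sigma_0), \phi(\sigma_1), \dots, \phi(\sigma_{m+m'-1})$ produces a closed walk in $\Pi(\Sigma)$ that visits $[0], [1], \dots, [m+m'-1]$ in order; since the orbits of $\tau$ partition $\cal V_{m,m'}$, this walk visits every vertex of $\Pi(\Sigma)$, so $\Pi(\Sigma)$ is connected.

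With connectivity in hand, I would choose any spanning tree $T$ of $\Pi(\Sigma)$ — for concreteness, one obtained by running a breadth-first search rooted at $[0]$ — and define $\Sigma_T \deq \phi^{-1}(\text{edges of } T)$. Then $\abs{\Sigma_T} = L(\Sigma)$, and the subgraph of $\Pi(\Sigma)$ with edge set $\phi(\Sigma_T)$ is precisely $T$, in which $[0]$ is connected to every $\zeta \in Z^*(\Sigma)$ by a unique path. I do not anticipate a genuine obstacle: the one substantive point is the cycle-walk argument for connectivity, which follows directly from the definitions of $\zeta_1(\sigma)$ and $\zeta_2(\sigma)$. A minor bookkeeping point worth checking is that $\phi$ is indeed a bijection so that the $\phi^{-1}$ pullback of a simple subgraph of the multigraph to a subset of bridges is unambiguous; this is immediate from how $\Pi(\Sigma)$ was constructed.
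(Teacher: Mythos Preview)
Your proof is correct and follows essentially the same route as the paper. Both arguments rest on the observation that for each $i$ the bridge containing $e_i = (i,i+1)$ maps under $\phi$ to an edge joining $[i]$ and $[i+1]$ in $\Pi(\Sigma)$; you package this as ``$\Pi(\Sigma)$ is connected, so take a spanning tree,'' while the paper builds the tree directly by a greedy sweep (at each step picking the smallest vertex $i$ whose orbit is new and using the bridge through $(i-1,i)$ to attach $[i]$ to a previously visited orbit), but these are the same idea in slightly different clothing.
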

\begin{proof}
Starting from $\zeta_0 = [0]$, we construct a sequence of orbits $\zeta_0, \zeta_1, \dots, \zeta_{L(\Sigma)}$, and a 
sequence of bridges $\sigma_1, \dots, \sigma_{L(\Sigma)}$, with the property that for all $k = 1, \dots, L(\Sigma)$ 
there is a $k' < k$ such that $\zeta_k$ and $\zeta_{k'}$ are connected by $\phi(\sigma_k)$.

Assume that $\zeta_0, \dots, \zeta_{k - 1}$ have already been constructed.  Let $i$ be the smallest vertex of $\cal 
V_{m,m'} \setminus (\zeta_0 \cup \cdots \cup \zeta_{k - 1})$.  Then we set $\zeta_k = [i]$.  By construction, the vertex 
$i - 1$ belongs to an orbit $\zeta_{k'}$ for some $k' < k$. Set $\sigma_k$ to be the bridge containing $\{i-1, i\}$. 
Hence, by definition of $\Pi(\Sigma)$, we see that $\zeta_k$ and $\zeta_{k'}$ are connected by $\phi(\sigma_k)$.

The set $\Sigma_T$ is given by $\{\sigma_1, \dots, \sigma_{L(\Sigma)}\}$.
\end{proof}

Because $\abs{\Sigma_T} = L(\Sigma)$, the subgraph of $\Pi(\Sigma)$ with the edge set
$\phi(\Sigma_T)$ is a tree that connects all orbits in $Z^*(\Sigma)$ to $[0]$. Let us call this tree $\cal T(\Sigma)$. 
Its root is $[0]$.

Next, we observe that
\begin{equation} \label{lower bound on number of available bridges}
\abs{\Sigma \setminus \Sigma_T} \;\geq\; 1\,.
\end{equation}
Indeed, using Lemma \ref{lemma: bound on number of orbits} and $\bar m \geq 2$ we find
\begin{equation} \label{proof of lower bound on number of available bridges}
\abs{\Sigma \setminus \Sigma_T} \;=\; \bar m - L(\Sigma) \;\geq\; \frac{\bar m}{3} - \frac{1}{3}\;\geq\; \frac{1}{3}\,.
\end{equation}

We now estimate \eqref{bound in terms of lump indices} as follows. Each factor indexed by $\sigma \in \Sigma \setminus 
\Sigma_T$ is estimated by $\sup_{y,z} D_{\ell_\sigma}(y,z)$. As it turns out, we need to exploit the heat kernel decay 
for at least one bridge in $\Sigma \setminus \Sigma_T$. Pick a bridge $\bar{\sigma} \in \Sigma \setminus \Sigma_T$ (By 
\eqref{lower bound on number of available bridges} there is such a bridge).  Using Lemma \ref{lemma: heat kernel 
bounds}, we estimate
\begin{subequations}
\label{linfty-bound}
\begin{align} \label{linfty-bound1}
\sup_{y,z} D_{\ell_\sigma}(y,z) &\;\leq\; M^{\ell_\sigma - 1} & &\text{if } \sigma \in \Sigma \setminus (\Sigma_T \cup 
\{\bar{\sigma}\})\,,
\\ \label{linfty-bound2}
\sup_{y,z} D_{\ell_\sigma}(y,z) &\;\leq\; \frac{C}{\ell_\sigma^{d/2}} M^{\ell_\sigma - 1} + \frac{C}{N^d} 
M^{\ell_\sigma} & &\text{if } \sigma  = \bar{\sigma}\,.
\end{align}
\end{subequations}
Since $N \geq W M^{1/6}$ and $M \sim C W^d$ we find
\begin{equation*}
\frac{C}{\ell_{\bar{\sigma}}^{d/2}} M^{\ell_{\bar{\sigma}} - 1} + \frac{C}{N^d} M^{\ell_{\bar{\sigma}}} \;\leq\; C 
M^{\ell_{\bar{\sigma}} - 1} \pbb{\frac{1}{\ell_{\bar{\sigma}}^{1/2}} + \frac{1}{M^{1/6}}}\,
\end{equation*}
where we replaced $d$ with $1$ to obtain an upper bound.
Thus we get
\begin{equation*}
\sum_x R_x(\Gamma) \;\leq\; \frac{C}{(M - 1)^{\bar n}} \pbb{\frac{1}{\ell_{\bar{\sigma}}^{1/2}} + \frac{1}{M^{1/6}}} 
\prod_{\sigma \in \Sigma \setminus \Sigma_T}
M^{\ell_\sigma - 1}
\sum_{\b y_{\Sigma}} \ind{0 = y_{[0]} }  \prod_{\sigma \in \Sigma_T}
D_{\ell_\sigma}(y_{\zeta_1(\sigma)}, y_{\zeta_2(\sigma)})\,.
\end{equation*}
We perform the summation over $\b y_\Sigma$ by starting at the leaves of $\cal T(\Sigma)$ and moving towards the root 
$[0]$. Each vertex $\zeta$ of $\cal T(\Sigma)$ carries a label $y_\zeta$.  Let us choose a leaf $\zeta$ of $\cal 
T(\Sigma)$, and denote by $\zeta'$ the parent of $\zeta$ in $\cal T(\Sigma)$.  Let $\sigma \in \Sigma$ be the (unique) 
bridge such that $\phi(\sigma)$ connects $\zeta$ and $\zeta'$. Then summation over $y_\zeta$ yields the factor
\begin{equation} \label{l1-bound}
\sum_{y_\zeta} D_{\ell_\sigma}(y_{\zeta}, y_{\zeta'}) \;=\; M^{\ell_\sigma}\,,
\end{equation}
by Lemma \ref{lemma: heat kernel bounds}. Continuing in this manner until we reach the root, we find
\begin{align*}
\sum_x R_x(\Gamma) &\;\leq\; \frac{C}{(M - 1)^{\bar n}} \pbb{\frac{1}{\ell_{\bar{\sigma}}^{1/2}} + \frac{1}{M^{1/6}}} 
\prod_{\sigma \in \Sigma \setminus \Sigma_T}
M^{\ell_\sigma - 1}
\prod_{\sigma \in \Sigma_T} M^{\ell_\sigma}
\\
&\;=\;
C \, \pbb{\frac{M}{M - 1}}^{\bar n} \pbb{\frac{1}{\ell_{\bar{\sigma}}^{1/2}} + \frac{1}{M^{1/6}}} 
\frac{1}{M^{\abs{\Sigma \setminus \Sigma_T}}}\,.
\end{align*}
Now \eqref{proof of lower bound on number of available bridges} implies
\begin{equation*}
\abs{\Sigma \setminus \Sigma_T} \;\geq\;
\frac{\bar m}{3} - \frac{1}{3}\,,
\end{equation*}
so that
\begin{equation}\label{estimate on R(Gamma)}
\sum_x R_x(\Gamma) \;\leq\; C \pbb{\frac{M}{M - 1}}^{\bar n} \pbb{\frac{1}{\ell_{\bar{\sigma}}^{1/2}} + 
\frac{1}{M^{1/6}}} \frac{
M^{1/3}}{M^{\bar m / 3}}\,.
\end{equation}

Notice that \eqref{estimate on R(Gamma)} results from an $\ell^1$-$\ell^\infty$-summation procedure, where the 
$\ell^1$-bound \eqref{l1-bound} was used for propagators associated with bridges in $\Sigma_T$, and the 
$\ell^\infty$-bound \eqref{linfty-bound} for propagators associated with bridges in $\Sigma \setminus \Sigma_T$. The 
bound \eqref{linfty-bound1} is a simple power counting bound; the bound \eqref{linfty-bound2}, improved by the heat 
kernel decay, is used only for one bridge. Note that in the original setup \eqref{hhormalization} each row and column of 
$H$ contains $M$ nonzero entries $H_{xy}$, whose positions are determined by the condition $1 \leq \abs{x - y} \leq W$.  
If we removed this last condition and only required that each row and column contain $M$ nonzero entries in arbitrary 
locations off the diagonal, then all bounds relying solely on power counting would remain valid. In particular, 
\eqref{estimate on R(Gamma)} would be valid without the factor $\ell_{\bar{\sigma}}^{-1/2} + M^{-1/6}$, which results 
from the heat kernel decay associated with the special band structure.

%\notag \\ \label{estimate on R(Gamma)}
%&\;\leq\; C \pbb{\frac{M}{4 (M - 1)}}^{\bar n} \pbb{\frac{1}{\ell_{\sigma_1}^{1/2}} + \frac{1}{M^{1/6}}} 
%\pbb{\frac{1}{\ell_{\sigma_2}^{1/2}} + \frac{1}{M^{1/6}}} \frac{W^{1 + d(\kappa/2 - 2/3)}}{M^{\bar m / 3}}\,.
%\end{align}

\subsection{Sum over pairings}
We may now estimate $\sum_{n + n' = 2p} \sum_{\Gamma \in \scr P_{n,n'}^*} \sum_x R_x(\Gamma)$ 
for fixed $p$. Let first $p,m,m' \geq 0$ and $\Sigma \in \scr S^*_{m,m'}$. Then \eqref{estimate 
on R(Gamma)} yields
\begin{equation*}
\sum_{\b \ell_\Sigma \,:\, \abs{\b \ell_\Sigma} = p} \sum_x R_x(G_{\b \ell_{\Sigma}}(\Sigma))
\;\leq\; C \pbb{\frac{M}{M - 1}}^p \frac{M^{1/3}}{M^{\bar m / 3}} \sum_{\b \ell_\Sigma \,:\, \abs{\b \ell_\Sigma} = p} 
\pbb{\frac{1}{\ell_{\bar{\sigma}}^{1/2}} + \frac{1}{M^{1/6}}}\,.
\end{equation*}
The sum on the right-hand side is equal to
\begin{align*}
\sum_{\ell_1 + \cdots + \ell_{\bar m} = p} \pbb{\frac{1}{\ell_1^{1/2}} + \frac{1}{M^{1/6}}}
&\;=\; \sum_{\ell_1 = 1}^{p - \bar m + 1} \pbb{\frac{1}{\ell_1^{1/2}} + \frac{1}{M^{1/6}}}
\sum_{\ell_2 + \cdots + \ell_{\bar m} = p - \ell_1} 1
\\
&\;\leq\; \sum_{\ell_1 = 1}^p \pbb{\frac{1}{\ell_1^{1/2}} + \frac{1}{M^{1/6}}}
\binom{p - \ell_1 - 1}{\bar m - 2}
\\
&\;\leq\; C \pbb{\frac{1}{p^{1/2}} + \frac{1}{M^{1/6}}} \frac{p^{\bar m - 1}}{(\bar m - 2)!}\,.
\end{align*}

Next, we note that
\begin{equation*}
\abs{\scr S^*_{m,m'}} \;\leq\; (2 \bar m - 1)(2 \bar m - 3) \cdots 3 \cdot 1 \;\leq\; 2^{\bar m} \, \bar m\, !\,.
\end{equation*}
This expresses the fact that the first edge of $\Sigma$ can be bridged with at most $(2 \bar m - 1)$ edges, the next 
remaining edge with at most $(2 \bar m - 3)$ edges, and so on. Therefore \eqref{estimate on R in terms of skeletons} and 
Lemma \ref{lemma: properties of skeleton pairings} (iii) yield
\begin{align*}
\sum_{n + n' = 2p}\; \sum_{\Gamma \in \scr P_{n,n'}^*} \sum_x R_x(\Gamma)
&\;\leq\; C \sum_{4 \leq m + m' \leq 2p}  2^{\bar m} \bar m!
\pbb{\frac{M}{M - 1}}^p \frac{M^{1/3}}{M^{\bar m / 3}} \frac{p^{\bar m - 1}}{(\bar m - 2)!} \pbb{\frac{1}{p^{1/2}} + 
\frac{1}{M^{1/6}}}
\\
&\;\leq\; C \frac{M^{1/3}}{p} \pbb{\frac{1}{p^{1/2}} + \frac{1}{M^{1/6}}} \pbb{\frac{M}{M - 1}}^p \sum_{4 \leq m + m' 
\leq 2p} \bar m^2 2^{\bar m} \pbb{\frac{p}{M^{1/3}}}^{\bar m}
\\
&\;\leq\; \frac{M^{1/3}}{p} \pbb{\frac{1}{p^{1/2}} + \frac{1}{M^{1/6}}} \pbb{\frac{M}{M - 1}}^p \sum_{\bar m = 2}^p 
\pbb{\frac{Cp}{M^{1/3}}}^{\bar m}\,.
\end{align*}
Thus, Lemma \ref{lemma: sum over non-pairings} yields
\begin{equation} \label{bound on sum of graphs}
\sum_{n + n' = 2p} h_{n,n'} \;\leq\; \frac{M^{1/3}}{p} \pbb{\frac{1}{p^{1/2}} + \frac{1}{M^{1/6}}} \pbb{\frac{M}{M - 
1}}^p \sum_{r = 2}^p \pbb{\frac{Cp}{M^{1/3}}}^r\,,
\end{equation}
where we abbreviated
\begin{equation*}
h_{n,n'} \;\deq\; \sum_{\Gamma \in \scr G^*_{n,n'}} \sum_x V_x(\Gamma)\,.
\end{equation*}

\subsection{Conclusion of the proof} \label{subsection: cooclusion of the proof}
In this subsection we complete the proof of Proposition \ref{proposition: main estimate on error} by showing that the 
error
\begin{equation} \label{full error term}
E_W \;\deq\; \sum_{n,n'} \absb{a_n(\eta T) \, a_{n'}(\eta T)} \, h_{n,n'}
\end{equation}
satisfies $E_W = o(1)$ as $W \to \infty$, uniformly in $N \geq W^{1 + d/6}$.

We begin by deriving bounds on the coefficients $a_n(t)$.
\begin{lemma} \label{lemma: bounds on coefficients a_n}
\begin{enumerate}
\item
We have
\begin{equation} \label{sum of coefficients a_n}
\sum_{n \geq 0} \abs{a_n(t)}^2 \;=\; 1 + O(M^{-1})\,,
\end{equation}
uniformly in $t \in \R$.
\item
We have
\begin{equation} \label{bound on coefficients a_n}
\abs{a_n(t)} \;\leq\; C \frac{t^n}{n!}\,.
\end{equation}
\end{enumerate}
\end{lemma}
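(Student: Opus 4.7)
The plan is to exploit the recursion $a_m(t) = \alpha_m(t) + (M-1)^{-1} a_{m+2}(t)$, which is an immediate rearrangement of the defining series in Lemma~\ref{lemma: nonbacktracking path expansion}, together with the $\ell^2$-identity \eqref{orthonormality of Chebyshev coefficients} for (i), and the classical Bessel bound $|J_k(t)| \leq (|t|/2)^k/k!$ for (ii).

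For part (i), let $S \deq \sum_{m \geq 0} |a_m(t)|^2$. Squaring the recursion, summing in $m$, and using \eqref{orthonormality of Chebyshev coefficients} yields
\begin{equation*}
S \;=\; 1 + \frac{2}{M-1}\, \re \sum_{m \geq 0} \alpha_m(t)\, \overline{a_{m+2}(t)} + \frac{1}{(M-1)^2}\, \pb{S - |a_0(t)|^2 - |a_1(t)|^2}\,.
\end{equation*}
Cauchy--Schwarz in $m$ (with $\sum_m |\alpha_m(t)|^2 = 1$) bounds the middle term by $2(M-1)^{-1} S^{1/2}$. A separate, crude Cauchy--Schwarz applied to $a_n = \sum_k \alpha_{n+2k}(M-1)^{-k}$ with weights $(M-1)^{-k/2}$, followed by exchange of summation, gives the a priori bound $S \leq (1 - (M-1)^{-1})^{-2}$. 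Inserting this back rearranges to a two-sided bound $|S - 1| = O(M^{-1})$, uniformly in $t \in \R$.

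For part (ii), Poisson's integral representation of $J_k$ (whose kernel $(1-s^2)^{k - 1/2}$ is manifestly nonnegative on $[0,1]$ for integer $k \geq 0$) yields the classical inequality $|J_k(t)| \leq (|t|/2)^k/k!$, and hence via \eqref{formula for alpha} $|\alpha_k(t)| \leq (|t|/2)^k/k!$. Substituting into the defining series and using the elementary bound $(n+2k)!/n! \geq (2k)!$ produces
\begin{equation*}
|a_n(t)| \;\leq\; \frac{(t/2)^n}{n!} \sum_{k \geq 0} \frac{(t/2)^{2k}}{(2k)!\,(M-1)^k} \;=\; \frac{(t/2)^n}{n!}\, \cosh\pbb{\frac{t}{2\sqrt{M-1}}}\,.
\end{equation*}
In the regime $t = O(W^{d\kappa})$, $M \sim W^d$, $\kappa < 1/2$ relevant for Proposition~\ref{proposition: main estimate on error}, the hyperbolic factor is bounded by a universal constant and can be absorbed into $C$, giving $|a_n(t)| \leq C t^n/n!$.

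Neither step is genuinely difficult; the only mild obstacle is securing the a priori bound on $S$ used in (i), which is precisely what promotes the one-sided inequality $(1-(M-1)^{-2})S \leq 1 + 2(M-1)^{-1} S^{1/2}$ into two-sided control. The overall picture is that $a_m$ differs from the ``Bessel coefficient'' $\alpha_m$ by a relative correction of size $(M-1)^{-1}$, which propagates to both claims.
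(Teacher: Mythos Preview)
Your argument for part (i) is correct and follows a somewhat different route than the paper. The paper simply expands the double sum
\[
\sum_{n \geq 0} |a_n(t)|^2 \;=\; \sum_{n \geq 0}\sum_{k,k' \geq 0} \frac{\alpha_{n+2k}(t)\,\overline{\alpha_{n+2k'}(t)}}{(M-1)^{k+k'}}\,,
\]
isolates the $k=k'=0$ contribution (which is exactly $1$), and bounds the remainder by $\sum_{k+k'>0}(M-1)^{-(k+k')}=O(M^{-1})$ using $|\alpha_{n+2k}\,\overline{\alpha_{n+2k'}}|\leq |\alpha_{n+2k}|^2$ (by symmetry in $k,k'$) together with $\sum_n|\alpha_{n+2k}|^2\leq 1$. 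Your recursive identity and bootstrap via an a priori bound on $S$ reach the same conclusion but with a little more machinery; the paper's direct expansion is shorter and avoids the need for a separate a priori estimate.

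For part (ii), your bound
\[
|a_n(t)| \;\leq\; \frac{(t/2)^n}{n!}\,\cosh\!\pbb{\frac{t}{2\sqrt{M-1}}}
\]
is correct, but it does \emph{not} prove the lemma as stated: the constant $C$ in \eqref{bound on coefficients a_n} is meant to be independent of $t$ (and of $M$, apart from $M\geq 3$). You have only established the bound in the regime $t\ll\sqrt{M}$, which --- as you rightly observe --- is all that is ever used downstream. The paper obtains the uniform statement by a simple dichotomy: for $t\leq n$ one has $t^{n+2k}/(n+2k)!\leq t^n/n!$, giving $|a_n(t)|\leq \frac{t^n}{n!}\sum_k(M-1)^{-k}$; for $t>n$ one invokes the trivial bound $|\alpha_m(t)|\leq 1$ from \eqref{orthonormality of Chebyshev coefficients}, so that $|a_n(t)|\leq \sum_k(M-1)^{-k}\leq C$, and then uses that $t^n/n!$ is bounded below by a universal constant when $t>n$. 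This two-case split is what buys uniformity in $t$ without any restriction tying $t$ to $M$.
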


\begin{proof}
We start with (i). Write
\begin{equation*}
\sum_{n \geq 0} \abs{a_n(t)}^2 \;=\;
\sum_{n \geq 0} \sum_{k, k' \geq 0} \frac{\alpha_{n + 2k}(t) \, \ol{\alpha_{n + 2k'}(t)}}{(M - 
1)^{k + k'}}\,.
\end{equation*}
The term $k = k' = 0$ yields $1$ by \eqref{orthonormality of Chebyshev coefficients}. The rest 
is equal, by \eqref{orthonormality of Chebyshev coefficients}, to
\begin{equation*}
\sum_{n \geq 0} \sum_{k + k' > 0} \frac{\alpha_{n + 2k}(t) \, \ol{\alpha_{n + 2k'}(t)}}{(M - 
1)^{k + k'}}
\;\leq\;
\sum_{k + k' > 0} \sum_{n \geq 0} \frac{\abs{\alpha_{n + 2k}(t)}^2}{(M - 1)^{k + k'}}
\;\leq\;
\sum_{k + k' > 0} \frac{1}{(M - 1)^{k + k'}} \;=\; O(M^{-1})\,.
\end{equation*}

In order to prove (ii), we use the integral representation (see \cite{GradshteynRyzhik})
\begin{equation*}
J_n(t) \;=\; \frac{\pb{\frac{t}{2}}^n}{\sqrt{\pi} \, \Gamma\pb{n + \frac{1}{2}}} \int_{-1}^1 
\dd \lambda \; \ee^{\ii t \lambda} (1 - \lambda^2)^{n - \frac{1}{2}}\,.
\end{equation*}
Therefore
\begin{equation} \label{bound on alpha}
\abs{\alpha_n(t)} \;\leq\; 2 \frac{n + 1}{t} \, \frac{\pb{\frac{t}{2}}^{n+1}}{\sqrt{\pi} \, 
\Gamma\pb{n + \frac{3}{2}}} \, \frac{\pi}{2} \;\leq\; \frac{t^n}{n!}\,.
\end{equation}
Moreover, \eqref{orthonormality of Chebyshev coefficients} yields
\begin{equation} \label{stupid bound on alpha}
\abs{\alpha_n(t)} \;\leq\; 1\,.
\end{equation}
We use the estimate
\begin{equation*}
\abs{a_n(t)} \;\leq\; \sum_{k \geq 0} \frac{\abs{\alpha_{n + 2k}(t)}}{(M - 1)^k}\,.
\end{equation*}
Let us first consider the case $t \leq n$. Then it is easy to see that
$\frac{t^{n + 2k}}{(n + 2k)!} \leq \frac{t^n}{n!}$.
Together with \eqref{bound on alpha} this yields
\begin{equation*}
\abs{a_n(t)} \;\leq\; \sum_{k \geq 0} \frac{1}{(M - 1)^k} \frac{t^{n + 2k}}{(n + 2k)!}
\;\leq\; \frac{t^n}{n!} \sum_{k \geq 0} \frac{1}{(M - 1)^k}
\;\leq\; C \frac{t^n}{n!}\,.
\end{equation*}

If $t > n$ we have $\frac{t^n}{n!} \geq C$.
Thus the bound \eqref{stupid bound on alpha} yields
\begin{equation*}
\abs{a_n(t)} \;\leq\; \sum_{k \geq 0} \frac{C}{(M - 1)^k} \;\leq\; C \, \frac{t^n}{n!}\,.
\qedhere
\end{equation*}
\end{proof}

Using the new variables $p \deq \bar n = \frac{n + n'}{2}$ and $q \deq \frac{n - n'}{2}$ we 
find from the definition \eqref{full error term}
\begin{equation*}
E_W \;\leq\; \sum_{p \geq 0} \sum_{q = -p}^p \abs{a_{p+q}(\eta T) a_{p - q}(\eta T)} \, h_{p+q, 
p-q}\,.
\end{equation*}

Next, we observe that Lemma \ref{lemma: bounds on coefficients a_n} (ii) implies that terms 
corresponding to $n,n' \gg t = \eta T \sim C M^\kappa T$ are strongly suppressed. Thus we 
introduce a cutoff at $p = M^\mu$, where $\kappa < \mu < \frac{1}{3}$.
Let us first consider the terms $p \leq M^\mu$. We need to estimate
\begin{align*}
E_W^{\leq} \;&\deq\; \sum_{p = 0}^{M^\mu} \sum_{q = -p}^p \abs{a_{p+q}(\eta T) a_{p - q}(\eta T)} \, 
h_{p+q, p-q}
\\
&\;\leq\; \pBB{\sum_{p = 0}^{M^\mu} \sum_{q = -p}^p \absb{a_{p+q}(\eta T) a_{p - q}( \eta 
T)}^2}^{1/2}
\pBB{\sum_{p = 0}^{M^\mu} \sum_{q = -p}^p \pb{h_{p+q, p-q}}^2
}^{1/2}
%\\
%&\;\leq\; \pBB{\sum_{n,n' \geq 0} \absb{a_n(\eta T) a_{n'}(\eta T)}^2}^{1/2}
%\pBB{\sum_{p = 0}^{M^\mu} \sum_{q = -p}^p \pb{h_{p+q, p-q}}^2
%}^{1/2}
\\
&\;\leq\; C \pBB{\sum_{p = 0}^{M^\mu} \sum_{q = -p}^p \p{h_{p+q, p-q}}^2
}^{1/2}\,,
\end{align*}
where we used Lemma \ref{lemma: bounds on coefficients a_n} (i). Thus,
\begin{equation*}
\pb{E^\leq_W}^2 \;\leq\; C \sum_{p = 0}^{M^\mu} \qBB{\sum_{q = -p}^p h_{p+q, p-q}}^2
\;\leq\; \sum_{p = 2}^{M^\mu} \qBB{\frac{M^{1/3}}{p} \pbb{\frac{1}{p^{1/2}} + \frac{1}{M^{1/6}}} \pbb{\frac{M}{M - 1}}^p 
\sum_{\bar m = 2}^p \pbb{\frac{Cp}{M^{1/3}}}^{\bar m}}^2\,,
\end{equation*}
by \eqref{bound on sum of graphs}. For $p \leq M^\mu$ and $W$ large enough, the term in the 
square brackets is bounded by
\begin{equation*}
C \frac{M^{1/3}}{p} \pbb{\frac{1}{p^{1/2}} + \frac{1}{M^{1/6}}} \frac{1}{\pb{1 - \frac{1}{M}}^p} 
\pbb{\frac{p}{M^{1/3}}}^2
%\;\leq\; C \frac{p}{p^{1/2} M^{1/3}} \pbb{1 + \frac{p^{1/2}}{M^{1/6}}}^2 \frac{1}{\pb{1 - \frac{1}{M}}^{M}}
\;\leq\; C \, \frac{p^{1/2}}{M^{1/3}}
\;\leq\; \frac{C}{M^{1/6}}\,.
\end{equation*}
Thus we find $\pb{E^\leq_W}^2 \leq C M^{\mu - 1/3}$.

Let us now consider the case $p > M^\mu$, i.e.\ estimate
\begin{equation*}
E_W^{>} \;\deq\; \sum_{p > M^\mu} \sum_{q = -p}^p \absb{a_{p+q}(\eta T) a_{p - q}(\eta T)} \, h_{p+q, 
p-q}\,.
\end{equation*}
By \eqref{bound on coefficients a_n} and the elementary inequality $\frac{p!}{(p-q)!} \leq 
\frac{(p+q)!}{p!}$ we have
\begin{equation*}
\abs{a_{p+q}(t) a_{p-q}(t)} \;\leq\; C \frac{t^{2p}}{(p+q)! (p-q)!} \;\leq\; C \frac{t^{2p}}{p!  
p!}\,.
\end{equation*}
This gives
\begin{align*}
E_W^{>} &\;\leq\; C \sum_{p > M^\mu} \frac{(\eta T)^{2p}}{p! p!} \sum_{q = -p}^p h_{p+q, p-q}
\\
&\;\leq\; C \sum_{p > M^\mu} \frac{(\eta T)^{2 p}}{p! p!} \frac{M^{1/3}}{p} \pbb{\frac{1}{p^{1/2}} + \frac{1}{M^{1/6}}} 
\pbb{\frac{M}{M - 1}}^p \sum_{\bar m = 2}^p \pbb{\frac{Cp}{M^{1/3}}}^{\bar m}
\end{align*}
by \eqref{bound on sum of graphs}. Setting $\eta\sim C M^\kappa$ yields
\begin{align*}
E_W^{>} &\;\leq\; \sum_{p > M^\mu} \frac{(C M^\kappa T)^{2p}}{p! p!} \sum_{\bar m = 0}^{p-2} 
\pbb{\frac{Cp}{M^{1/3}}}^{\bar m}
%\\
%&\;\leq\; \sum_{p > M^\mu} \frac{(C M^\kappa T)^{2p}}{p! p!}  p \qbb{1 +  
%\pbb{\frac{Cp}{M^{1/3}}}^p}
\\
&\;\leq\; \sum_{p > M^\mu} \pbb{\frac{C M^\kappa T}{p}}^{2p} + \sum_{p > M^\mu} \pbb{\frac{C 
M^{2\kappa} T^2}{p M^{1/3}}}^p
\\
&\;\leq\; \sum_{p > M^\mu} \pb{C M^{\kappa - \mu} T}^{2p} + \sum_{p > M^\mu} \pb{C M^{2 \kappa 
- 1/3 - \mu} T^2}^p
\\
&\;\leq\; \pb{C M^{\kappa - \mu} T}^{2 M^\mu} +  \pb{C M^{2 \kappa - 1/3 - \mu} T}^{M^\mu}\,.
\end{align*}
Choosing $\mu = 1/3 - \beta$ (where, we recall, $0 < \beta < 2/3 - 2 \kappa$) completes the 
proof of Proposition \ref{proposition: main estimate on error}.

\section{The ladder pairings} \label{section: ladder}

In this section we analyse the contribution of the ladder pairings, $\sum_{n \geq 0} \abs{a_n(\eta T)}^2 \, V_x(L_n)$,
and complete the proof of Theorem \ref{theorem: first main result}. (Recall that $\eta \deq W^{d \kappa}$ is the time 
scale.)
Recalling the expression \eqref{expression for V for pairings}, and noting that in the case of 
the ladder the variables $x_0, \dots, x_n$ determine the value of all variables $x_0, \dots, 
x_{2n-1}$, we readily find
\begin{multline} \label{value of ladder}
V_x(L_n) \;=\; \frac{1}{(M - 1)^n} \sum_{\b x \in \Lambda_N^{n+1}} \delta_{0 x_0} \delta_{x 
x_n} \prod_{i = 0}^{n - 1} \ind{1 \leq \abs{x_{i+1} - x_i} \leq W}
\\
\times
\prod_{i = 0}^{n - 2} \ind{x_i \neq x_{i+2}} \prod_{0 \leq i < j \leq n-1} \indb{\{x_i, 
x_{i+1}\} \neq \{x_j, x_{j+1}\}}\,.
\end{multline}
Throughout this section we assume that $\eta = W^{d \kappa}$ for some $\kappa < 1/3$.

We perform a series of steps to simplify the expression \eqref{value of ladder}.
In a first step, we get rid of the last product.

\begin{lemma} \label{lemma: first step for ladders}
Under the assumptions of Proposition \ref{proposition: main estimate on error} we have
\begin{equation*}
\sum_{n \geq 0} \abs{a_n(\eta T)}^2 \, V_x(L_n) \;=\; \sum_{n \geq 0} \abs{a_n(\eta T)}^2 V^1_x(n) - 
E^1_x\,,
\end{equation*}
where
\begin{equation*}
V^1_x(n) \;\deq\; \frac{1}{(M - 1)^n} \sum_{\b x \in \Lambda_N^{n+1}} \delta_{0 x_0} \delta_{x 
x_n} \prod_{i = 0}^{n - 1} \ind{1 \leq \abs{x_{i+1} - x_i} \leq W}
\prod_{i = 0}^{n - 2} \ind{x_i \neq x_{i+2}}
\end{equation*}
and
\begin{equation*}
\sum_x \abs{E^1_x} \;\leq\; \frac{C}{W^{d \beta}}\,.
\end{equation*}
\end{lemma}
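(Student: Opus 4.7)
The quantity $V^1_x(n) - V_x(L_n)$ is manifestly nonnegative: it equals $(M-1)^{-n}$ times the number of \emph{degenerate} forward paths $(x_0, \ldots, x_n)$ from $0$ to $x$, where by ``degenerate'' I mean a path obeying $1 \leq |x_{i+1}-x_i| \leq W$ and $x_i \neq x_{i+2}$ for which $\{x_i, x_{i+1}\} = \{x_j, x_{j+1}\}$ for some $0 \leq i < j \leq n-1$. The plan is to reinterpret such degenerate paths as contributions to $V_x(\Gamma)$ for certain non-ladder lumpings $\Gamma \in \scr G^*_{n,n}$, and then conclude using Proposition \ref{proposition: main estimate on error}.

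To each forward path I associate its \emph{ladder-like circular extension} $\b x \in \Lambda_N^{2n}$, defined by $x_{n+k} \deq x_{n-k}$ for $k = 1, \ldots, n-1$. The symmetry $|x_i - x_{i+1}| = |x_{2n-1-i} - x_{2n-i}|$ and the reflection $x_i \neq x_{i+2} \iff x_{2n-2-i} \neq x_{2n-i}$ together show that the constraints built into $V^1_x(n)$ are equivalent to the circular constraint $Q_x(\b x) = 1$. Moreover, the rung symmetry $\{x_i, x_{i+1}\} = \{x_{2n-1-i}, x_{2n-i}\}$ is automatic, so the associated lumping $\Gamma(\b x)$ is always a coarsening of the ladder $L_n$. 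Degeneracy of the forward path is precisely the statement that this coarsening is strict, so $\Gamma(\b x) \in \scr G^*_{n,n}$ in that case. Since each degenerate path contributes $1/(M-1)^n$ to $V_x(\Gamma(\b x))$, I obtain the key inequality
\begin{equation*}
V^1_x(n) - V_x(L_n) \;\leq\; \sum_{\Gamma \in \scr G^*_{n,n}} V_x(\Gamma)\,.
\end{equation*}

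Multiplying by $|a_n(\eta T)|^2 \geq 0$ and summing over $n$ and $x$, the resulting bound on $\sum_x |E^1_x| = \sum_x E^1_x$ is dominated by the $n' = n$ diagonal slice of the sum estimated in Proposition \ref{proposition: main estimate on error}, yielding $\sum_x |E^1_x| \leq C/W^{d\beta}$. The only genuinely combinatorial step is the verification that the ladder-like extension sends the one-sided nonbacktracking condition into the full circular condition $Q_x$ and that the lumping $\Gamma(\b x)$ is indeed a (strict) coarsening of $L_n$; everything else is bookkeeping and a direct appeal to the already-proved Proposition \ref{proposition: main estimate on error}. Consequently I do not expect any real obstacle: the lemma is essentially a corollary of that proposition together with the observation that the missing indicator $\prod_{i<j} \b 1(\{x_i,x_{i+1}\}\neq\{x_j,x_{j+1}\})$ would, if present, have removed exactly those configurations that the proposition already identified as subleading.
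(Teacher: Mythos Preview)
Your proof is correct and follows essentially the same strategy as the paper. The paper organizes the argument slightly differently---it decomposes $1=\sum_P \Delta_P(\b x)$ over partitions $P$ of the forward edge set $\{0,\dots,n-1\}$ and assigns to each $P\neq P_0$ the coarsened ladder lumping $\Gamma(P)=\bigl\{\bigcup_{i\in p}\{e_i,e_{2n-1-i}\}:p\in P\bigr\}$---whereas you go directly from each degenerate forward path to the lumping $\Gamma(\b x)$ of its reflected circular extension; but these are the same object, and both arguments conclude by invoking Proposition~\ref{proposition: main estimate on error} on the diagonal $n'=n$.
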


\begin{proof}
For each $\b x  = (x_0, \dots, x_n) \in \Lambda_N^{n+1}$ we write
$1 = \sum_{P \,:\, \{0, \dots, n-1\}} \Delta_P(\b x)$,
where the sum ranges ranges over all partitions $P$ of the set $\{0, \dots, n-1\}$, and $\Delta_P(\b x)$ is the 
indicator function
\begin{equation*}
\Delta_P(\b x) \;\deq\; \prod_{0 \leq i < j \leq n - 1} \begin{cases}
\indb{\{x_i, x_{i+1}\} = \{x_j, x_{j+1}\}} & \text{if $i$ and $j$ belong to the same lump of 
$P$}
\\
\indb{\{x_i, x_{i+1}\} \neq \{x_j, x_{j+1}\}} & \text{if $i$ and $j$ belong to different lumps 
of $P$}\,.
\end{cases}
\end{equation*}
Notice that if $P = P_0 \deq \hb{\{0\}, \dots, \{n-1\}}$ then $\Delta_P(\b x)$ is the last 
product of \eqref{value of ladder}. Let us define
\begin{equation*}
E_x^1(n) \;\deq\; \frac{1}{(M - 1)^n} \sum_{\b x \in \Lambda_N^{n+1}} \delta_{0 x_0} \delta_{x 
x_n} \prod_{i = 0}^{n - 1} \ind{1 \leq \abs{x_{i+1} - x_i} \leq W}
\prod_{i = 0}^{n - 2} \ind{x_i \neq x_{i+2}} \sum_{P \neq P_0} \Delta_P(\b x)\,.
\end{equation*}
Thus, by definition, we have
\begin{equation*}
E_x^1 \;=\; \sum_{n \geq 0} \abs{a_n(\eta T)}^2 E_x^1(n)\,.
\end{equation*}

Next, we estimate $\sum_x \abs{E_x^1}$. We begin by observing that each partition $P$ of $\{0, 
\dots, n-1\}$ uniquely defines a partition $\Gamma(P) \in \scr G_{n,n}^*$. Indeed, each lump $p 
\in P$ gives rise to the lump $\gamma \in \Gamma(P)$ defined by
$\gamma = \bigcup_{i \in p} \{e_i, e_{2n - 1 - i}\}$.
In particular, $\Gamma(P) \neq \Gamma(P')$ if $P \neq P'$.
We now claim that
\begin{equation*}
\frac{1}{(M - 1)^n} \sum_{\b x \in \Lambda_N^{n+1}} \delta_{0 x_0} \delta_{x x_n} \prod_{i = 
0}^{n - 1} \ind{1 \leq \abs{x_{i+1} - x_i} \leq W}
\prod_{i = 0}^{n - 2} \ind{x_i \neq x_{i+2}} \, \Delta_P(\b x)
\;\leq\; V_x(\Gamma(P))\,.
\end{equation*}
This can be directly read off \eqref{expression for V(Gamma)}; there is in fact an overcounting 
arising from the summation over $\b \pi_\Gamma$. Thus we find\begin{equation*}
\sum_x \abs{E^1_x} \;\leq\; \sum_x \sum_{n \geq 0} \abs{a_n(\eta T)}^2 \sum_{\substack{P \,:\, \{0, 
\dots, n-1\} \\ P \neq P_0}} V_x(\Gamma(P)) \;\leq\;
\sum_x \sum_{n \geq 0} \abs{a_n(\eta T)}^2 \sum_{\Gamma \in \scr G_{n,n}^*} V_x(\Gamma)\,.
\end{equation*}
Invoking Proposition \ref{proposition: main estimate on error} completes the proof.  
\end{proof}

In a second step, we get rid of the second to last product in \eqref{value of ladder}, i.e.\ 
the nonbacktracking condition.

\begin{lemma} \label{lemma: second step for ladders}
For any $T \geq 0$ we have
\begin{equation*}
\sum_{n \geq 0} \abs{a_n(\eta T)}^2 V_x^1(n)
\;=\;
\sum_{n \geq 0} \abs{a_n(\eta T)}^2 V_x^2(n) - E^2_x\,,
\end{equation*}
where
\begin{equation*}
V^2_x(n) \;\deq\; \frac{1}{(M - 1)^n} \sum_{\b x \in \Lambda_N^{n+1}} \delta_{0 x_0} \delta_{x 
x_n} \prod_{i = 0}^{n - 1} \ind{1 \leq \abs{x_{i+1} - x_i} \leq W}
\end{equation*}
and
\begin{equation*}
\sum_x \abs{E^2_x} \;\leq\; \frac{C}{W^{2d/3}}\,.
\end{equation*}
\end{lemma}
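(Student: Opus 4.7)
The plan is to control $V_x^2(n) - V_x^1(n)$, which is precisely the contribution to $V_x^2(n)$ of label configurations $\b x$ that violate the nonbacktracking condition, i.e.\ satisfy $x_i = x_{i+2}$ for at least one $i \in \{0, \dots, n-2\}$. A union bound over the position of the first violation reduces the task to estimating a single backtracking event at a fixed position $i$:
\begin{equation*}
V_x^2(n) - V_x^1(n) \;\leq\; \sum_{i=0}^{n-2} \frac{1}{(M-1)^n} \sum_{\b x} \delta_{0 x_0} \delta_{x x_n} \indb{x_i = x_{i+2}} \prod_{j=0}^{n-1} \indb{1 \leq \abs{x_{j+1}-x_j} \leq W}\,.
\end{equation*}

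For fixed $i$, once we impose $x_{i+2} = x_i$, the step constraints indexed by $j = i$ and $j = i+1$ coincide, so summing freely over $x_{i+1}$ yields exactly $M$ choices. The remaining summation runs over a path $(x_0, \dots, x_i, x_{i+3}, \dots, x_n)$ of length $n-2$ from $0$ to $x$ and contributes $(M-1)^{n-2}\, V_x^2(n-2)$. After the $1/(M-1)^n$ prefactor, each term in the union bound equals $\frac{M}{(M-1)^2} V_x^2(n-2)$, so
\begin{equation*}
V_x^2(n) - V_x^1(n) \;\leq\; \frac{C n}{M} \, V_x^2(n-2)\,.
\end{equation*}
Since $\sum_x V_x^2(m) = \pb{M/(M-1)}^m$, which is $O(1)$ uniformly for $m$ not exceeding the effective cutoff on $n$, summing over $x$ yields $\sum_x \pb{V_x^2(n) - V_x^1(n)} \leq C n / M$.

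It remains to weight by $\abs{a_n(\eta T)}^2$ and sum over $n$. The key estimate is
$\sum_{n \geq 0} n \, \abs{a_n(\eta T)}^2 \leq C \eta T$,
which follows from the explicit formula \eqref{formula for alpha} and the standard concentration of $\abs{J_{n+1}(t)}^2$ on indices $n \sim t$; as in the proof of Proposition~\ref{proposition: main estimate on error}, one may split at $n = M^\mu$ and dispatch the high-$n$ tail using the factorial bound \eqref{bound on coefficients a_n}. Combining everything yields
\begin{equation*}
\sum_x \abs{E_x^2} \;\leq\; \frac{C \eta T}{M} \;\leq\; C T \, W^{d(\kappa - 1)} \;\leq\; \frac{C}{W^{2d/3}}\,,
\end{equation*}
for any $\kappa < 1/3$. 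The main obstacle is obtaining the weighted moment bound $\sum_n n \abs{a_n(\eta T)}^2 \lesssim \eta T$ uniformly in $W$; the splitting strategy of Section~\ref{subsection: cooclusion of the proof} handles this cleanly, while the rest of the argument is essentially a one-step elementary reduction to $V_x^2(n-2)$.
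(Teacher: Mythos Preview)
Your argument is correct and follows the same strategy as the paper: bound $V_x^2(n)-V_x^1(n)$ by the contribution of paths with at least one backtracking step, which costs a factor of order $n/M$ after summing over $x$. The only difference is in the final $n$-summation: the paper places a cutoff at $n=M^{1/3}$ and uses only $\sum_n|a_n|^2=O(1)$ to obtain $O(M^{-2/3})$ directly, whereas you appeal to the first-moment bound $\sum_n n\,|a_n(\eta T)|^2=O(\eta T)$, yielding the slightly sharper $O(\eta T/M)=O(T M^{\kappa-1})$; both suffice for the stated estimate.
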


\begin{proof}
We find
\begin{equation*}
\sum_{x} \abs{E_x^2} \;=\; \sum_{n \geq 0} \abs{a_n(\eta T)}^2 \frac{1}{(M - 1)^n} \sum_{\b x \in 
\Lambda_N^{n+1}} \delta_{0 x_0}
\prod_{i = 0}^{n - 1} \ind{1 \leq \abs{x_{i+1} - x_i} \leq W} \qBB{1 - \prod_{i = 0}^{n - 2} 
\ind{x_i \neq x_{i+2}}}\,.
\end{equation*}
The expression in the square brackets is equal to
\begin{equation*}
1 - \prod_{i = 0}^{n - 2} \pb{1 - \ind{x_i = x_{i + 2}}} \;=\; \sum_{k = 1}^{n - 2} (-1)^{k + 
1} \sum_{0 \leq i_1 < \cdots < i_k \leq n - 2} \prod_{j = 1}^k \ind{x_{i_j} = x_{i_j + 2}}\,.
\end{equation*}
Therefore summing over $\b x$ yields
\begin{align*}
\sum_{x} \abs{E_x^2} &\;\leq\;
\sum_{n \geq 0} \abs{a_n(\eta T)}^2 \frac{1}{(M - 1)^n} \sum_{k = 1}^{n - 2} \binom{n - 2}{k} M^{n 
- k}
%\\
%&\;\leq\; \sum_{n \geq 0} \abs{a_n(\eta T)}^2 \pbb{\frac{M}{M - 1}}^n \sum_{k = 1}^{n - 2} \binom{n 
%- 2}{k} \frac{1}{M^k}
\\
&\;=\; \sum_{n \geq 0} \abs{a_n(\eta T)}^2 \pbb{\frac{M}{M - 1}}^n \qbb{\pbb{1 + \frac{1}{M}}^{n - 
2} - 1}\,.
\end{align*}
We introduce a cutoff at $n = M^{1/3}$. The part $n \leq M^{1/3}$ is bounded by
\begin{equation*}
\sum_{n \geq 0} \abs{a_n(\eta T)}^2 \pbb{\frac{M}{M - 1}}^M \qbb{\pbb{1 + \frac{1}{M}}^{M^{1/3}} - 
1} \;\leq\; C \pb{\ee^{M^{-2/3}} - 1} \;\leq\; \frac{C}{M^{2/3}}\,,
\end{equation*}
by Lemma \ref{lemma: bounds on coefficients a_n} (i).
The part $n > M^{1/3}$ is estimated using Lemma \ref{lemma: bounds on coefficients a_n} (ii), 
exactly as in the estimate of $E_W^{>}$ in Section \ref{subsection: cooclusion of the proof}.
\end{proof}

We summarize what we have proved so far.
\begin{lemma} \label{replacing as with alphas}
Under the assumptions of Proposition \ref{proposition: main estimate on error} we have
\begin{equation*}
\sum_{n \geq 0} \abs{a_n(\eta T)}^2 \, V_x(L_n) \;=\; \sum_{n \geq 0} \abs{\alpha_n(\eta T)}^2 P_x(n) + E_x\,,
\end{equation*}
where
\begin{equation*}
P_x(n) \;\deq\;  \frac{1}{M^n} \sum_{\b x \in \Lambda_N^{n+1}} \delta_{0 x_0} \delta_{x x_n} \prod_{i = 0}^{n - 1} 
\ind{1 \leq \abs{x_{i+1} - x_i} \leq W}
\end{equation*}
and
\begin{equation*}
\sum_{x} \abs{E_x} \;\leq\; \frac{C}{W^{d\beta}}\,.
\end{equation*}
\end{lemma}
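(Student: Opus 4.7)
The plan is to combine the two preceding lemmas, which together give
$$\sum_{n \geq 0} \abs{a_n(\eta T)}^2 V_x(L_n) \;=\; \sum_{n \geq 0} \abs{a_n(\eta T)}^2 V_x^2(n) + R_x^{(1)},$$
with $\sum_x \abs{R_x^{(1)}} \leq C/W^{d\beta} + C/W^{2d/3}$, and then to replace $\abs{a_n(\eta T)}^2 V_x^2(n)$ by $\abs{\alpha_n(\eta T)}^2 P_x(n)$. The key observation for this second step will be the identity $V_x^2(n) = (M/(M-1))^n P_x(n)$, which holds by direct inspection of the two definitions. Writing
$$R_x^{(2)} \;\deq\; \sum_{n \geq 0} \qbb{\abs{a_n(\eta T)}^2 \pbb{\frac{M}{M-1}}^n - \abs{\alpha_n(\eta T)}^2} P_x(n),$$
and using that $P_x(n) \geq 0$ together with $\sum_x P_x(n) = 1$, I reduce the problem to controlling the single numerical series $\sum_n \abs{\abs{a_n(\eta T)}^2 (M/(M-1))^n - \abs{\alpha_n(\eta T)}^2}$. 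I will split this into a \emph{geometric-factor error} $\abs{a_n}^2 [(M/(M-1))^n - 1]$ and a \emph{coefficient error} $\abs{a_n}^2 - \abs{\alpha_n}^2$.

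For the geometric-factor error I would introduce a cutoff at $n = M^{1/3}$. On $n \leq M^{1/3}$, the factor $(M/(M-1))^n - 1$ is bounded by $C n / M \leq C M^{-2/3}$, and Lemma \ref{lemma: bounds on coefficients a_n}(i) gives $\sum_n \abs{a_n(\eta T)}^2 = 1 + O(M^{-1})$, so this range contributes $O(M^{-2/3}) = O(W^{-2d/3})$. On $n > M^{1/3}$, the bound $\abs{a_n(\eta T)} \leq C (\eta T)^n / n!$ from Lemma \ref{lemma: bounds on coefficients a_n}(ii), combined with $\eta T = M^\kappa T$ and $\kappa < 1/3$, makes $\abs{a_n(\eta T)}^2 (M/(M-1))^n$ decay superexponentially once $n$ crosses the cutoff; the argument is essentially identical to the control of $E_W^{>}$ in Section \ref{subsection: cooclusion of the proof}.

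For the coefficient error I would write $a_n = \alpha_n + b_n$ with $b_n \deq \sum_{k \geq 1} \alpha_{n+2k}(\eta T) / (M-1)^k$, so that $\abs{a_n}^2 - \abs{\alpha_n}^2 = 2 \re(\alpha_n \ol{b_n}) + \abs{b_n}^2$. Applying Cauchy--Schwarz inside the definition of $b_n$, together with $\sum_{k \geq 1} (M-1)^{-k} \leq C/M$ and the Chebyshev orthonormality relation \eqref{orthonormality of Chebyshev coefficients}, I expect to obtain $\sum_n \abs{b_n}^2 \leq C/M^2$. A second Cauchy--Schwarz, again using \eqref{orthonormality of Chebyshev coefficients}, then gives $\sum_n \abs{\alpha_n}\abs{b_n} \leq C/M$, so that the coefficient-error contribution is bounded by $C/M = C/W^d$. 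Since $\beta < 2/3 < 1$, this is absorbed into $C/W^{d\beta}$, and adding the two estimates yields the claim $\sum_x \abs{E_x} \leq C/W^{d\beta}$. The main obstacle I anticipate is the quadratic $\abs{b_n}^2$ term: the naive pointwise estimate $\abs{\alpha_n} \leq 1$ only produces $O(M^{-1/2})$, which is too weak when $\beta$ is close to $2/3$; the gain to $O(M^{-1})$ depends on performing Cauchy--Schwarz strictly inside the defining series of $b_n$, so as to exploit $\sum_n \abs{\alpha_{n+2k}}^2 = 1$ separately for each shift $k$.
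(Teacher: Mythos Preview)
Your proposal is correct and follows essentially the same approach as the paper. The paper's proof is terse: it invokes Lemmas~\ref{lemma: first step for ladders} and~\ref{lemma: second step for ladders}, then says that replacing $\abs{a_n}^2$ by $\abs{\alpha_n}^2$ is ``an argument identical to the proof of Lemma~\ref{lemma: bounds on coefficients a_n}~(i)'' and that replacing $(M-1)^{-n}$ by $M^{-n}$ uses a cutoff at $n=M^{1/3}$ ``exactly as in the proof of Lemma~\ref{lemma: second step for ladders}''. Your decomposition into a geometric-factor error and a coefficient error, the cutoff for the former, and the Cauchy--Schwarz inside the series defining $b_n$ for the latter, are precisely the details the paper leaves implicit; your $\sum_n\abs{b_n}^2\leq C/M^2$ is the content of the displayed estimate in the proof of Lemma~\ref{lemma: bounds on coefficients a_n}~(i) (there written via AM--GM and symmetry in $k,k'$).
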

\begin{proof}
The claim follows from Lemmas \ref{lemma: first step for ladders} and \ref{lemma: second step for ladders}, combined 
with an argument identical to the proof of Lemma \ref{lemma: bounds on coefficients a_n} (i) that allows us to replace 
$\abs{a_n(t)}^2$ with $\abs{\alpha_n(t)}^2$. We replaced the factor $\frac{1}{(M - 1)^n}$ with $\frac{1}{M^n}$ by 
introducing a cutoff at $n = M^{1/3}$, exactly as in the proof of Lemma \ref{lemma: second step for ladders}.
\end{proof}
The expression $P_x(n)$ is the (normalized) number of paths in $\Z^d$ of length $n$ from $0$ to any point in the set 
$x + N \Z^d$, whereby each step takes values in $\{y \,:\, 1 \leq \abs{y} \leq W\}$.

In a third step, we use the central limit theorem to replace $P_x(n)$ with a Gaussian.  Recall 
the definition of the heat kernel
\begin{equation*}
G(T, X) \;=\; \pbb{\frac{d + 2}{2 \pi T}}^{d/2} \, \ee^{- \frac{d+2}{2 T} \, \abs{X}^2}\,.
\end{equation*}

\begin{lemma} \label{lemma: central limit argument}
Let $\varphi \in C_b(\R^d)$ and $T \geq 0$.
Then we have
\begin{equation} \label{application of the CLT}
\lim_{W \to \infty} \sum_x P_x([\eta T]) \, \varphi \pbb{\frac{x}{W^{1 + d \kappa / 2}}}
\;=\; \int \dd X \; G(T, X) \varphi(X)\,,
\end{equation}
where $[\cdot]$ denotes the integer part.
\end{lemma}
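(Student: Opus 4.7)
The plan is to recognize $P_x(n)$ as the transition kernel of a random walk on the torus and apply a rescaled central limit theorem. Let $\xi_1, \xi_2, \ldots$ be i.i.d.\ random variables uniformly distributed on the annulus $\cal A_W \deq \{y \in \Z^d \st 1 \leq \abs{y}_{\Z^d} \leq W\}$, and set $S_n \deq \xi_1 + \cdots + \xi_n$. Then $M = \abs{\cal A_W}$, and unpacking the product of indicators in $P_x(n)$ together with the convention that summation is over $\Lambda_N$ gives $P_x(n) = \P(S_n \equiv x \mod N)$. Consequently,
\begin{equation*}
\sum_{x \in \Lambda_N} P_x(n) \, \varphi \pbb{\frac{x}{W^{1 + d \kappa/2}}} \;=\; \E \, \varphi \pbb{\frac{\ol S_n}{W^{1 + d \kappa/2}}}\,,
\end{equation*}
where $\ol S_n \in \Lambda_N$ is $S_n$ reduced modulo $N$ (with representative in $\Lambda_N$).

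Next I would compute the covariance of one step. By the symmetry $\cal A_W = -\cal A_W$ we have $\E \xi = 0$ and $\E \xi_i \xi_j = 0$ for $i \neq j$. A Riemann-sum argument shows
\begin{equation*}
\E \xi_i^2 \;=\; \frac{1}{M} \sum_{y \in \cal A_W} y_i^2 \;\longrightarrow\; \frac{1}{\r{Vol}(B_1)} \int_{B_1} W^2 u_i^2 \, \dd u \;=\; \frac{W^2}{d + 2} \, \p{1 + o(1)}\,,
\end{equation*}
where $B_1 \subset \R^d$ is the unit ball. Setting $n = [\eta T] = [W^{d \kappa} T]$ and rescaling by $W^{1 + d \kappa /2}$, the covariance of $S_n / W^{1 + d \kappa/2}$ converges to $\frac{T}{d+2} \, \umat$, which is exactly the covariance of the Gaussian $G(T, \cdot)$.

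The third step is to control the periodization. Since $n \leq W^{d\kappa} T$ and each $\abs{\xi_i}_{\Z^d} \leq W$, a simple variance bound gives $\P(\abs{S_n}_{\Z^d} \geq R W^{1 + d \kappa / 2}) \leq C/R^2$, and any such $R W^{1 + d \kappa / 2}$ is much smaller than $N/2$ because the assumption $\kappa < 1/3$ combined with $N \geq W^{1 + d/6}$ yields $W^{1 + d \kappa / 2} = o(N)$. Thus with overwhelming probability $\ol S_n = S_n$ (no winding), and by a standard truncation argument one can replace $\varphi(\ol S_n / W^{1+d\kappa/2})$ by $\varphi(S_n / W^{1+d\kappa/2})$ up to an error that vanishes as $W\to\infty$ uniformly in $N \geq W^{1 + d/6}$, using that $\varphi$ is bounded and continuous.

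Finally, the classical multivariate CLT applied to $S_n / W^{1 + d \kappa/2}$ with the variance computation above gives weak convergence to $\cal N(0, (T/(d+2)) \umat)$, whose density is $G(T, X)$. Combining the last three steps yields \eqref{application of the CLT}. The main obstacle is not the CLT itself (which is textbook for i.i.d.\ lattice steps with finite covariance) but the interplay with the periodic boundary condition on $\Lambda_N$; this is resolved entirely by the hypothesis $\kappa < 1/3$, which makes the diffusive length scale $W^{1 + d\kappa/2}$ small compared to the system size $N \geq W^{1 + d/6}$.
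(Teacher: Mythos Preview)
Your proof is correct and follows essentially the same approach as the paper: both recognize $P_x(n)$ as the distribution of a random walk with i.i.d.\ steps uniform on the annulus, show that the periodic wrapping is negligible because $W^{1+d\kappa/2} = o(N)$ under $\kappa < 1/3$ and $N \geq W^{1+d/6}$, and then invoke the (triangular-array) central limit theorem with covariance $\frac{T}{d+2}\umat$. The only cosmetic difference is that the paper unwinds the periodization via the projection $\pi:\Z^d\to\Lambda_N$ and uses the CLT itself to kill the event $\{|S_n|\geq N/2\}$, whereas you use a Chebyshev bound; both are equivalent.
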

\begin{proof}
Let $\widetilde{P}_x(n)$ denote the normalized number of paths in $\Z^d$ of length $n$ from $0$ 
to $x$, whereby each step takes values in $\{y \,:\, 1 \leq \abs{y} \leq W\}$. Then we have
\begin{align*}
\sum_{x \in \Lambda_N} P_x([\eta T]) \, \varphi \pbb{\frac{x}{W^{1 + d \kappa/2}}}
&\;=\; \sum_{x \in \Lambda_N} \sum_{\nu \in \Z^d} \widetilde{P}_{x + \nu N}([\eta T]) \, \varphi 
\pbb{\frac{x}{W^{1 + d \kappa/2}}}
\\
&\;=\; \sum_{x \in \Z^d} \widetilde{P}_x([\eta T]) \, \varphi \pbb{\frac{\pi(x)}{W^{1 + d 
\kappa/2}}}\,,
\end{align*}
where $\pi(x)$ is defined through $\pi(x) \in \Lambda_N$ and $x - \pi(x) \in N \Z^d$.
Define the 
sequence of i.i.d.\ random variables $A_1, A_2, \dots$ whose law is
$\frac{1}{M} \sum_{a \in \Z^d} \ind{1 \leq \abs{a} \leq W} \, \delta_a$,
where $\delta_a$ denotes the point measure at $a$.
Then we have
\begin{equation} \label{rewriting sum over paths as an expectation}
\sum_{x \in \Z^d} \widetilde{P}_x([\eta T]) \, \varphi \pbb{\frac{\pi(x)}{W^{1 + d \kappa / 2}}} 
\;=\; \E \, \varphi\pbb{\frac{\pi \pb{A_1 + \cdots A_{[\eta T]}}}{W^{1 + d \kappa/2}}}\,.
\end{equation}
Next, we introduce the partition
\begin{equation*}
1 \;=\; \indb{\absb{A_1 + \cdots + A_{[\eta T]}} < N/2} + \indb{\absb{A_1 + \cdots + A_{[\eta T]}} \geq
N/2}
\end{equation*}
in the expectation in \eqref{rewriting sum over paths as an expectation}. The second resulting 
term is bounded by
\begin{equation*}
\norm{\varphi}_\infty \, \P \pb{\absb{A_1 + \cdots + A_{[\eta T]}} \geq
N/2}\,.
\end{equation*}
This vanishes in the limit $W \to \infty$ by the central limit theorem, since $\frac{N}{W 
\sqrt{[\eta T]}} \to \infty$ by assumption.

The first term resulting from the partition is
\begin{equation*}
\E \, \varphi\pbb{\frac{A_1 + \cdots A_{[\eta T]}}{W^{1 + d \kappa/2}}} \indb{\absb{A_1 + \cdots + 
A_{[\eta T]}} < N/2} \;=\;
\E \, \varphi\pbb{\frac{A_1 + \cdots A_{[\eta T]}}{W^{1 + d \kappa/2}}} + o(1)\,,
\end{equation*}
by the same argument as above. Therefore we get
\begin{equation*}
\sum_{x \in \Lambda_N} P_x([\eta T]) \, \varphi \pbb{\frac{x}{W^{1 + d \kappa/2}}} \;=\; \E \, 
\varphi \pbb{\frac{B_1 + \cdots + B_{[\eta T]}}{\sqrt{[\eta T]}}} + o(1)\,,
\end{equation*}
where
$B_i \deq \frac{A_i}{W} \frac{\sqrt{[\eta T]}}{\sqrt{\eta}}$.
The covariance matrix of $B_i$ is $\frac{T}{d+2} \umat + o(1)$,
and the claim \eqref{application of the CLT} follows by the central limit theorem.
\end{proof}

\begin{figure}[ht!]
\begin{center}
\includegraphics[width=8cm]{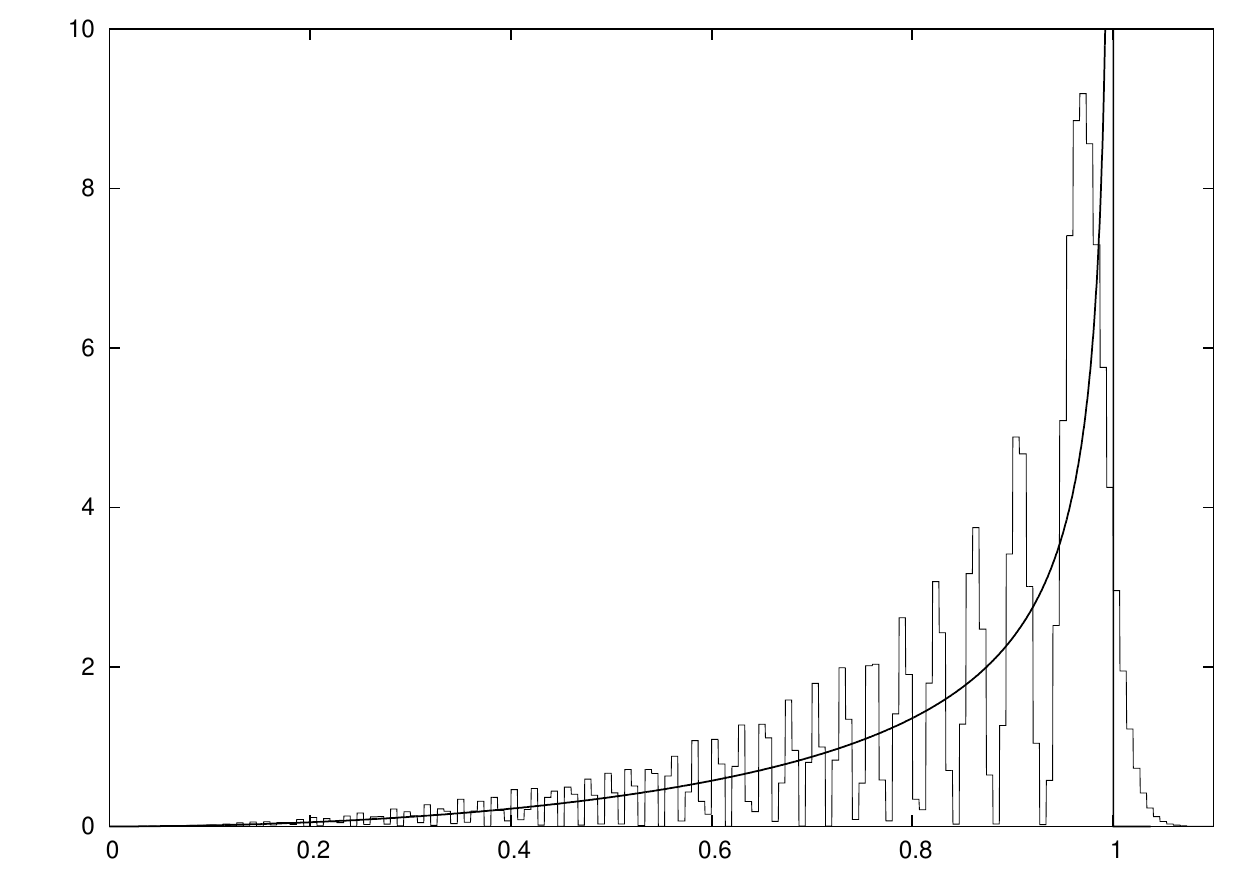}
\includegraphics[width=8cm]{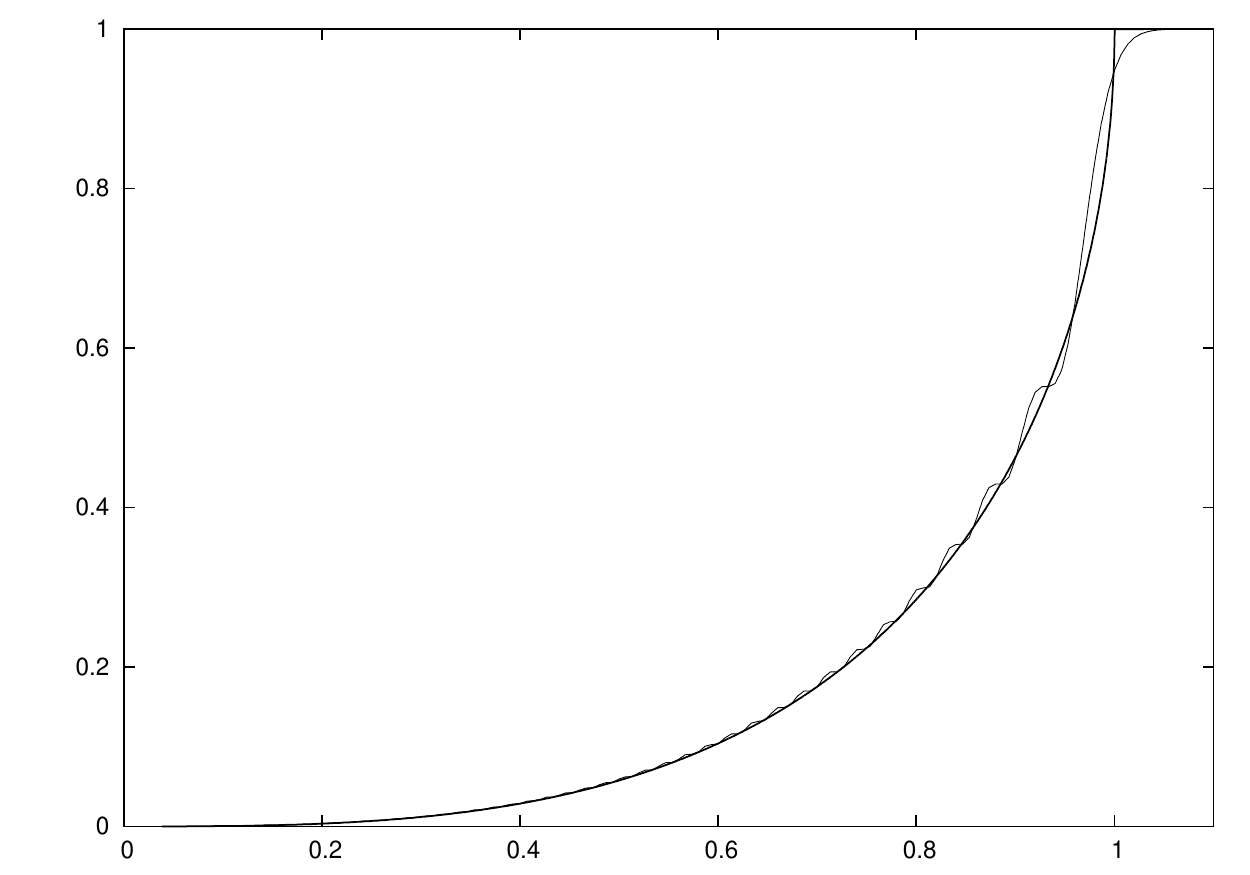}
\end{center}
\caption{The functions $f_t(\lambda)$, $f(\lambda)$ (left) and $F_t(\lambda)$, $F(\lambda)$ (right). Here we chose $t = 
150$.  \label{figure: distributions}}
\end{figure}

In a fourth and final step, we replace the probability distribution $\abs{\alpha_n(t)}^2$ with 
its asymptotic distribution. For the following we fix some test function $\varphi \in 
C_b(\R^d)$. Testing against $\varphi$ in Lemma \ref{replacing as with alphas} yields
\begin{equation} \label{ladders replaced p_n}
\sum_{n \geq 0} \abs{a_n(\eta T)}^2 \,\sum_x V_x(L_n) \, \varphi \pbb{\frac{x}{W^{1 + d \kappa 
/2}}}
\;=\; \sum_{n \geq 0} \abs{\alpha_n(\eta T)}^2 \sum_x P_x(n) \, \varphi \pbb{\frac{x}{W^{1 + \kappa 
d /2}}} + O\pbb{\frac{\norm{\varphi}_\infty}{W^{d \beta}}}\,.
\end{equation}
While the distribution $\abs{\alpha_n(t)}^2$ has no limit as $t \to \infty$, it turns out that 
the rescaled distribution,
\begin{equation*}
f_t(\lambda) \;\deq\; t \, \abs{\alpha_{[t \lambda]}(t)}^2\,,
\end{equation*}
converges weakly to
\begin{equation*}
f(\lambda) \;\deq\; \frac{4}{\pi} \frac{\lambda^2}{\sqrt{1 - \lambda^2}} \ind{0 \leq \lambda \leq 1}\,.
\end{equation*}
In order to prove this, we consider the integrated distribution
\begin{equation*}
F_t(\lambda) \;\deq\; \int_0^{\lambda} \dd \xi \; f_t(\xi)\,.
\end{equation*}
We now show that $F_t(\lambda)$ converges pointwise to $F(\lambda) = \int_0^\lambda f$. See 
Figure \ref{figure: distributions} for a graph of the functions $f_t, f, F_t, F$.

\begin{proposition} \label{proposition: asymptotics of alpha_n}
The pointwise limit
\begin{equation*}
F(\lambda) \;\deq\; \lim_{t \to \infty} F_t(\lambda)
\end{equation*}
exists for all $\lambda \geq 0$ and satisfies
\begin{subequations}
\begin{align} \label{asymptotic form of alpha 1}
F(\lambda) &\;=\; \int_0^\lambda \dd \xi \; \frac{4}{\pi} \frac{\xi^2}{\sqrt{1 - \xi^2}} \;=\; 
\frac{2}{\pi} \pB{\arcsin \lambda - \lambda \sqrt{1 - \lambda^2}} & &(\lambda \in [0,1])
\\ \label{asymptotic form of alpha 2}
F(\lambda) &\;=\; 1 & &(\lambda > 1)\,.
\end{align}
\end{subequations}
\end{proposition}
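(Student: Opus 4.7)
The plan is to substitute $\alpha_k(t) = 2(-\ii)^k \frac{k+1}{t} J_{k+1}(t)$ into the definition of $f_t$, apply the classical Debye expansion for Bessel functions in the oscillatory regime $\nu < t$, and exploit Riemann--Lebesgue to discard the rapid oscillations. The orthonormality relation \eqref{orthonormality of Chebyshev coefficients} will then handle $\lambda \geq 1$.

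First I would treat $\lambda \in [0,1)$. Writing $\mu_k \deq (k+1)/t$, we have $|\alpha_k(t)|^2 = 4 \mu_k^2 \, J_{k+1}(t)^2$. For $k$ such that $\mu_k$ is bounded away from $0$ and $1$ (say $\mu_k \in [\delta, 1-\delta]$), the Debye expansion yields, uniformly in $k$,
\begin{equation*}
J_{k+1}(t)^2 \;=\; \frac{1}{\pi t \sqrt{1 - \mu_k^2}} \bigl[ 1 + \sin (t g(\mu_k)) \bigr] + O(t^{-2})\,,
\end{equation*}
where $g(\mu) \deq 2 \bigl( \sqrt{1 - \mu^2} - \mu \arccos \mu \bigr)$ and $g'(\mu) = -2 \arccos \mu$. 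Recognising $\frac{1}{t} \sum_k$ as a Riemann sum of mesh $1/t$, one obtains, up to an $o(1)$ error,
\begin{equation*}
F_t(\lambda) \;=\; \int_0^{\lambda} \frac{4 \xi^2}{\pi \sqrt{1 - \xi^2}} \, \dd \xi + \int_0^{\lambda} \frac{4 \xi^2 \sin (t g(\xi))}{\pi \sqrt{1 - \xi^2}} \, \dd \xi + o(1)\,.
\end{equation*}
The first integral evaluates to $\frac{2}{\pi}(\arcsin \lambda - \lambda \sqrt{1 - \lambda^2})$, which is \eqref{asymptotic form of alpha 1}. Since $g'$ does not vanish on $(0, 1)$, the change of variables $s = g(\xi)$ turns the oscillatory integral into a Fourier integral of an $L^1$ function, which tends to $0$ by Riemann--Lebesgue.

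For $\lambda > 1$, the identity \eqref{orthonormality of Chebyshev coefficients} combined with the same Riemann-sum recognition gives $\int_0^\infty f_t(\xi) \, \dd \xi = 1$ for every $t$, so $F_t$ is the distribution function of a probability measure on $[0, \infty)$. Letting $\lambda \to 1^-$ in the previous paragraph and using continuity of $F$ at $1$, one deduces $F_t(1) \to 1$. Monotonicity of $F_t$ and the bound $F_t \leq 1$ then force $F_t(\lambda) \to 1$ for every $\lambda \geq 1$, which is \eqref{asymptotic form of alpha 2}.

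The main obstacle is the transition region $\mu_k \approx 1$, i.e.\ $k+1 - t = O(t^{1/3})$, where the Debye asymptotic breaks down (one formally needs a uniform Airy expansion) and the amplitude $1/\sqrt{1-\mu^2}$ diverges. The key observation is that one does not need pointwise control there: the mass carried by the window $[1 - \delta, 1+\delta]$ is at most $1 - F_t(1 - \delta) + (F_t(1+\delta) - F_t(1))$, which by the previous step and monotonicity is arbitrarily small for $\delta$ small and $t$ large. The orthonormality identity therefore absorbs the Airy region ``for free''. The minor residual technicalities -- the integer-part rounding in $[t\xi]$ and the Riemann-sum error near $\mu = 0$ -- each contribute only $O(1/t)$ and cause no difficulty.
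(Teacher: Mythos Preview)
Your approach is correct and genuinely different from the paper's. The paper does not invoke the Bessel formula \eqref{formula for alpha} at all; instead it starts from the Chebyshev integral representation \eqref{definition of the coefficients alpha}, writes $\widetilde F_t(\lambda)$ as a double integral over $[0,\pi]^2$ in the angle variable $\theta$ (via $\xi=\cos\theta$), and then performs a multi-step stationary-phase and contour-integration analysis: two of the four resulting terms are killed by nonstationary phase, the remaining principal-value terms are localised to a strip $|\theta-\theta'|\le t^{-\epsilon}$, Taylor-expanded, and finally evaluated by deforming a contour into the upper half-plane. The endpoint argument for $\lambda\ge 1$ via monotonicity and $\sum|\alpha_k|^2=1$ is the same in both proofs. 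Your route is considerably shorter because the Debye expansion packages exactly the stationary-phase information the paper extracts by hand; the price is that you import a nontrivial uniform asymptotic from the Bessel literature, whereas the paper's argument is self-contained.

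One point deserves tightening. When you ``recognise $\tfrac{1}{t}\sum_k$ as a Riemann sum'' and pass to $\int_0^\lambda \frac{4\xi^2\sin(tg(\xi))}{\pi\sqrt{1-\xi^2}}\,\dd\xi$, the Riemann-sum error for the oscillatory piece is not $o(1)$ by the naive bound: the integrand has derivative of size $t$ on a mesh of width $1/t$, so the standard estimate gives only $O(1)$. What is true is that the \emph{discrete} oscillatory sum $\tfrac{1}{t}\sum_k h(\mu_k)\sin(tg(\mu_k))$ is itself $O(1/t)$, by Abel summation using that $|g'|=2\arccos\mu$ is bounded away from $0$ on $[\delta,\lambda]$ (so consecutive phases $tg(\mu_{k+1})-tg(\mu_k)=g'(\mu_k)+O(1/t)$ stay uniformly away from $2\pi\Z$). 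This replaces your Riemann--Lebesgue step and closes the argument cleanly; no intermediate continuous integral is needed.
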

\begin{proof}
See Appendix \ref{appendix: proof of asymptotics}.
\end{proof}

In order to conclude the proof of Theorem \ref{theorem: first main result}, we need the 
following result.
\begin{proposition} \label{proposition: asymptotics of ladders}
Let $T \geq 0$. Then
\begin{equation*}
\lim_{W \to \infty} \sum_{n \geq 0} \abs{a_n(\eta T)}^2 \,\sum_x V_x(L_n) \, \varphi 
\pbb{\frac{x}{W^{1 + d \kappa /2}}}
\\
=\; \int_0^\infty \dd \lambda \; f(\lambda) \int \dd X \; G(\lambda T, X) \, \varphi(X)\,.  
\end{equation*}
\end{proposition}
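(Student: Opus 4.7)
The plan is to apply equation~\eqref{ladders replaced p_n}, which expresses the left-hand side in the proposition as
\begin{equation*}
\sum_{n \geq 0} \abs{\alpha_n(\eta T)}^2 g_W(n) + O(\norm{\varphi}_\infty W^{-d\beta}),
\end{equation*}
where $g_W(n) \deq \sum_x P_x(n)\, \varphi(x/W^{1 + d\kappa/2})$. Setting $t \deq \eta T$ and using that $f_t(\lambda) \deq t\abs{\alpha_{[\lambda t]}(t)}^2$ is piecewise constant on intervals of length $1/t$, one obtains the exact rewriting
\begin{equation*}
\sum_{n \geq 0} \abs{\alpha_n(t)}^2 g_W(n) \;=\; \int_0^\infty \dd \lambda \; f_t(\lambda)\, g_W([\lambda t]).
\end{equation*}
Abbreviating $\tilde g(\lambda) \deq \int \dd X\, G(\lambda T, X) \varphi(X)$, the goal becomes to show that this integral converges to $\int_0^\infty f(\lambda) \tilde g(\lambda)\, \dd \lambda$ as $W \to \infty$.

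I would then combine two ingredients. First, Proposition~\ref{proposition: asymptotics of alpha_n} gives pointwise convergence of the continuous distribution function $F_t \to F$, which is equivalent to the weak convergence of the probability measures $f_t(\lambda)\, \dd \lambda$ to $f(\lambda)\, \dd \lambda$ on $[0, \infty)$. Second, Lemma~\ref{lemma: central limit argument} applied with $T$ replaced by $\lambda T$ yields the pointwise convergence $g_W([\lambda t]) \to \tilde g(\lambda)$ for each fixed $\lambda > 0$; moreover $\abs{g_W} \leq \norm{\varphi}_\infty$ uniformly, and $\tilde g$ is continuous and bounded on $[0, \infty)$ (with $\tilde g(\lambda) \to \varphi(0)$ as $\lambda \to 0^+$ by the standard heat kernel approximation). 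The natural split is
\begin{equation*}
\int_0^\infty f_t(\lambda)\, g_W([\lambda t])\, \dd \lambda \;=\; \int_0^\infty f_t(\lambda)\, \tilde g(\lambda)\, \dd \lambda \,+\, \int_0^\infty f_t(\lambda)\, \qb{g_W([\lambda t]) - \tilde g(\lambda)}\, \dd \lambda.
\end{equation*}
The first integral converges to $\int_0^\infty f \tilde g\, \dd \lambda$ by weak convergence of $f_t\,\dd\lambda$ tested against the bounded continuous function $\tilde g$. For the second I would cut off at $\lambda_0 > 0$ small and $\Lambda > 1$ large: on $[0, \lambda_0] \cup [\Lambda, \infty)$ the $f_t$-mass is bounded by $F_t(\lambda_0) + (1 - F_t(\Lambda))$, which by Proposition~\ref{proposition: asymptotics of alpha_n} converges to $F(\lambda_0) + (1 - F(\Lambda))$ and can be made arbitrarily small. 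On the compact window $[\lambda_0, \Lambda]$ it then suffices to prove uniform convergence $\sup_{\lambda \in [\lambda_0, \Lambda]} \abs{g_W([\lambda t]) - \tilde g(\lambda)} \to 0$.

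The main obstacle is this last uniform convergence, which upgrades the pointwise statement of Lemma~\ref{lemma: central limit argument}. My plan is to redo the Fourier-analytic version of the CLT from that lemma: writing $S_n = A_1 + \cdots + A_n$ with $n = [\lambda t]$, the characteristic function of $S_n / W^{1 + d\kappa/2}$ equals $[\hat\mu_W(\xi / W^{d\kappa/2})]^n$, where $\hat\mu_W$ is the characteristic function of $A_1/W$. As $W \to \infty$ one has $\hat\mu_W \to \hat\mu$ uniformly on compacts, where $\hat\mu$ is the characteristic function of the uniform measure on the unit ball, with Taylor expansion $\hat\mu(\xi) = 1 - \abs{\xi}^2 /(2(d + 2)) + O(\abs{\xi}^4)$. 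Combined with $n/W^{d\kappa} = \lambda T + o(1)$ uniformly in $\lambda \in [\lambda_0, \Lambda]$, this yields uniform convergence of the characteristic function to $\exp(-\lambda T \abs{\xi}^2 / (2(d+2)))$ on compacts, and hence uniform convergence of the expectations of $\varphi$ applied to $S_n / W^{1 + d\kappa/2}$. The periodization error replacing $S_n$ by $\pi(S_n)$ is estimated via $\P(\abs{S_n} \geq N/2)$, which is uniform in $n \leq \Lambda t$ by the same CLT together with the assumption $N \geq W M^{1/6}$. Passing to the limit $W \to \infty$ first and then letting the cutoffs $\lambda_0 \to 0$, $\Lambda \to \infty$ completes the proof.
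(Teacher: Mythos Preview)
Your approach is correct and arrives at the same conclusion, but it diverges from the paper's at the key technical step.

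Both you and the paper start from \eqref{ladders replaced p_n}, rewrite the sum as $\int_0^\infty f_t(\lambda)\,g_W([\lambda t])\,\dd\lambda$, and use the tightness of $\{f_t\}$ (from Proposition~\ref{proposition: asymptotics of alpha_n}) to localize to a compact $\lambda$-window away from $0$. The difference is in how the two limits $f_t\to f$ and $g_W\to\tilde g$ are decoupled. The paper proves a uniform pointwise bound $f_t(\lambda)\le C$ on $[\delta,1-\delta]$ (Lemma~\ref{bound for dominated convergence}, via Krasikov's sharp inequality for Bessel functions), which lets it invoke dominated convergence using only the \emph{pointwise} CLT of Lemma~\ref{lemma: central limit argument}; the passage $\int f_t\,\tilde g\to\int f\,\tilde g$ is then done by an integration by parts against $F_t$. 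You instead use only that $\int f_t\le 1$ and compensate by upgrading the CLT to be \emph{uniform} in $\lambda$ over compacts; the passage $\int f_t\,\tilde g\to\int f\,\tilde g$ is handled by weak convergence of measures tested against $\tilde g\in C_b$.

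Your route has the advantage of being more elementary: it avoids the special-function input (Krasikov's bound) entirely. The cost is the uniform CLT, which is true but whose justification you only sketch. The cleanest way to close it is not the raw characteristic-function computation but a compactness argument: the family $\{S_{[\lambda t]}/W^{1+d\kappa/2}:\lambda\in[\lambda_0,\Lambda],\,W\}$ is tight (second moments are uniformly bounded), so if uniform convergence failed one could extract $W_k\to\infty$, $\lambda_k\to\lambda^*$ along which $g_{W_k}([\lambda_k t_k])\not\to\tilde g(\lambda^*)$; but your characteristic-function estimate and tightness give weak convergence to $N(0,\lambda^*T/(d+2)\,\umat)$ along any such sequence, and continuity of $\tilde g$ then yields a contradiction. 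With this filled in, your proof is complete.
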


Indeed, Theorem \ref{theorem: first main result} is an immediate consequence of Propositions 
\ref{proposition: main estimate on error} and \ref{proposition: asymptotics of ladders}. The 
rest of this section is devoted to the proof of Proposition \ref{proposition: asymptotics of 
ladders}.

We begin by observing that the family of probability measures defined by the densities 
$\{f_t\}_{t \geq 0}$ is tight, so that we may cut out values of $\lambda$ in the range 
$[0,\delta) \cup (1 - \delta, \infty)$.
\begin{lemma} \label{tightness of measures}
Let $\epsilon > 0$. Then there is a $\delta > 0$ and a $t_0 \geq 0$ such that
\begin{equation*}
F(\delta) + 1 - F(1 - \delta) \;\leq\; \frac{\epsilon}{\norm{\varphi}_\infty}
\end{equation*}
and
\begin{equation*}
F_t(\delta) + 1 - F_t(1 - \delta) \;\leq\; \frac{\epsilon}{\norm{\varphi}_\infty}
\end{equation*}
for all $t \geq t_0$.
\end{lemma}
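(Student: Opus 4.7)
The plan is to reduce tightness of the family $\{F_t\}$ to the pointwise convergence $F_t \to F$ already established in Proposition~\ref{proposition: asymptotics of alpha_n}, after first verifying that the limit $F$ is well-behaved near the endpoints $\lambda = 0$ and $\lambda = 1$, and that $F_t$ has total mass exactly one for all $t$.

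First I would handle the bound on $F$ itself. From the explicit formula \eqref{asymptotic form of alpha 1}, $F$ is continuous on $[0,1]$ with $F(0) = 0$, and from \eqref{asymptotic form of alpha 2}, $F(\lambda) = 1$ for all $\lambda \geq 1$; in particular $F(1^-) = \frac{2}{\pi}\arcsin 1 = 1$. Hence one may fix $\delta_0 > 0$ so small that
\begin{equation*}
F(\delta_0) + 1 - F(1 - \delta_0) \;<\; \frac{\epsilon}{2 \norm{\varphi}_\infty}\,.
\end{equation*}

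Next I would observe that $F_t$ has total mass $1$ for every $t$. Indeed, by a change of variables $u = t \xi$ together with \eqref{orthonormality of Chebyshev coefficients},
\begin{equation*}
\int_0^\infty f_t(\xi) \, \dd \xi \;=\; \int_0^\infty t \absb{\alpha_{[t\xi]}(t)}^2 \dd \xi \;=\; \int_0^\infty \absb{\alpha_{[u]}(t)}^2 \dd u \;=\; \sum_{n \geq 0} \abs{\alpha_n(t)}^2 \;=\; 1\,.
\end{equation*}
Consequently $1 - F_t(1 - \delta_0) = F_t(\infty) - F_t(1 - \delta_0)$, and the analogous identity holds for $F$. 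Thus the claim about $F_t$ reduces to controlling $F_t(\delta_0)$ and $F_t(1 - \delta_0)$ individually.

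Now I would invoke Proposition~\ref{proposition: asymptotics of alpha_n} at the two points $\lambda = \delta_0$ and $\lambda = 1 - \delta_0$: there exists $t_0$ such that for all $t \geq t_0$,
\begin{equation*}
\absb{F_t(\delta_0) - F(\delta_0)} + \absb{F_t(1 - \delta_0) - F(1 - \delta_0)} \;<\; \frac{\epsilon}{2 \norm{\varphi}_\infty}\,.
\end{equation*}
Combining this with the previous estimates yields
\begin{equation*}
F_t(\delta_0) + 1 - F_t(1 - \delta_0) \;\leq\; F(\delta_0) + 1 - F(1 - \delta_0) + \frac{\epsilon}{2 \norm{\varphi}_\infty} \;\leq\; \frac{\epsilon}{\norm{\varphi}_\infty}\,,
\end{equation*}
and setting $\delta = \delta_0$ completes the proof. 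There is no real obstacle here: the only non-trivial input is Proposition~\ref{proposition: asymptotics of alpha_n}, together with the exact normalization $\sum_n |\alpha_n(t)|^2 = 1$ which prevents any mass of $f_t$ from escaping to infinity.
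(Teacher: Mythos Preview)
Your proof is correct and follows essentially the same route as the paper's own argument: choose $\delta$ so that $F(\delta)+1-F(1-\delta)$ is small using the explicit form of $F$, then transfer this to $F_t$ via the pointwise convergence $F_t\to F$ from Proposition~\ref{proposition: asymptotics of alpha_n}. Your explicit verification that $F_t(\infty)=1$ via \eqref{orthonormality of Chebyshev coefficients} is a helpful clarification that the paper leaves implicit.
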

\begin{proof}
By Proposition \ref{proposition: asymptotics of alpha_n} we have that
\begin{equation} \label{cutoff error}
F(\delta) + 1 - F(1 - \delta) \;\to\; 0
\end{equation}
as $\delta \to 0$. Choose $\delta > 0$ small enough that the left-hand side of \eqref{cutoff 
error} is bounded by $\frac{\epsilon}{2 \norm{\varphi}_\infty}$. Moreover, Proposition 
\ref{proposition: asymptotics of alpha_n} also implies that there is a $t_0$ such that
\begin{equation*}
F_t(\delta) + 1 - F_t(1 - \delta) \;\leq\; F(\delta) + 1 - F(1 - \delta) + \frac{\epsilon}{2 
\norm{\varphi}_\infty}
\end{equation*}
for all $t \geq t_0$.
\end{proof}

Now by \eqref{ladders replaced p_n}, Proposition \ref{proposition: asymptotics of ladders} will 
follow if we can show
\begin{equation*}
\sum_{n \geq 0} \abs{\alpha_n(\eta T)}^2 \sum_x P_x(n) \, \varphi \pbb{\frac{x}{W^{1 + d \kappa 
/2}}} \;=\; \int_0^\infty \dd \lambda \; f(\lambda) \int \dd X \; G(\lambda T, X) \, \varphi(X) 
+ o(1)\,,
\end{equation*}
i.e.
\begin{equation} \label{convergence of ladders without cutoff}
\int_0^\infty \dd \lambda \; f_{\eta T}(\lambda) \sum_x P_x([\eta T \lambda]) \, \varphi 
\pbb{\frac{x}{W^{1 + d \kappa /2}}} \;=\; \int_0^\infty \dd \lambda \; f(\lambda) \int \dd X \; 
G(\lambda T, X) \, \varphi(X) + o(1)\,.
\end{equation}
Lemma \ref{tightness of measures} implies that in order to prove \eqref{convergence of ladders 
without cutoff} it suffices to prove
\begin{equation} \label{convergence of ladders with cutoff}
\int_\delta^{1 - \delta} \dd \lambda \; f_{\eta T}(\lambda) \sum_x P_x([\eta T \lambda]) \, \varphi 
\pbb{\frac{x}{W^{1 + d \kappa /2}}} \;=\; \int_\delta^{1 - \delta} \dd \lambda \; f(\lambda) 
\int \dd X \; G(\lambda T, X) \, \varphi(X) + o(1)\,,
\end{equation}
for every $\delta > 0$.

Next, note that, by Lemma \ref{lemma: central limit argument}, the sum on the left-hand side of 
\eqref{convergence of ladders with cutoff} converges to
$\int \dd X \, G(\lambda T, X) \, \varphi(X)$
for each $\lambda \in [\delta, 1 - \delta]$. In order to invoke the dominated convergence theorem, we need an integrable 
bound on $f_t(\lambda)$.

\begin{lemma} \label{bound for dominated convergence}
Let $\delta > 0$. Then there is a $C > 0$ such that
$f_t(\lambda) \leq C$
for all $\lambda \in [\delta, 1 - \delta]$ and $t$ large enough.
\end{lemma}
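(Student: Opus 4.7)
The plan is to exploit the explicit formula \eqref{formula for alpha}, which gives
\[
f_t(\lambda) \;=\; t\,\abs{\alpha_{[t\lambda]}(t)}^2 \;=\; \frac{4([t\lambda]+1)^2}{t}\,J_{[t\lambda]+1}(t)^2,
\]
so the claim reduces to a uniform upper bound on $J_n(t)$ in the oscillatory regime $n < t$. For $\lambda \in [\delta, 1-\delta]$ and $t$ large enough, the index $n \deq [t\lambda]+1$ lies in $[\delta t/2,(1-\delta/2)t]$, which is exactly the regime in which the stationary phase of the integral representation
\[
J_n(t) \;=\; \frac{1}{\pi}\int_0^\pi \cos\pb{t\sin\theta - n\theta}\,\dd\theta
\]
has a nondegenerate critical point at $\theta_* = \arccos(n/t)$ sitting strictly inside $(0,\pi)$.

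The key input would be a Debye-type bound of the form
\[
J_n(t)^2 \;\leq\; \frac{C_\delta}{\sqrt{t^2 - n^2}}\,,\qquad n \leq (1-\delta/2)t,
\]
with a constant $C_\delta$ depending only on $\delta$. This is classical and can be obtained by splitting $\cos = \tfrac12(\ee^{\ii\phi}+\ee^{-\ii\phi})$ with $\phi(\theta) = t\sin\theta - n\theta$ and applying the van der Corput stationary phase estimate to each piece; the relevant second derivative is $\abs{\phi''(\theta_*)} = t\sin\theta_* = \sqrt{t^2 - n^2}$, which is bounded below by $\sqrt{2\delta-\delta^2}\,t$ on the $\lambda$-range of interest, so the bound is uniform in $t$ and in $\lambda \in [\delta,1-\delta]$. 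Inserting this into the formula for $f_t$ and using $n \leq t\lambda + 1$ gives
\[
f_t(\lambda) \;\leq\; \frac{4 n^2}{t}\cdot\frac{C_\delta}{\sqrt{t^2-n^2}} \;\leq\; \frac{C_\delta'\, \lambda^2}{\sqrt{1-\lambda^2}}\,,
\]
which on $[\delta, 1-\delta]$ is bounded by a $\lambda$-independent constant; this is also consistent with the identification of the limiting density $f(\lambda) = \frac{4}{\pi}\frac{\lambda^2}{\sqrt{1-\lambda^2}}$ given in Proposition~\ref{proposition: asymptotics of alpha_n}, the factor $2$ coming from the average of $\cos^2$.

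The main obstacle is that a purely pointwise Debye asymptotic is insufficient: one needs an estimate with an effective remainder that is uniform in both $t$ and $\lambda$ across the compact interval $[\delta, 1-\delta]$. This is standard (see e.g.\ Watson's treatise on Bessel functions, or the same stationary phase argument presumably already used in Appendix~\ref{appendix: proof of asymptotics} to establish Proposition~\ref{proposition: asymptotics of alpha_n}), and once it is in place the rest of the argument is immediate. In fact, rather than reproving the Bessel bound from scratch, one can likely extract $f_t \leq C_\delta$ directly from the same uniform estimates that drive the pointwise convergence $F_t \to F$ in Proposition~\ref{proposition: asymptotics of alpha_n}.
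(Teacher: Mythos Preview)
Your proposal is correct and follows the same overall strategy as the paper: reduce the claim, via \eqref{formula for alpha}, to a uniform bound of the type $\abs{J_{[t\lambda]+1}(t)}^2 \leq C/t$ on $\lambda \in [\delta,1-\delta]$. The difference lies only in how that Bessel bound is obtained. The paper does not reprove it from scratch but instead invokes an explicit inequality of Krasikov (Theorem~2 in \cite{krasikov}): for $\nu > -1/2$ and $t > \sqrt{\mu + \mu^{2/3}}/2$ with $\mu = (2\nu+1)(2\nu+3)$,
\[
\abs{J_\nu(t)}^2 \;\leq\; \frac{4}{\pi}\,\frac{4t^2 - (2\nu+1)(2\nu+5)}{(4t^2-\mu)^{3/2} - \mu}\,,
\]
which with $\nu = [t\lambda]+1$ immediately gives $\abs{J_\nu(t)}^2 \leq C/t$ uniformly on the compact $\lambda$-interval. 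Your route through the integral representation and van der Corput yields the sharper form $\abs{J_n(t)}^2 \leq C_\delta (t^2-n^2)^{-1/2}$ and even recovers the shape of the limiting density $f(\lambda)$, which is more informative; the trade-off is that you must supply the uniformity of the stationary phase remainder yourself, whereas the paper sidesteps that by quoting a ready-made inequality. Either approach is entirely adequate here.
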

\begin{proof}
From Lemma \ref{lemma: Chebyshev transform of exponential} we get
\begin{equation*}
\absb{\alpha_{[t \lambda]}(t)}^2 t \;\leq\; C \absb{J_{[t \lambda] + 1}(t)}^2 t\,.
\end{equation*}
We estimate this using the following result due to Krasikov (see \cite{krasikov}, Theorem 2).
Setting $\mu \deq (2 \nu + 1)(2 \nu + 3)$ and assuming that $\nu > -1/2$ and $t > \sqrt{\mu + 
\mu^{2/3}} / 2$, we have the bound
\begin{equation*}
\abs{J_\nu(t)}^2 \;\leq\; \frac{4}{\pi} \, \frac{4 t^2 - (2 \nu + 1) (2 \nu + 5)}{(4 t^2 - 
\mu)^{3/2} - \mu}\,.
\end{equation*}
Setting $\nu = [t \lambda] + 1$ yields
$\abs{J_{[t \lambda] + 1}(t)}^2 \leq \frac{C}{t}$
for $\lambda \in (\delta, 1 - \delta)$ and $t$ large enough. This completes the proof.
\end{proof}

By Lemmas \ref{bound for dominated convergence} and \ref{lemma: central limit argument}, it is 
enough to prove that
\begin{equation} \label{convergence of ladders with Gaussian}
\int_\delta^{1 - \delta} \dd \lambda \; f_{\eta T}(\lambda) \int \dd X \; G(\lambda T, X) \, 
\varphi(X) \;=\; \int_\delta^{1 - \delta} \dd \lambda \; f(\lambda) \int \dd X \; G(\lambda T, 
X) \, \varphi(X) + o(1)\,.
\end{equation}
Let us abbreviate
\begin{equation*}
g(\lambda) \;\deq\; \int \dd X \; G(\lambda T) \, \varphi(X)\,.
\end{equation*}
The proof of Proposition \ref{proposition: asymptotics of ladders} is therefore completed by 
the following result.
\begin{lemma}
Let $\delta > 0$. Then
\begin{equation*}
\lim_{t \to \infty}
\int_\delta^{1 - \delta} \dd \lambda \; f_t(\lambda)  g(\lambda) \;=\; \int_\delta^{1 - \delta} 
\dd \lambda \; f(\lambda)  g (\lambda) \,.
\end{equation*}
\end{lemma}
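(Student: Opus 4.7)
The plan is to exploit the pointwise convergence $F_t \to F$ from Proposition \ref{proposition: asymptotics of alpha_n} via an integration by parts, thereby avoiding any need to deal with $f_t$ pointwise. This is really just the Helly--Bray theorem specialized to the compact interval $[\delta, 1-\delta]$, on which $F$ is continuous.

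First I would check that $g$ is continuously differentiable on $[\delta, 1-\delta]$ with bounded derivative. Since $G(s, X)$ and $\partial_s G(s, X)$ are jointly continuous in $(s, X)$ for $s > 0$ and decay as Gaussians in $X$ uniformly for $s \in [\delta T, (1-\delta) T]$, dominated convergence yields $g \in C^1([\delta, 1-\delta])$ together with $\norm{g}_\infty \leq \norm{\varphi}_\infty$ and a $\delta$-dependent bound on $\norm{g'}_\infty$.

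Next I would integrate by parts, using that $F_t$ is a piecewise linear antiderivative of the step function $f_t$, to write
\begin{equation*}
\int_\delta^{1-\delta} f_t(\lambda) \, g(\lambda) \, \dd \lambda \;=\; F_t(1-\delta) \, g(1-\delta) - F_t(\delta) \, g(\delta) - \int_\delta^{1-\delta} F_t(\lambda) \, g'(\lambda) \, \dd \lambda\,,
\end{equation*}
with the analogous identity holding for $F$ in place of $F_t$. By Proposition \ref{proposition: asymptotics of alpha_n}, $F_t(\delta) \to F(\delta)$ and $F_t(1-\delta) \to F(1-\delta)$, so the two boundary terms converge.

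For the interior integral, I would note that $F_t(\lambda) \leq \int_0^\infty f_t(\xi) \, \dd \xi = \sum_{k \geq 0} \abs{\alpha_k(t)}^2 = 1$ by \eqref{orthonormality of Chebyshev coefficients}; here the first equality follows from the change of variable $k = [t\xi]$, since each integer $k \geq 0$ contributes a strip of length $1/t$ on which $f_t$ takes the value $t \abs{\alpha_k(t)}^2$. Hence $F_t(\lambda) g'(\lambda)$ is dominated on $[\delta, 1-\delta]$ by the integrable function $\abs{g'(\lambda)}$, and pointwise convergence $F_t \to F$ combined with the bounded convergence theorem gives $\int_\delta^{1-\delta} F_t \, g' \, \dd \lambda \to \int_\delta^{1-\delta} F \, g' \, \dd \lambda$. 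Reassembling these pieces completes the proof. There is no real obstacle: the uniform $L^\infty$ bound on $F_t$ comes for free from the Chebyshev Parseval identity, and the regularity of $g$ is immediate from smoothness of the heat kernel away from $T = 0$.
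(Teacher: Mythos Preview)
Your argument is correct and follows essentially the same route as the paper's own proof: integrate by parts to replace $f_t$ by $F_t$, then use the pointwise convergence $F_t \to F$ from Proposition~\ref{proposition: asymptotics of alpha_n} together with dominated convergence and the uniform bound $F_t \leq 1$. Your write-up is slightly more detailed in justifying the smoothness of $g$ and the Parseval bound on $F_t$, but the strategy is identical.
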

\begin{proof}
The proof is a simple integration by parts. It is easy to check that on $[\delta, 1 - \delta]$ 
the function $g$ is smooth and its derivative is bounded. We find
\begin{equation*}
\int_\delta^{1 - \delta} \dd \lambda \; f_t(\lambda)  g(\lambda)
\;=\; \int_\delta^{1 - \delta} \dd \lambda \; F'_t(\lambda)  g(\lambda)
\;=\; - \int_\delta^{1 - \delta} \dd \lambda \; F_t(\lambda) g'(\lambda) + F_t(1 - \delta) g(1 - \delta) - F_t(\delta) 
g(\delta)\,.
\end{equation*}
Proposition \ref{proposition: asymptotics of alpha_n} and dominated convergence yield the claim.
\end{proof}

\section{Symmetric matrices} \label{section: symmetric matrices}

In this section we describe how to extend the argument of Sections \ref{section: graphical 
representation} -- \ref{section: ladder} to the symmetric case \eqref{hnormalization_2}. While 
in the Hermitian case \eqref{hnormalization_1} we had
\begin{equation*}
\E H_{xy} H_{yx} \;=\; \frac{1}{M-1} \,, \qquad \E H_{xy} H_{xy} \;=\; 0\,,
\end{equation*}
we now have
\begin{equation} \label{covariance of symmetric matrix elements}
\E H_{xy} H_{yx} \;=\; \E H_{xy} H_{xy} \;=\; \frac{1}{M-1}\,.
\end{equation}
Since the distribution of $H_{xy}$ is symmetric, Lemma \ref{lemma: characterization of the set 
of partitions} also holds in the symmetric case. However, \eqref{covariance of symmetric matrix 
elements} implies that there is no restriction on the order of the labels associated with an 
edge. Thus we replace \eqref{expression for V(Gamma)} with
\begin{equation} \label{expression for V(Gamma) for symmetric matrices}
V_x(\Gamma) \;=\; \sum_{\b x} Q_x(\b x) \sum_{\b \varrho_\Gamma} \pBB{\prod_{\gamma \in \Gamma} 
\Delta_{\b x}(\varrho_\gamma)} \pBB{\prod_{\gamma \neq \gamma'} \ind{\varrho_\gamma \neq 
\varrho_{\gamma'}}} \, \frac{1}{(M-1)^{\bar n}}
\,,
\end{equation}
where
\begin{equation*}
\Delta_{\b x}(\varrho_\gamma) \;\deq\; \prod_{e \in \gamma} \ind{\varrho_{\b x}(e) = \varrho_\gamma}\,.
\end{equation*}

Next, we define the set  $\scr G^*_{n,n'}$ as the set of lumpings $\scr G_{n,n'}$ without the 
complete ladder and the complete antiladder (see its definition below).
It is easy to see that the analogue of Lemma \ref{lemma: sum over non-pairings} holds with
\begin{equation*}
R_x(\Gamma) \;\deq\; \sum_{\b x} Q_x(\b x) \sum_{\b \varrho_\Gamma} \pBB{\prod_{\gamma \in \Gamma} 
\Delta_{\b x}(\varrho_\gamma)} \, \frac{1}{(M-1)^{\bar n}}
\,.
\end{equation*}
It therefore suffices to estimate the contribution of pairings $\Gamma \in \scr P^*_{n,n'}$. We 
have that
\begin{multline} \label{direct and twisted bridges}
R_x(\Gamma) \;=\; \sum_{\b x} Q_x(\b x) \, \frac{1}{(M - 1)^{\bar n}}
\\
\times \prod_{\{e, e'\} \in \Gamma} \pB{\ind{x_{a(e)}  = x_{b(e')}} \ind{x_{b(e)} =  x_{a(e')}} 
+
\ind{x_{a(e)}  = x_{a(e')}} \ind{x_{b(e)} =  x_{b(e')}}}\,.
\end{multline}
Thus, the graphical representation of pairings has to be modified as follows. Each bridge 
$\sigma \in \Gamma$ carries a tag, \emph{straight} or \emph{twisted}, which arises from 
multiplying out the product in \eqref{direct and twisted bridges}. Twisted bridges are 
graphically represented with dashed lines.

In order to find a good notion of combinatorial complexity of pairings, we define 
\emph{antiparallel bridges} as follows. Two bridges $\{e_i, e_j\}$ and $\{e_{i+1}, e_{j+1}\}$ 
are antiparallel if $i+1, j+1 \notin \{0,n\}$; see Figure \ref{figure: antiparallel bridges}.  
An \emph{antiladder} is a sequence of bridges such that two consecutive bridges are 
antiparallel.
\begin{figure}[ht!]
\begin{center}
\includegraphics{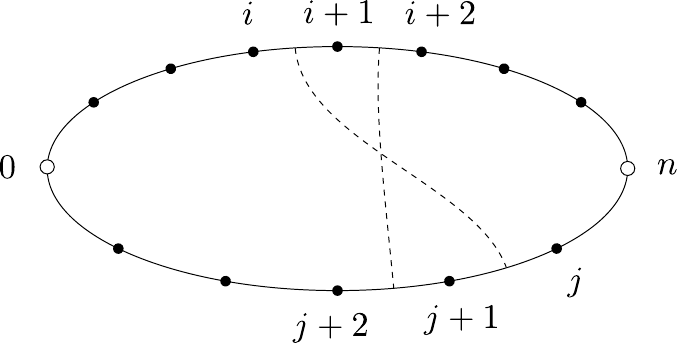}
\end{center}
\caption{Two antiparallel twisted bridges. Compare to Figure \ref{figure: parallel bridges}.
\label{figure: antiparallel bridges}}
\end{figure}
It is easy to see that, in addition to ladders whose rungs are straight bridges, antiladders 
whose rungs are twisted bridges have a leading order contribution.

The skeleton $\Sigma = S(\Gamma)$ of the pairing $\Gamma$ is obtained from $\Gamma$ by the 
following procedure. A pair of parallel straight bridges is collapsed to form a single straight 
bridge. A pair of antiparallel twisted bridges is collapsed to form a single twisted bridge.  
This is repeated until no parallel straight bridges or antiparallel twisted bridges remain.  
The resulting pairing is the skeleton $\Sigma = S(\Gamma)$; see Figure \ref{figure: fusing 
tagged bridges}.
\begin{figure}[ht!]
\begin{center}
\includegraphics{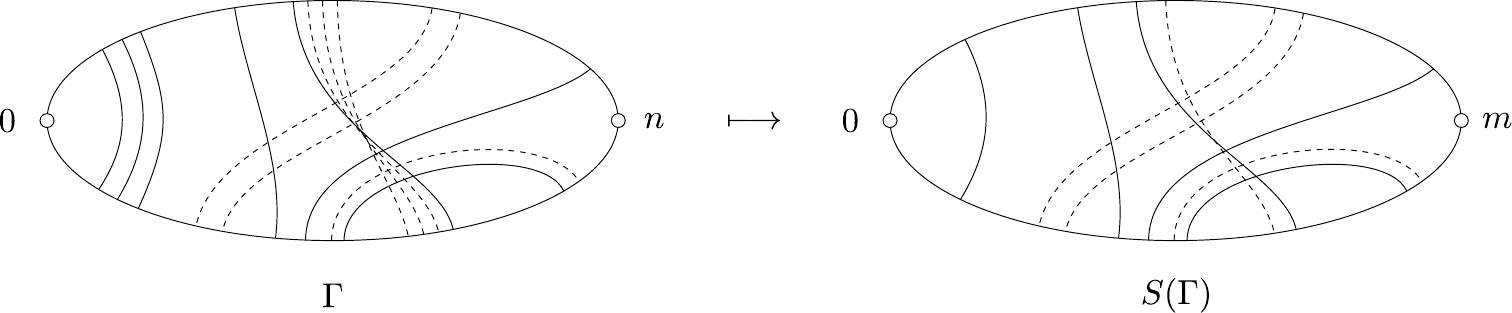}
\end{center}
\caption{The construction of the tagged skeleton graph. \label{figure: fusing tagged bridges}}
\end{figure}
Thus we see that Lemma \ref{lemma: collapsing bridges} holds. Moreover, Lemma \ref{lemma: 
properties of skeleton pairings} holds, provided that (i) is replaced with
\begin{itemize}
\item[(i')]
Each $\Sigma \in \scr S_{m,m'}^*$ contains no parallel straight bridges and no antiparallel 
twisted bridges.
\end{itemize}
Crucially, Lemma \ref{lemma: bound on number of orbits} remains valid for such tagged 
skeletons. This can be easily seen using the orbit construction of the proof of Lemma 
\ref{lemma: bound on number of orbits}, combined with (i').

Next, we associate a factor $D_{\ell}(y,z)$ with each bridge $\sigma \in \Sigma$. If $\sigma$ 
is straight, this is done exactly as in Section \ref{subsection: orbit of vertices}. If
$\sigma$ is twisted, this association follows immediately from the definition of the 
antiladder.  Thus we find that Lemma \ref{lemma: summing over orbit labels} holds. The rest of 
the analysis in Section \ref{section: error estimates} carries over almost verbatim; the only 
required modification is the summation over $2^{\bar m}$ tag configurations of the bridges of 
$\Sigma$.  The resulting factor $2^{\bar m}$ is immaterial.

Finally, the complete ladder pairing yields \eqref{main result}. The complete antiladder is 
subleading, as its contribution vanishes unless $x = 0$.

\section{Delocalization: proofs of Theorem \ref{theorem: second main result} and Corollary \ref{cor: deloc}} 
\label{section: delocalization}

In this section we show how to derive Theorem \ref{theorem: second main result} from Theorem \ref{theorem: first main 
result}, and derive Corollary \ref{cor: deloc} as a consequence.

\begin{proof}[Proof of Theorem \ref{theorem: second main result}]
We use an argument due to Chen \cite{Ch} showing that diffusive motion implies delocalization of the vast majority of 
eigenvectors.

Recall that $P_{x, \ell}(y) \deq \ind{\abs{y - x} \geq \ell}$ is the characteristic function of 
the complement (in $\Lambda_N$) of the $\ell$-neighborhood of $x$. Also, $\fra A_{\epsilon, 
\ell}^\omega$, defined by
\begin{equation*}
\fra A^\omega_{\epsilon, \ell} \;=\; \hbb{\alpha \in \fra A \,:\, \sum_x 
\abs{\psi_\alpha^\omega(x)} \, \norm{P_{x, \ell} \, \psi^\omega_\alpha} < \epsilon}\,,
\end{equation*}
is the set of eigenvectors localized on a scale $\ell$ up to an error of $\epsilon$.

By diagonalizing $H^\omega$,
\begin{equation*}
H^\omega \;=\; \sum_{\alpha \in \fra A} \lambda^\omega_\alpha \, \ket{\psi^\omega_\alpha} \bra{\psi^\omega_\alpha}\,,
\end{equation*}
we have
\begin{align*}
\normb{P_{x,\ell} \, \ee^{-\ii t H^\omega} \delta_x}^2 &\;=\; \normbb{\sum_{\alpha \in \fra A} P_{x,\ell} \, \ee^{-\ii t 
\lambda_\alpha^\omega} \, \ol{\psi_\alpha^\omega(x)} \, \psi_\alpha^\omega}^2
\\
&\;\leq\; \pbb{1 + \frac{1}{\zeta}} \normbb{\sum_{\alpha \in \fra A_{\epsilon,\ell}^\omega} P_{x,\ell} \, \ee^{-\ii t 
\lambda_\alpha^\omega} \, \ol{\psi_\alpha^\omega(x)} \, \psi_\alpha^\omega}^2
+
(1 + \zeta) \normbb{\sum_{\alpha \in \fra A \setminus \fra A_{\epsilon,\ell}^\omega} P_{x,\ell} \, \ee^{-\ii t 
\lambda_\alpha^\omega} \, \ol{\psi_\alpha^\omega(x)} \, \psi_\alpha^\omega}^2\,,
\end{align*}
for any $\zeta > 0$. Next, we observe that the norm in the first term may be bounded by 1:
\begin{equation*}
\normbb{\sum_{\alpha \in \fra A_{\epsilon,\ell}^\omega} P_{x,\ell} \, \ee^{-\ii t \lambda_\alpha^\omega} \, 
\ol{\psi_\alpha^\omega(x)} \, \psi_\alpha^\omega}^2 \;\leq\;
\normbb{\sum_{\alpha \in \fra A_{\epsilon,\ell}^\omega} \ol{\psi_\alpha^\omega(x)} \, 
\psi_\alpha^\omega}^2
\;=\; \sum_{\alpha \in \fra A^\omega_{\epsilon,\ell}} \abs{\psi^\omega_\alpha(x)}^2
\;\leq\; \sum_{\alpha \in \fra A} \abs{\psi^\omega_\alpha(x)}^2 \;=\; 1\,.
\end{equation*}
Thus we get
\begin{align*}
\normb{P_{x,\ell} \, \ee^{-\ii t H^\omega} \delta_x}^2 &\;\leq\;
\pbb{1 + \frac{1}{\zeta}} \normbb{\sum_{\alpha \in \fra A_{\epsilon,\ell}^\omega} P_{x,\ell} \, \ee^{-\ii t 
\lambda_\alpha^\omega} \, \ol{\psi_\alpha^\omega(x)} \, \psi_\alpha^\omega}
+ (1 + \zeta) \normbb{\sum_{\alpha \in \fra A \setminus \fra A_{\epsilon,\ell}^\omega} 
\ol{\psi_\alpha^\omega(x)} \, \psi_\alpha^\omega}^2
\\
&\;\leq\;
\pbb{1 + \frac{1}{\zeta}} \sum_{\alpha \in \fra A_{\epsilon,\ell}^\omega} 
\abs{\psi_\alpha^\omega(x)} \, \norm{P_{x,\ell} \psi_\alpha^\omega}
+ (1 + \zeta) \sum_{\alpha \in \fra A \setminus \fra A_{\epsilon,\ell}^\omega} 
\abs{\psi_\alpha^\omega(x)}^2\,.
\end{align*}
Averaging over $x \in \Lambda_N$ yields
\begin{align*}
\frac{1}{\abs{\fra A}} \sum_x \normb{P_{x,\ell} \, \ee^{-\ii t H^\omega} \delta_x}^2 &\;\leq\;
\pbb{1 + \frac{1}{\zeta}} \frac{1}{\abs{\fra A}} \sum_{\alpha \in \fra A_{\epsilon,\ell}^\omega} 
\sum_x \abs{\psi_\alpha^\omega(x)} \, \norm{P_{x,\ell} \psi_\alpha^\omega}
+ (1 + \zeta) \frac{1}{\abs{\fra A}} \sum_{\alpha \in \fra A \setminus \fra A_{\epsilon,\ell}^\omega} \sum_x 
\abs{\psi_\alpha^\omega(x)}^2\,.
\\
&\;\leq\;
\pbb{1 + \frac{1}{\zeta}} \epsilon + (1 + \zeta) \frac{\abs{\fra A \setminus \fra 
A^\omega_{\epsilon, \ell}}}{\abs{\fra A}}\,,
\end{align*}
by definition of $\fra A^\omega_{\epsilon, \ell}$. Therefore
\begin{equation*}
\frac{\abs{\fra A \setminus \fra A^\omega_{\epsilon, \ell}}}{\abs{\fra A}} \;\geq\;
\frac{1}{1 + \zeta} \frac{1}{\abs{\fra A}} \sum_x \normb{P_{x,\ell} \, \ee^{-\ii t H^\omega} 
\delta_x}^2 - \frac{\epsilon}{\zeta}\,.
\end{equation*}
Taking the expectation yields
\begin{equation} \label{main estimate in proof of delocalization}
\E \frac{\abs{\fra A \setminus \fra A_{\epsilon, \ell}}}{\abs{\fra A}} \;\geq\;
\frac{1}{1 + \zeta} \frac{1}{\abs{\fra A}} \E \sum_x \normb{P_{x,\ell} \, \ee^{-\ii t H} \delta_x}^2 - 
\frac{\epsilon}{\zeta}
\;=\;
\frac{1}{1 + \zeta} \, \E \normb{P_{0,\ell} \, \ee^{-\ii t H} \delta_0}^2 - 
\frac{\epsilon}{\zeta}\,,
\end{equation}
by translation invariance. Note that this estimate holds uniformly in $t$.

Next, pick a continuous function $\varphi(X)$ that is equal to $0$ if $\abs{X} \leq 1$ and $1$ 
if $\abs{X} \geq 2$. Recalling that $\varrho(t,x) = \abs{\scalar{\delta_x}{\ee^{- \ii t H/2} 
\delta_0}}^2$, we find
\begin{equation*}
\E \normb{P_{0, W^{1 + d \kappa / 2}} \, \ee^{-\ii t H/2} \delta_0}^2 \;=\; \sum_x \ind{\abs{x} 
\geq W^{1 + d \kappa/2}} \, \varrho(t, x) \;\geq\;
\sum_x \varphi \pbb{\frac{x}{W^{1 + d \kappa / 2}}} \, \varrho(t, x)\,.
\end{equation*}
Now choose an exponent $\tilde \kappa$ satisfying $\kappa < \tilde \kappa < 1/3$ and set $t = W^{d \tilde \kappa}$.  
Thus,
\begin{equation*}
\E \normb{P_{0, W^{1 + d \kappa / 2}} \, \ee^{-\ii W^{d \tilde \kappa} H/2} \delta_0}^2 
\;\geq\; \sum_x \varphi \pbb{W^{d/2 (\tilde \kappa - \kappa)} \, \frac{x}{W^{1 + \tilde \kappa 
d / 2}}} \, \varrho(W^{d \tilde \kappa}, x)\,.
\end{equation*}
Since we have
\begin{equation*}
\lim_{W \to \infty} \varphi \pb{W^{d/2 (\tilde \kappa - \kappa)} \, X} \;=\; 1
\end{equation*}
for $X \neq 0$ and $L(1,X)$ is continuous at $X = 0$, a simple limiting argument shows that 
Theorem \ref{theorem: first main result} implies
\begin{equation*}
\lim_{W \to \infty} \sum_x \varphi \pbb{W^{d/2 (\tilde \kappa - \kappa)} \, \frac{x}{W^{1 + d 
\tilde \kappa / 2}}} \, \varrho(W^{d \tilde \kappa}, x)
\;=\; \int \dd X \; L(1, X) \;=\; 1\,.
\end{equation*}
We have hence proved that
\begin{equation*}
\lim_{W \to \infty} \E \normb{P_{0, W^{1 + d \kappa / 2}} \, \ee^{-\ii W^{d \tilde \kappa} H/2} 
\delta_0}^2 \;=\; 1\,,
\end{equation*}
Plugging this into \eqref{main estimate in proof of delocalization} yields
\begin{equation*}
\liminf_{W \to \infty} \, \E \frac{\abs{\fra A \setminus \fra A_{\epsilon, W^{1 + d \kappa / 2}}}}{\abs{\fra A}} 
\;\geq\;
\frac{1}{1 + \zeta}  - \frac{\epsilon}{\zeta}\,.
\end{equation*}
Setting $\zeta = \sqrt{\epsilon}$ completes the proof.
\end{proof}

\begin{proof}[Proof of Corollary \ref{cor: deloc}]
Pick an intermediate exponent $\wt \kappa$ satisfying $\kappa < \wt \kappa < 1/3$ and abbreviate
\begin{equation*}
\ell \;\deq\; W^{1 + d \kappa /2}\,, \qquad \wt \ell \;\deq\; W^{1 + d \wt \kappa /2}\,.
\end{equation*}
Let $\alpha \in \fra B^\omega_\ell$ and let $u \in \Lambda_N$ be as in \eqref{definition of B_l}. Then we find by 
Cauchy-Schwarz
\begin{align*}
\pbb{\sum_x \abs{\psi^\omega_\alpha(x)} \, \normb{P_{x, \wt \ell} \, \psi^\omega_\alpha}}^2 &\;\leq\; \pbb{\sum_x 
\abs{\psi_\alpha^\omega(x)}^2 \exp \qbb{\frac{\abs{x - u}}{\ell}}^\gamma} \pbb{\sum_x \exp\hbb{- \qbb{\frac{\abs{x - 
u}}{\ell}}^\gamma} \normb{P_{x, \wt \ell} \, \psi_\alpha^\omega}^2}
\\
&\;\leq\; K \sum_{\abs{x - y} \geq \wt \ell} \exp\hbb{- \qbb{\frac{\abs{x - u}}{\ell}}^\gamma} \,
\abs{\psi_\alpha^\omega(y)}^2
\\
&\;\leq\; K \ee^{- \delta (\wt \ell / \ell)^\gamma}
\sum_{\abs{x - y} \geq \wt \ell} \exp\hbb{- \qbb{\frac{\abs{x - u}}{\ell}}^\gamma + \delta \qbb{\frac{\abs{x - 
y}}{\ell}}^\gamma} \,
\abs{\psi_\alpha^\omega(y)}^2\,,
\end{align*}
where $\delta > 0$ is some small constant to be chosen later. Using $(a + b)^\gamma \leq (2 a)^\gamma + (2 b)^\gamma$ we 
find
\begin{equation*}
\pbb{\sum_x \abs{\psi^\omega_\alpha(x)} \, \normb{P_{x, \wt \ell} \, \psi^\omega_\alpha}}^2 \;\leq\;
K \ee^{- \delta (\wt \ell / \ell)^\gamma}
\sum_{x,y} \exp\hbb{- \qbb{\frac{\abs{x - u}}{\ell}}^\gamma + \delta \qbb{2\frac{\abs{x - u}}{\ell}}^\gamma
+ \delta \qbb{2\frac{\abs{y - u}}{\ell}}^\gamma
} \,
\abs{\psi_\alpha^\omega(y)}^2\,.
\end{equation*}
Choosing $\delta < 2^{- \gamma}$ therefore yields
\begin{equation*}
\pbb{\sum_x \abs{\psi^\omega_\alpha(x)} \, \normb{P_{x, \wt \ell} \, \psi^\omega_\alpha}}^2 \;\leq\;
C \ell^d K^2 \ee^{- \delta (\wt \ell / \ell)^\gamma} \;\eqd\; \epsilon_W^2\,.
\end{equation*}
We have thus proved that
$
\fra B^\omega_\ell \subset \fra A^\omega_{\epsilon_W, \wt \ell}
$\,.
Then Corollary \ref{cor: deloc} follows from $\lim_{W \to \infty} \epsilon_W = 0$ and Theorem \ref{theorem: second main 
result}.
\end{proof}

\section{Critical pairings} \label{sec:further}
In this section we give an example family of pairings which are critical in the sense that they saturate the 2/3 rule 
(Lemma \ref{lemma: bound on number of orbits}). This implies that extending our results beyond time scales of order 
$W^{d/3}$ requires either a further resummation of pairings or a more refined classification of graphs in terms of their 
deviation from the 2/3 rule.

Let $k \geq 1$ and consider the skeleton pairing $\Sigma_k$ defined in Figure \ref{figure: critical skeleton graph}. It 
is a critical pairing in the sense that all orbits not containing the vertices $0,m$ consist of $3$ vertices.
\begin{figure}[ht!]
\begin{center}
\includegraphics{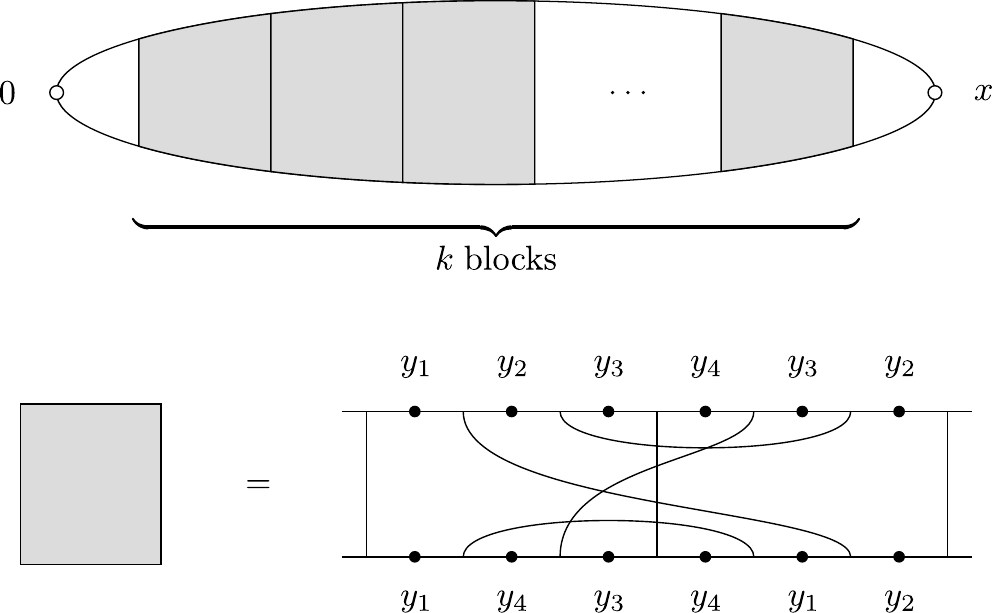}
\end{center}
\caption{A critical skeleton pairing $\Sigma_k$. The label of each vertex is indicated next to 
its vertex.  \label{figure: critical skeleton graph}}
\end{figure}

It is easy to see that for $\Sigma_k$ we have
\begin{equation*}
\bar m \;=\; 6k + 1 \,, \qquad L(\Sigma_k) \;=\; 4k + 1\,.
\end{equation*}
In particular, the 2/3 rule of Lemma \ref{lemma: bound on number of orbits} is saturated. Moreover, if $\b 
\ell_{\Sigma_k}$ satisfies $\ell_\sigma \geq 2$ for all $\sigma \in \Sigma_k$ then the associated pairing $\Gamma \deq 
G_{\b \ell_{\Sigma_k}}(\Sigma_k)$ has a nonzero contribution $V_x(\Gamma) \sim R_x(\Gamma) \approx M^{-2k}$ (here, and 
in the following, we ignore any powers of $W$ with exponent of order one). Indeed, it is easy to check that under the 
condition $\ell_\sigma \geq 2$ for all $\sigma$ the above $\Gamma$ satisfies all nonbacktracking conditions. (In fact, 
it suffices to require that $\ell_{\bar \sigma} \geq 2$, where $\bar \sigma$ is the bridge drawn as a vertical line in 
Figure \ref{figure: critical skeleton graph}.)

As shown in the Section \ref{section: error estimates} (see \eqref{bound on coefficients a_n}), the coefficients 
$a_n(t)$ essentially vanish if $n > (1 + o(1)) t$.  Setting $t = M^\kappa$ thus means restricting the summation to $n,n' 
\leq M^\kappa$.

Assume, to begin with, that we adopt the strategy of Section \ref{section: error estimates} in estimating the 
contribution of each graph, i.e.\ we use the 2/3 rule for each skeleton pairing and the $\ell^1$-$\ell^\infty$-type 
estimates from Lemma \ref{lemma: heat kernel bounds} on the edges of the associated multigraph. We show that the sum of 
the contributions of the skeleton pairings $\Sigma_k$ diverges if $\kappa > 1/3$. Indeed, noting that $n,n' \leq 
M^\kappa$ implies $\bar n \leq M^\kappa$, we find that the contribution of all $\Sigma_k$'s is
\begin{equation} \label{lower bound without heat kernel decay}
\sum_{k = 1}^{p/6} \frac{1}{M^{2k}} \sum_{\ell_1 + \dots + \ell_{6k} = p} 1\,,
\end{equation}
where $p = M^\kappa$ and the sum over $\ell_i$ is restricted to $\ell_i \geq 2$ for all $i$. Here we only sum over the 
pairing of maximal $\bar n = p$ (so as to obtain a lower bound), and set $6k+1 \approx 6k$.  Now \eqref{lower bound 
without heat kernel decay} is equal to
\begin{equation*}
\sum_{k = 1}^{p/6} \frac{1}{M^{2k}} \binom{p - 6k}{6k} \;\sim\; \sum_{k = 1}^{p/6} \pbb{\frac{C (M^\kappa - 6k)}{k 
M^{1/3}}}^{6k}\,,
%\;=\; \sum_{k = 1}^{M^\kappa/6} \pbb{\frac{C M^{\kappa - 1/3}}{k}}^{6k}\,,
\end{equation*}
which diverges as $W \to \infty$ if $\kappa > 1/3$. Hence a control of the error term at time scales $\kappa$ larger 
than $1/3$ would require further resummation of such critical pairings.

In the estimates of the preceding paragraph we did not make full use of the heat kernel decay associated with each 
skeleton bridge. For simplicity, the following discussion is restricted to $d = 1$ (it may be easily extended to higher 
dimensions; in fact some estimates are better in higher dimensions).
 Using Lemma \ref{lemma: heat kernel bounds}, we may improve \eqref{lower bound without heat kernel decay} to
\begin{equation} \label{lower bound with heat kernel decay}
\sum_{k = 1}^{p/6} \frac{1}{M^{2k}} \sum_{\ell_1 + \dots + \ell_{6k} = p} \frac{1}{\sqrt{\ell_1 \cdots \ell_{2k}}}\,;
\end{equation}
this is a simple consequence of the heat kernel bound of Lemma \ref{lemma: heat kernel bounds} and the fact that each 
six-block of $\Sigma_k$ contains two bridges in $\Sigma_k \setminus (\Sigma_k)_T$ for which we may apply the bound 
\eqref{linfty-bound2} (in which we drop the unimportant second term for simplicity). Now \eqref{lower bound with heat 
kernel decay} is bounded by
\begin{equation} \label{sum with heat kernel bounds}
\sum_{k = 1}^{p/6} \frac{1}{M^{2k}} \, p^k \binom{p}{4k} \;\sim\;
\sum_{k = 1}^{p/6} \pbb{\frac{C M^{5\kappa / 6}}{k^{2/3} M^{1/3}}}^{6k}\,,
\end{equation}
which is summable for $\kappa < 2/5$. Note, however, that the factor $k^{-6k}$ from \eqref{lower bound without heat 
kernel decay} has been replaced with the larger factor $k^{-4k}$. Recall that the factor $k^{-6k}$ is used to cancel the 
combinatorics $\bar m! \sim k^{6k}$ arising from the summation over all skeletons. In the present example this small 
factor is not needed, as the family $\{\Sigma_k\}$ is small. It is clear, however, that a systematic application of this 
approach requires a more refined classification of skeletons in terms of how much they deviate from the 2/3 rule. One 
expects that the number of skeletons saturating the 2/3 rule is small, and that they are therefore amenable to estimates 
of type \eqref{sum with heat kernel bounds}. Conversely, most of the $\bar m !$ skeletons are expected to deviate 
strongly from the 2/3 rule, so that their greater number is compensated by their small individual contributions.

Finally, we mention that the upper bound \eqref{linfty-bound}, used
in the $\ell^1$-$\ell^\infty$-type estimates above, neglects
the spatial decay of the heat kernel, i.e.\ that
\begin{equation} \label{precise form of heat kernel}
D_\ell(x,y)\sim \ell^{-1/2} e^{-(x-y)^2/\ell}
\end{equation}
for $\abs{x-y} \ll N$.
Thus a correct \emph{lower}
bound on the contribution of each skeleton graph
should have taken into account this additional decay as well.
A somewhat lenghtier calculation shows that
with the asymptotics \eqref{precise form of heat kernel} the estimate \eqref{lower bound with heat kernel decay} may
be improved to
\begin{equation*}
\sum_{k = 1}^{p/6} \frac{1}{M^{2k}} \sum_{\ell_1 + \cdots + \ell_{6k} = p} \; \prod_{j = 1}^k 
\frac{1}{\sqrt{\ell^{(j)}_2 \ell^{(j)}_4 + \ell^{(j)}_2 \ell^{(j)}_6 + \ell^{(j)}_3 \ell^{(j)}_4 + \ell^{(j)}_3 
\ell^{(j)}_6 + \ell^{(j)}_4 \ell^{(j)}_5 + \ell^{(j)}_4 \ell^{(j)}_6 + \ell^{(j)}_5 \ell^{(j)}_6}}\,,
\end{equation*}
where we abbreviated $\ell^{(j)}_i \deq \ell_{6(j - 1) + i}$. It is not hard to see that the resulting bound is the same 
as \eqref{sum with heat kernel bounds}, with a smaller constant $C$. In other words, the gain obtained from the spatial 
decay of the heat kernel is immaterial, and the $\ell^1$-$\ell^\infty$-estimates cannot be improved.

In conclusion: Our estimates rely on an indiscriminate application of the 2/3 rule to all skeleton pairings; going 
beyond time scales of order $W^{d/3}$ would require either (i) a refined classification of the skeleton pairings in 
terms of how much they deviate from the 2/3 rule, combined with a systematic use of the bound \eqref{linfty-bound2} on 
all bridges in $\Sigma \setminus \Sigma_T$; or (ii) a further resummation of graphs in order to exploit cancellations.  
The approach (i) can be expected to reach at most times of order $W^{2/5}$ for $d = 1$.

\appendix

\section{Proof of Proposition \ref{proposition: asymptotics of alpha_n}} \label{appendix: proof 
of asymptotics}
Note first that $F$ is monotone nondecreasing and satisfies $0 \leq F(\lambda) \leq 1$, as 
follows from \eqref{orthonormality of Chebyshev coefficients}. Hence it is enough to prove 
\eqref{asymptotic form of alpha 1} for $\lambda \in (0,1)$.

For the following it is convenient to replace $F_t$ with $\widetilde{F}_t$, defined by
\begin{equation*}
\widetilde F_t(\lambda) \;\deq\; \sum_{n = 0}^{[\lambda t]} \abs{\alpha_n(t)}^2 \;=\; 
F_t\pbb{\frac{[t \lambda + 1]}{t}}\,.
\end{equation*}
By Lemma \ref{bound for dominated convergence} we have $F_t(\lambda) - \widetilde F_t(\lambda) = o(1)$ as $t \to 
\infty$.

Thus let $\lambda \in (0,1)$ be fixed. From \eqref{definition of the coefficients alpha} we 
find
\begin{equation*}
\widetilde F_t(\lambda) \;=\; \sum_{n = 0}^{[\lambda t]} \absbb{\frac{2}{\pi} \int_{-1}^1 \dd 
\xi \, \sqrt{1 - \xi^2} \, U_n(\xi) \, \ee^{-\ii t \xi}}^2
\;=\; \sum_{n = 0}^{[\lambda t]} \absbb{\frac{2}{\pi} \int_0^\pi \dd \theta \; \sin \theta \sin 
[(n + 1) \theta] \, \ee^{-\ii t \cos \theta}}^2\,,
\end{equation*}
where we used \eqref{definition of Chebyshev polynomial}.  Thus,
%\begin{multline*}
%\widetilde F_t(\lambda)
%%&\;=\; \frac{4}{\pi^2} \sum_{n = 0}^{[\lambda t]} \int_0^\pi \dd \theta \int_0^\pi \dd \theta' 
%%\; \sin \theta \, \sin \theta' \, \sin [(n + 1) \theta] \, \sin[(n + 1)\theta'] \, \ee^{\ii t 
%%(\cos \theta - \cos \theta')}
%%\\
%\;=\; - \frac{1}{\pi^2} \int_0^\pi \dd \theta \int_0^\pi \dd \theta' \; \sin \theta \, \sin 
%\theta' \, \ee^{\ii t (\cos \theta - \cos \theta')}
%\\
%\times
%\sum_{n = 0}^{[\lambda t]} \qB{\ee^{\ii (n+1)(\theta + \theta')} + \ee^{-\ii (n+1) (\theta + 
%\theta')} - \ee^{\ii (n+1) (\theta - \theta')} - \ee^{-\ii (n+1) (\theta - \theta')}}\,.
%\end{multline*}
%Summing up the series yields
\begin{multline} \label{expression for cumulative distribution}
\widetilde F_t(\lambda) \;=\; \frac{1}{\pi^2} \int_0^\pi \dd \theta \int_0^\pi \dd \theta' \; 
\sin \theta \, \sin \theta' \, \ee^{\ii t (\cos \theta - \cos \theta')}
\\
\times \qBB{
\frac{\ee^{\ii ([\lambda t] + 1)(\theta + \theta')} - 1}{\ee^{-\ii (\theta + \theta')} - 1}
+ \frac{\ee^{-\ii ([\lambda t] + 1)(\theta + \theta')} - 1}{\ee^{\ii (\theta + \theta')} - 1}
- \frac{\ee^{\ii ([\lambda t] + 1)(\theta - \theta')} - 1}{\ee^{-\ii (\theta - \theta')} - 1}
- \frac{\ee^{-\ii ([\lambda t] + 1)(\theta - \theta')} - 1}{\ee^{\ii (\theta - \theta')} - 1}
}\,,
\end{multline}
We now claim that the limit $t \to \infty$ of the first two terms of \eqref{expression for cumulative distribution} 
vanish by a stationary phase argument. Let us write the first term of \eqref{expression for cumulative distribution} as 
$R_t^1 + R_t^2$, where
\begin{align*}
R_t^1 &\;\deq\; \frac{1}{\pi^2} \int_0^\pi \dd \theta \int_0^\pi \dd \theta' \; \sin \theta \, 
\sin \theta' \, \ee^{\ii t (\cos \theta - \cos \theta')}
\frac{\ee^{\ii ([\lambda t] + 1)(\theta + \theta')}}{\ee^{-\ii (\theta + \theta')} - 1}
\\
&\;=\; \frac{1}{\pi^2} \int_0^\pi \dd \theta \int_0^\pi \dd \theta' \; \frac{\sin \theta \, 
\sin \theta' \, \ee^{\ii (1 - \{\lambda t\})(\theta + \theta')}}{\ee^{-\ii (\theta + \theta')} 
- 1} \, \ee^{\ii t (\cos \theta - \cos \theta' + \lambda \theta + \lambda \theta')}
\\
&\;\eqd\; \int_0^\pi \dd \theta \int_0^\pi \dd \theta' \, a_t(\theta, \theta') \, \ee^{\ii t 
\phi(\theta, \theta')}\,,
\end{align*}
where $\{\xi\} \deq \xi - [\xi] \in [0,1)$. One readily finds the bounds
\begin{equation*}
\inf_{\theta, \theta' \in [0,\pi]}\abs{\nabla \phi(\theta, \theta')} \geq \lambda\,, \qquad 
\sup_{\theta, \theta' \in [0,\pi]}\abs{\nabla^2 \phi(\theta, \theta')} < \infty\,,\qquad
\sup_t \sup_{\theta, \theta' \in [0,\pi]} \, \abs{\nabla a_t(\theta, \theta')} \;<\; \infty\,.
\end{equation*}
A standard stationary phase argument therefore yields $\lim_{t \to \infty} R_t^1 = 0$.

%Writing
%\begin{equation*}
%R_t^1 \;=\; \int_0^\pi \dd \theta \int_0^\pi \dd \theta' \; a_t \, \frac{\nabla \phi}{\ii t 
%\abs{\nabla \phi}^2} \cdot \nabla \ee^{\ii t \phi}
%\end{equation*}
%and integrating by parts therefore yields $\lim_{t \to \infty} R_t^1 = 0$.

Similarly, we find
\begin{equation*}
R_t^2 \;=\; -\frac{1}{\pi^2} \int_0^\pi \dd \theta \int_0^\pi \dd \theta' \;  \ee^{\ii t (\cos 
\theta - \cos \theta')}
\underbrace{\frac{\sin \theta \, \sin \theta'}{\ee^{-\ii (\theta + \theta')} - 1}}_{\eqd 
b(\theta, \theta')}
\end{equation*}
As above, the functions $b$ and $\nabla b$ are bounded on $[0,\pi]^2$.
The phase $\cos \theta - \cos \theta'$ has four stationary points, $(0,0), (0,\pi), (\pi, 0), 
(\pi, \pi)$, all of them nondegenerate. Therefore a standard stationary phase argument implies 
that $R^2_t = O(t^{-1/2})$. (Note that the stationary points lie on the boundary of the 
integration domain. This is not a problem, however, as the usual stationary phase argument may 
be applied in combination with the identity $\int_0^\infty \dd x \, \ee^{\ii t x^2} = 
O(t^{-1/2})$.) Similarly, one shows that the second term of \eqref{expression for cumulative 
distribution} vanishes as $t \to \infty$.

Next, as we have just shown, we have
\begin{equation*}
\widetilde F_t(\lambda) \;=\; \widetilde F^0_t(\lambda) + \widetilde F^+_t(\lambda) + \widetilde F^-_t(\lambda) + o(1)
\end{equation*}
for $t \to \infty$, where
\begin{align*}
\widetilde F^0_t(\lambda) &\;\deq\; \frac{1}{\pi^2} \int_0^\pi \dd \theta \int_0^\pi \dd \theta' \; \sin \theta \, \sin 
\theta' \, \ee^{\ii t (\cos \theta - \cos \theta')}
\qBB{\frac{1}{\ee^{-\ii (\theta - \theta')} - 1} + \frac{1}{\ee^{\ii (\theta - \theta')} - 1}}\,,
\\
\widetilde F^\pm_t(\lambda) &\;\deq\; - \frac{1}{\pi^2} \int_0^\pi \dd \theta \; \cal P \int_0^\pi \dd \theta' \; \sin 
\theta \, \sin \theta' \, \ee^{\ii t (\cos \theta - \cos \theta')}
\frac{\ee^{\pm \ii ([\lambda t] + 1)(\theta - \theta')}}{\ee^{\mp \ii (\theta - \theta')} - 1}\,,
%\\
%\widetilde F^{3}_t(\lambda) &\;\deq\; - \frac{1}{\pi^2} \int_0^\pi \dd \theta \; \cal P 
%\int_0^\pi \dd \theta' \; \sin \theta \, \sin \theta' \, \ee^{\ii t (\cos \theta - \cos 
%\theta')}
%\frac{\ee^{-\ii ([\lambda t] + 1)(\theta - \theta')}}{\ee^{\ii (\theta - \theta')} - 1}\,,
\end{align*}
where $\cal P$ denotes principal value.
We now show that $\widetilde F^0_t(\lambda) = o(1)$. Indeed, the expression in square brackets in the definition of 
$\widetilde F_t^0(\lambda)$ is equal to $-1$.
% equal to
%\begin{equation*}
%\frac{\ee^{\ii (\theta - \theta')} + \ee^{-\ii (\theta - \theta')} - 2}{(\ee^{\ii (\theta - 
%\theta')} - 1) (\ee^{-\ii (\theta - \theta')} - 1)}\,,
%\end{equation*}
%which is an entire analytic function in the variables $\theta, \theta'$.
Exactly as above we therefore conclude that $\widetilde F_t^0(\lambda) = O(t^{-1/2})$.

Next, let us consider $\widetilde F_t^+(\lambda)$. In a first step, we replace the factor $\frac{1}{\ee^{-\ii (\theta - 
\theta')} - 1}$ with $\frac{1}{-\ii (\theta - \theta')}$. The error is
\begin{equation*}
- \frac{1}{\pi^2} \int_0^\pi \dd \theta \; \cal P \int_0^\pi \dd \theta' \; \sin \theta \, \sin 
\theta' \, \ee^{\ii t (\cos \theta - \cos \theta')}
\ee^{\ii ([\lambda t] + 1)(\theta - \theta')} \qBB{\frac{1}{\ee^{-\ii (\theta - \theta')} - 1} 
- \frac{1}{-\ii (\theta - \theta')}}\,,
\end{equation*}
which vanishes in the limit $t \to \infty$ by the above saddle point argument (the expression 
in the square brackets is an entire analytic function, and the phase $\cos \theta - \cos 
\theta' + \lambda \theta - \lambda \theta'$ has the four nondegenerate saddle points defined by 
$\sin \theta = \sin \theta' = \lambda$).

In a second step, we choose a scale $2/5 < \epsilon < 1/2$ and introduce a cutoff in 
$\abs{\theta - \theta'}$ at $t^{-\epsilon}$.
Thus we have
\begin{multline} \label{splitting of main term in asymptotic analysis}
\widetilde F^+_t(\lambda) + o(1) \;=\; \frac{1}{\pi^2} \int_0^\pi \dd \theta \; \cal P \int_0^\pi \dd \theta' \; 
\ind{\abs{\theta - \theta'} \leq t^{-\epsilon}} \sin \theta \, \sin \theta' \, \ee^{\ii t (\cos \theta - \cos \theta')}
\frac{\ee^{\ii ([\lambda t] + 1)(\theta - \theta')}}{\ii (\theta - \theta')} \,,
\\
+ \frac{1}{\pi^2} \int_0^\pi \dd \theta \int_0^\pi \dd \theta' \; \ind{\abs{\theta - \theta'} > 
t^{-\epsilon}} \sin \theta \, \sin \theta' \, \ee^{\ii t (\cos \theta - \cos \theta')}
\frac{\ee^{\ii ([\lambda t] + 1)(\theta - \theta')}}{\ii (\theta - \theta')} \,,
\end{multline}
Let us abbreviate $D_t \;\deq\; \{(\theta, \theta') \in [0, \pi]^2 \,:\, \abs{\theta - \theta' 
} > t^{-\epsilon}\}$.
The second term on the right-hand side of \eqref{splitting of main term in asymptotic analysis} 
is equal to
\begin{equation} \label{splitting of main term in asymptotic analysis, second term}
\frac{1}{\pi^2} \int_{D_t} \dd \theta \, \dd \theta' \; \ee^{\ii t (\cos \theta - \cos \theta' 
+ \lambda \theta - \lambda \theta')} \frac{\sin \theta \, \sin \theta' \, \ee^{\ii (1 - 
\{\lambda t\})(\theta - \theta')}}{\ii (\theta - \theta')}
\;\eqd\; \int_{D_t} \dd \theta \, \dd \theta' \; \ee^{\ii t \phi(\theta, \theta')} 
\frac{a_t(\theta, \theta')}{\theta - \theta'}\,.
\end{equation}
In the domain $D_t$ the phase $\phi$ has two stationary points defined by $\sin \theta = \sin 
\theta' = \lambda$ and $\theta \neq \theta'$. For all $(\theta, \theta')$ not in some fixed 
neighbourhood of these stationary points and satisfying $\abs{\theta - \theta'} > 
t^{-\epsilon}$, we have the bound
\begin{equation*}
\abs{\nabla \phi(\theta, \theta)} \;\geq\; C t^{-\epsilon}\,,
\end{equation*}
for some constant $C > 0$ depending on $\lambda$, and large enough $t$.
%Indeed, for $\theta$ and $\theta'$ close enough, we have
%\begin{equation*}
%\abs{\nabla \phi(\theta, \theta)} \;=\; \abs{\sin \theta - \sin \theta_0} + \abs{\sin \theta' - 
%\sin \theta_0} \;\geq\; C \abs{\cos \theta_0} \pb{\abs{\theta - \theta_0} + \abs{\theta' - 
%\theta_0}} \;\geq\; C \abs{\cos \theta_0} t^{-\epsilon}\,,
%\end{equation*}
%where $\theta_0$ satisfies $\cos \theta_0 = \lambda$. Note that $\cos \theta_0 \neq 0$, as 
%$\lambda \neq 1$.
Thus a standard saddle point analysis shows that \eqref{splitting of main term in asymptotic 
analysis, second term} is of the order
$t^{-1/2} + t^{2 \epsilon - 1} = o(1)$.

In a third step, we analyse the first term on the right-hand side of \eqref{splitting of main 
term in asymptotic analysis}. We introduce the new coordinates
\begin{equation*}
u \;=\; \frac{\theta + \theta'}{2} \,, \qquad v \;=\; \theta - \theta'\,,
\end{equation*}
and write
\begin{align*}
\widetilde F_t^+(\lambda) + o(1) &\;=\; \frac{1}{\pi^2} \int_0^\pi \dd \theta \; \cal P \int_0^\pi \dd \theta' \; 
\ind{\abs{\theta - \theta'} \leq t^{-\epsilon}} \sin \theta \, \sin \theta' \, \ee^{\ii t (\cos \theta - \cos \theta')}
\frac{\ee^{\ii ([\lambda t] + 1)(\theta - \theta')}}{\ii (\theta - \theta')}
\\
&\;=\; \frac{1}{\pi^2} \int_0^\pi \dd u \; \cal P \int_{-a_{t,u}}^{a_{t,u}} \dd v \; \sin 
\pbb{u + \frac{v}{2}} \sin \pbb{u - \frac{v}{2}} \, \ee^{\ii (1 - \{\lambda t\}) v} 
\frac{\ee^{\ii t (\lambda v - 2 \sin u \sin \frac{v}{2})}}{\ii v}\,,
\end{align*}
where
\begin{equation*}
a_{t,u} \;\deq\; \min\{t^{-\epsilon}, 2u, 2(\pi - u)\}\,.
\end{equation*}
Now we replace the factor $\ee^{\ii t (\lambda v - 2 \sin u \sin \frac{v}{2})}$ with $\ee^{\ii 
t v (\lambda - \sin u)}$. The resulting error is
\begin{equation} \label{error controlled with power series}
R_t \;\deq\; \frac{1}{\pi^2} \int_0^\pi \dd u \; \cal P \int_{-a_{t,u}}^{a_{t,u}} \dd v \; \sin 
\pbb{u + \frac{v}{2}} \sin \pbb{u - \frac{v}{2}} \, \ee^{\ii (1 - \{\lambda t\}) v} \ee^{\ii t 
v (\lambda - \sin u)} \frac{\ee^{\ii t \sin u \, (v - 2 \sin \frac{v}{2})} - 1}{\ii v}\,.
\end{equation}
It is easy to check that, for $v \in [-a_{t,u}, a_{t,u}]$, we have
\begin{equation*}
\absbb{\frac{\ee^{\ii t \sin u \, (v - 2 \sin \frac{v}{2})} - 1}{\ii v}} \;\leq\; C t^{1 - 
\frac{5}{2} \epsilon} \frac{1}{\sqrt{\abs{v}}}\,.
\end{equation*}
%Using the inequality
%\begin{equation*}
%\absbb{\sin u \, \pbb{v - 2 \sin \frac{v}{2}}} \;\leq\; \abs{v}^3
%\end{equation*}
%we find that the last factor of \eqref{error controlled with power series} is bounded in 
%absolute value by
%\begin{equation*}
%\sum_{k \geq 1} \frac{\pb{t \abs{v}^3}^k}{\abs{v} k!} \;\leq\; \sum_{k \geq 1} \frac{t^k 
%\abs{v}^{3k - 1/2}}{\sqrt{\abs{v}} k!} \;\leq\;  \sum_{k \geq 1} \frac{t^{k(1 - 3 \epsilon) + 
%\frac{\epsilon}{2}}}{\sqrt{\abs{v}} k!} \;\leq\; C t^{1 - \frac{5}{2} \epsilon} 
%\frac{1}{\sqrt{\abs{v}}}\,,
%\end{equation*}
%where we used that $\abs{v} \leq t^{-\epsilon}$.
Therefore
\begin{equation*}
\abs{R_t} \;\leq\; C \int_0^\pi \dd u \int_0^{2 \pi} \dd v \; t^{1 - \frac{5}{2} \epsilon} 
\frac{1}{\sqrt{\abs{v}}} \;=\; o(1)\,.
\end{equation*}
Thus we may write
\begin{equation*}
\widetilde F_t^+(\lambda) + o(1) \;=\;
\frac{1}{\pi^2} \int_0^\pi \dd u \; \underbrace{\cal P \int_{-a_{t,u}}^{a_{t,u}} \dd v \; \sin 
\pbb{u + \frac{v}{2}} \sin \pbb{u - \frac{v}{2}} \, \ee^{\ii (1 - \{\lambda t\}) v} 
\frac{\ee^{\ii t v (\lambda - \sin u)}}{\ii v}}_{\eqd I_t(u)}\,.
\end{equation*}

In a fourth step, we analyse $I_t(u)$ using contour integration.  Abbreviate $b \deq \lambda - 
\sin u$. Let us assume that $u$ satisfies $b \neq 0$.  Then, setting $z = \abs{b} t v$, we find
\begin{equation*}
I_t(u) \;=\; \cal P \int_{-\abs{b} t a_{t,u}}^{\abs{b} t a_{t,u}} \dd z \; \sin \pbb{u + 
\frac{z}{2\abs{b}t}} \sin \pbb{u - \frac{z}{2\abs{b}t}} \, \ee^{\ii (1 - \{\lambda t\}) 
\frac{z}{\abs{b}t}} \frac{\ee^{\ii z \sgn b}}{\ii z}\,.
\end{equation*}
Let us consider the case $b > 0$. Using the identity
\begin{equation*}
\cal P \frac{1}{v} \;=\; \ii \pi \delta(v) + \frac{1}{v - \ii 0}
\end{equation*}
and Cauchy's theorem, we find
\begin{equation*}
I_t(u) \;=\; \pi \sin^2(u) + \int_\gamma \dd z \; \sin \pbb{u + \frac{z}{2bt}} \sin \pbb{u - 
\frac{z}{2bt}} \, \ee^{\ii (1 - \{\lambda t\}) \frac{z}{bt}} \frac{\ee^{\ii z}}{\ii z}\,,
\end{equation*}
where $\gamma$ is the arc $\{bta_{t,u}(\cos \varphi, \sin \varphi)\,:\, \varphi \in [0, 
\pi]\}$. The absolute value of the integral is bounded by
\begin{equation*}
\int_0^\pi \dd \varphi \; \ee^{a_{t,u} \sin \varphi} \, \ee^{- bta_{t,u} \sin \varphi}\,,
\end{equation*}
which is bounded uniformly in $t$ and $b \neq 0$, and vanishes in the limit $t \to \infty$ for 
all $b \neq 0$. The case $b < 0$ is treated in the same way. In summary, we have, for each $u$ 
satisfying $\sin u \neq \lambda$, that
\begin{equation*}
\abs{I_t(u)} \;\leq\; C\,, \qquad \lim_{t \to \infty} I_t(u) \;=\; \pi \sin^2(u) \, 
\sgn(\lambda - \sin u)\,.
\end{equation*}
Hence by dominated convergence we get
\begin{equation*}
\lim_{t \to \infty} \widetilde F_t^+(\lambda) \;=\; \frac{1}{\pi} \int_0^\pi \dd u \; \sin^2(u) \, \sgn (\lambda - \sin 
u)\,.
\end{equation*}

A similar (in fact easier) analysis yields
\begin{equation*}
\lim_{t \to \infty} \widetilde F_t^-(\lambda) \;=\; \frac{1}{\pi} \int_0^\pi \dd u \; \sin^2(u) \, \sgn(\lambda + \sin 
u)\,.
\end{equation*}
Therefore we get
\begin{equation*}
\lim_{t \to \infty} \widetilde F_t(\lambda) \;=\; \frac{2}{\pi} \int_0^\pi \dd u \; \sin^2 (u) 
\,  \ind{\sin u < \lambda}
\;=\; \frac{4}{\pi} \int_0^\lambda \dd \xi \frac{\xi^2}{\sqrt{1 - \xi^2}}\,. \qedhere
\end{equation*}
This completes the proof of Proposition \ref{proposition: asymptotics of alpha_n}.

\end{document}